\documentclass[11pt,a4paper]{scrartcl}

\usepackage[ansinew]{inputenc}
\usepackage[T1]{fontenc}

\usepackage[pdftex]{graphicx}
\usepackage{epstopdf}

\usepackage[format=plain,justification=centering,singlelinecheck=true]{caption}
\usepackage{float}
\usepackage{latexsym}
\usepackage{amsmath,amssymb,amsthm,tabu,amsfonts}
\usepackage{makeidx} 
\usepackage[numbers]{natbib}
\usepackage{enumitem}
\bibliographystyle{abbrv}
\usepackage{pdfpages}
\usepackage{color}
\usepackage{todonotes}
\usepackage{colortbl}
\usepackage{xcolor}
\usepackage{pifont}
\usepackage{booktabs}
\usepackage{tocloft} 
\usepackage{hyperref}
\usepackage[hang,flushmargin,multiple,ragged]{footmisc}

\usepackage{caption}
\usepackage[font=normalsize]{subcaption}

\allowdisplaybreaks
\raggedbottom

\setcounter{secnumdepth}{4}
\setlength{\topmargin}{-15mm}

\newtheorem{theorem}{Theorem}
\newtheorem{definition}{Definition} 
\newtheorem{remark}{Remark}	   
\newtheorem{assumption}{Assumption}
\newtheorem{example}{Example}
\newtheorem{Lemma}{Lemma}
                  
\numberwithin{equation}{section}

\newcommand{\vech}{\operatorname{vech}}
\newcommand{\Cov}{\operatorname{Cov}}

\newcommand{\E}{\operatorname{E}}
\newcommand{\diag}{\operatorname{diag}}

\setlength\parindent{0pt}

\begin{document}

\title{Detecting changes in the covariance structure of functional time series with application to fMRI data}

\author{Christina Stoehr\footnote{\, Ruhr-Universit\"at Bochum, Department of Mathematics,  Bochum, Germany; christina.stoehr@ruhr-uni-bochum.de}\and John A D Aston\footnote{\, University of Cambridge,  Statistical Laboratory, Cambridge, UK} \and Claudia Kirch\footnote{\, Otto-von-Guericke University Magdeburg, Institute for Mathematical Stochastics,  Magdeburg, Germany}\textsuperscript{, }\footnote{\, Center for Behavioral Brain Sciences (CBBS), Magdeburg, Germany}}

\maketitle
\begin{center}
\begin{minipage}{0.8\textwidth}
\begin{center}\textbf{Abstract}\end{center}
Functional magnetic resonance imaging (fMRI) data provides information concerning activity in the brain and in particular the interactions between brain regions. Resting state fMRI data is widely used for inferring connectivities in the brain which are not due to external factors. As such analyzes strongly rely on stationarity, change point procedures can be applied in order to detect possible deviations from this crucial assumption. In this paper, we model fMRI data as functional time series and develop tools for the detection of deviations from covariance stationarity via change point alternatives. We propose a nonparametric procedure which is based on dimension reduction techniques. However, as the projection of the functional time series on a finite and rather low-dimensional subspace involves the risk of missing changes which are orthogonal to the projection space, we also consider two test statistics which take the full functional structure into account. The proposed methods are compared in a simulation study and applied to more than 100 resting state fMRI data sets.
\end{minipage}\\
\end{center}

\textbf{Keywords:} Change point analysis, covariance change, functional data, resting state fMRI, functional time series, dimension reduction

\section{Introduction}
Functional Magnetic Resonance Imaging (fMRI) is a widely used technique to capture brain activity. An fMRI dataset consists of a sequence of three-dimensional images related to the contrast of oxygenated and deoxygenated hemoglobin, the so called BOLD signal, that are recorded every few seconds. fMRI facilitates a noninvasive real time functional brain mapping with a high spatial resolution and thus yields large amounts of data requiring the development of appropriate statistical methodologies. fMRI scans can be obtained related to a task or in a resting state where the person is told to go through the scanning procedure without thinking of anything while not falling asleep. Resting state data is used to analyze brain activities excluding external factors where the examination of the covariance structure between brain regions is of particular interest as it is associated with neural connectivity. Such analyses strongly rely on the assumption that resting state data is first and second order stationary. This assumption is by no means guaranteed as it might happen, for example, that during the scan the person suddenly remembers something such that the mean activities deviate from their resting state baseline in some areas of the brain. If such a scan is then used for analyzing connectivities without taking a possible change into account, the results will be contaminated by the mean change leading to wrong conclusions. Therefore, in \cite{fda2}, Aston and Kirch developed testing procedures to detect deviations from mean stationarity. However, it is not only deviations from mean stationarity but also deviations from covariance stationarity that will contaminate the analysis and ultimately the conclusions. Therefore, in this paper, we develop tools to test for deviations from covariance stationarity in fMRI data which will be modeled as functional time series. This means that each observation of the time series, in this case each 3-d image of the brain, can be viewed as a function.
Indeed, taking into consideration that the brain works as a single unit with spatial dependencies, it is a natural approach to model each image as a discretized observation of a functional response. In contrast, a voxelwise approach requires a difficult adaption for multiple testing and may miss signals that are very small in any voxel but considerably large if information across voxels is used. Dependencies in time, i.e. between subsequent images, which are also present in fMRI data, can be captured by a time series structure. Lifting the multivariate observations to a functional space makes them mathematically easier to handle as one can exploit functional properties, such as smoothness, making use of many well established statistical techniques.\\

The statistical analysis of functional data is currently a rapidly progressing field of research as an increasing number of applications provides data which can be modeled in such a way. The methodology developed in this paper is widely applicable beyond the considered application of fMRI data, hence also of independent interest in functional data analysis in general. We adapt a nonparametric approach where we tackle the problem by means of a change point procedure without assuming any parametric spatial or temporal correlation structure. Such nonparametric methods become more and more refined in the analysis of functional data (cf. \cite{Fer} and \cite{Inffd}). Nonparametric tests for at most one change (AMOC) in the mean function have been considered for independent observations in \cite{AueFd} and \cite{Berk} as well as for weakly dependent data in \cite{HK}. Aston and Kirch \cite{fda} extend these results to a more general class of dependency structures and also consider epidemic changes where the mean function returns to its original state after some time.\\
The analysis of functional connectivity data is a very active field of research in neuroimaging. The detection of change points in the observed data without assuming the specifications of the experiment to be known is of particular interest. In this context, Cribben et al. \cite{Crib1} propose a data-driven approach, the so called Dynamic Connectivity Regression (DCR), for detecting changes in the functional connectivity between a set of brain regions and estimate a connectivity graph for each temporal interval between the change points. They use resampling methods in order to decide whether a change is significant. With a view to single-subject data, DCR is further developed in \cite{Crib2}.
In this paper we develop statistical procedures for the detection of deviations from covariance stationarity in functional time series that can be applied to fMRI data without being restricted to predefined regions of interest.\\

The paper is organized as follows: In Section \ref{testing} we propose a procedure based on dimension reduction techniques such as principal component analysis, to detect deviations from covariance stationarity. The test statistics and their asymptotic behavior are investigated in Section \ref{teststat}. The proposed procedures require the estimation of the long-run covariance which is statistically unstable. Using a missspecified estimator is a possible solution but leads to an unknown limit distribution such that resampling procedures, as described in Section \ref{resampling}, are unavoidable. Alternative test statistics which take the full functional structure into account without reducing the dimension are discussed in Section \ref{alttest}. The different procedures proposed in this paper are compared in a simulation study in Section \ref{sim}. The application to fMRI data is presented in Section \ref{appl}. Additional technical details, proofs and further results from the data analysis are given in the supplementary material.
\section{Testing for changes in the covariance structure of functional data}\label{testing}
We assume that the observations are obtained from a functional time series with the respective mean function being constant over time, i.e.
$$X_t(s)=\mu(s)+Y_t(s),\quad 1\leq t\leq n,$$
where $t$ denotes the time point and $s$ a spatial coordinate in a compact set $\mathcal{Z}$. The constant mean function is given by $\mu(\cdot)$ while the random fluctuations are represented by $Y_t(\cdot)$ with $\E(Y_t(s))=0$ which is not necessarily stationary but can have a time-dependent covariance structure as detailed in Section \ref{cpmodel}. $\mu(\cdot)$ as well as all elements of $\{Y_t(\cdot):1\leq t\leq n\}$ are assumed to be square integrable on $\mathcal{Z}$.
The mean stationarity can be checked previously as described in \cite{fda2}.\\
The covariance structure of a functional time series is determined by the covariance operator respectively the covariance kernel as given in the following definition.
\begin{definition}
Let $\{X_t(\cdot):1\leq t\leq n\}\in \mathcal{L}^2(\mathcal{Z})$ be a functional time series, where $\mathcal{Z}$ is a compact set. The square integrable covariance operator $C_t:\mathcal{L}^2(\mathcal{Z})\mapsto \mathcal{L}^2(\mathcal{Z})$ is defined by $$C_t(z)=\int_{\mathcal{Z}}c_t(\cdot,s)z(s)ds,$$ where $c_t(u,s)=\Cov\left(X_t(u),X_t(s)\right)$ is the covariance kernel of $X_t(\cdot).$
\end{definition}

\subsection{Change point model}\label{cpmodel}
First consider the at most one change (AMOC) alternative given by
\begin{align}\label{ysplit}
Y_t(s)=Y_t^{(1)}(s)1_{\{1\leq t\leq \theta n\}}+Y_t^{(2)}(s)1_{\{\theta n<t\leq n\}},\quad 1\leq t\leq n
\end{align}
with $\Cov\left(Y_t^{(1)}(u),Y_t^{(1)}(s)\right)=c(u,s)$ and $\Cov\left(Y_t^{(2)}(u),Y_t^{(2)}(s)\right)=c(u,s)+\delta(u,s)$
for some $0<\theta<1$ and $c(u,s),\delta(u,s)\in\mathcal{L}^2(\mathcal{Z}\times\mathcal{Z})$. 
According to this model, the covariance change occurs at the unknown time point $[\theta n]$. The covariance kernel $c(u,s)$ before the change as well as the change in covariance $\delta(u,s)\neq 0$ are both unknown. 
\begin{assumption}\label{ass.erg}
Assume that for $\{Y_t(\cdot)\}$ as in \eqref{ysplit} it holds
\begin{itemize}
\item[(i)] 
$\{Y_t^{(1)}(\cdot)\}\in \mathcal{L}^2(\mathcal{Z})$ with $$\E Y^{(1)}_1(s)=0\quad\mbox{and}\quad\E\|Y^{(1)}_1(\cdot)\|^4=\int \E\left[\left(Y^{(1)}_1(s)\right)^4\right]ds<\infty$$
is $L_m^4-$approximable \cite{HK} and hence, in particular, stationary and ergodic.
\item[(ii)]
$\{Y_t^{(2)}(\cdot)\}\in \mathcal{L}^2(\mathcal{Z})$ is ergodic with $$\E Y^{(2)}_1(s)=0\quad\mbox{and}\quad\E\|Y^{(2)}_1(\cdot)\|^2=\int \E\left[\left(Y^{(2)}_1(s)\right)^2\right]ds<\infty.$$
\end{itemize}
\end{assumption}
As we do not assume $Y_t^{(2)}$ to be stationary, the time series after the change is allowed to have starting values from a different distribution. $L_m^4-$approximability is a nonparametric concept of dependence which provides the necessary mathematical tools for the proofs and is satisfied for a large class of time series. Full details can be found in \cite{HK}.\\

Testing for covariance stationarity against the AMOC alternative can be described by the following hypotheses:
\begin{align*} 
H_0:\theta=1\quad
\mbox{against}\quad H_1:\quad&0<\theta<1.
\end{align*}
In order to obtain a test for more general alternatives of nonstationarities in the covariance, we consider the following epidemic alternative:
$$Y_t(s)=Y_t^{(1)}(s)1_{\{1\leq t\leq \theta_1 n\}}+Y_t^{(2)}(s)1_{\{\theta_1 n<t\leq \theta_2 n\}}+Y_t^{(1)}(s)1_{\{\theta_2 n< t\leq n\}},\quad 1\leq t\leq n$$
with $\Cov\left(Y_t^{(1)}(u),Y_t^{(1)}(s)\right)=c(u,s)$ and $\Cov\left(Y_t^{(2)}(u),Y_t^{(2)}(s)\right)=c(u,s)+\delta(u,s)$ for some $0<\theta_1<\theta_2<1$. It would also be possible to allow for contaminated starting values in the time series after the change. This alternative can be viewed as a better approximation to the expected kind of deviation from covariance stationarity.
\subsection{Dimension reduction techniques}\label{dimred}
A common approach in functional data analysis is the transition to a multivariate setting by projecting the data into a $d$-dimensional space spanned by an orthonormal basis $\{v_k(\cdot):k=1,\ldots,d\}$. In this case, the projection scores are obtained by
\begin{align}\label{scores}
\langle X_t,v_l\rangle =\int X_t(s)v_l(s)ds,\quad t=1,\ldots,n, l=1,\ldots,d.
\end{align}
As we aim at assessing the above functional testing problem by applying a multivariate testing procedure to the projection scores we first need to verify if a change in the covariance structure of the observed functional time series implies a change in the covariance of the scores. To this end, we observe
\begin{align*}
&\Cov(\langle X_t,v_{l_1}\rangle,\langle X_t,v_{l_2}\rangle)\\
=&\int\int c(u,s)v_{l_1}(u)v_{l_2}(s)du\,ds+1_{\{\theta n<t\leq n\}}\int\int\delta(u,s)v_{l_1}(u)v_{l_2}(s)du\,ds.
\end{align*}
Thus, a necessary condition for the covariance change to be visible in the projection scores is
\begin{equation}\label{changedet}
\int\int\delta(u,s)v_{l_1}(u)v_{l_2}(s)du\,ds\neq 0\quad\mbox{for some }l_1,l_2=1,\dots,d.
\end{equation}
In contrast to other applications we do not require the dimension reduction  technique to explain a large amount of the variation of the data but to yield a good signal-to-noise ratio where the signal is determined by $\int\int\delta(u,s)v_{l_1}(u)v_{l_2}(s)du\,ds$.
\paragraph*{Principal component analysis}
Principal component analysis (PCA) is a widely used data driven dimension reduction technique which projects the functional data on the subspace spanned by the first $d$ principal components explaining the most variance of any subspace of size $d$.
Let $\{\lambda_l:l\geq 1\}$ be the non-negative decreasing sequence of eigenvalues of the covariance operator and
$\{v_l(\cdot):l\geq 1\}$ a set of corresponding orthonormal eigenfunctions  defined by
$$\int c(u,s)v_l(s)ds=\lambda_lv_l(u),\quad l=1,2,\ldots;u\in\mathcal{Z}.$$
By Mercer's Lemma, see Lemma 1.3 in \cite{Bosq}, the covariance kernel can be expressed as
$$c(u,s)=\sum_{l=1}^{\infty}\lambda_kv_l(u)v_l(s)$$
and the  Karhunen-Lo\`{e}ve expansion, see Theorem 1.5 in \cite{Bosq}, yields
\begin{align}\label{KL}
X_t(s)-\mu(s)=\sum_{l=1}^{\infty}\eta_{t,l}v_l(s),
\end{align}
where the scores $\{\eta_{t,l}:l=1,2,\ldots\}$ given by $\eta_{t,l}=\int \left(X_t(s)-\mu(s)\right)v_l(s)ds$ are uncorrelated and centered with variance $\lambda_l$. As the covariance kernel is unkown, PCA is usually conducted based on the empirical covariance function
$$\hat{c}_n(u,s)=\frac1n\sum_{t=1}^n\left(X_t(u)-\overline{X}_n(u)\right)\left(X_t(s)-\overline{X}_n(s)\right),$$
where $\overline{X}_n(s)=\frac 1n\sum_{t=1}^nX_t(s).$ Under the null hypothesis, the empirical covariance function estimates the actual covariance kernel $c(u,s)$ whereas under the alternative it converges to a contaminated limit $k(u,s)$ as stated in (\ref{cconv}) in the supplementary material. As projection basis we determine the eigenfunctions $\{\hat{v}_l(\cdot):l=1,\ldots,d\}$ of $\hat{c}_n$ belonging to the $d$ largest eigenvalues and obtain the projection scores by $$\hat{\eta}_{t,l}=\int \left(X_t(s)-\overline{X}_n(s)\right)\hat{v}_l(s)ds=\langle X_t,\hat{v}_l\rangle -\overline{\langle X,\hat{v}_l\rangle}_n,\quad t=1,\ldots,n, l=1,\ldots,d$$
with $\overline{\langle X,\hat{v}_l\rangle}_n=\frac 1n\sum_{t=1}^n\langle X_t,\hat{v}_l\rangle.$ For more details on functional principal component analysis, in particular for the consistency of the empirical eigenvalues and eigenfunctions, see, for example, \cite{Inffd}.
\paragraph*{Separable covariance structure}
As fMRI data is collected voxelwise ($\sim M:=10^5$ voxels), using the empirical covariance function requires the calculation and storage of an $M\times M$-dimensional matrix in addition to the respective eigenanalysis. While this is computationally infeasible, one can show that there is a one-to-one correspondence between the eigenvalues and eigenvectors of the spatial covariance matrix ($M\times M$) and that of the time domain ($n\times n$). As $M\gg n$ the eigenanalysis in the time domain requires less computational effort. However, this relationship also reveals that the number of nonzero eigenvalues is limited by the sample size and hence indicates a considerable loss of precision when using the nonparametric covariance estimator. Based on those considerations, Aston and Kirch \cite{fda2} suggest to use a separable covariance structure in the estimation procedure given by
$$c\left((u_1,u_2,u_3),(s_1,s_2,s_3)\right)= c_1\left(u_1,s_1\right)c_2\left(u_2,s_2\right)c_3\left(u_3,s_3\right).$$
In this work, we adopt this approach of estimating the covariance matrix separately in each direction ($64\times 64$ resp. $64\times 64$  resp. $33\times 33$) and calculate the respective eigenvalues and eigenfunctions.
The  projection basis can then be obtained by the tensor product of the first $d$ eigenfunctions of each direction. Even if the actual covariance structure is not separable we obtain a valid projection such that the proposed dimension reduction can be applied for our purposes. While this is an obvious simplification, most smoothing techniques in fMRI make use of tensor based formulations leading to very similar implicit assumptions.

\subsection{Test statistic and statistical properties}\label{teststat}
We assess the functional testing problem by testing for a change in the covariance structure of the d-dimensional estimated score vectors. As proposed by Aue et al. in \cite{Aue} we construct the test statistic based on the following version of the traditional CUSUM-statistic for the AMOC alternative:
\begin{align}\label{sk1}
S_k=\frac{1}{\sqrt{n}}\left(\sum_{t=1}^k\vech[\hat{\eta}_t\hat{\eta}_t^T]-\frac kn\sum_{t=1}^n\vech[\hat{\eta}_t\hat{\eta}_t^T]\right).
\end{align}
We consider the test statistic
\begin{equation*}
\Omega_n=\frac 1n\sum_{k=1}^n S_k^T\hat{\Sigma}_n^{-1}S_k,
\end{equation*}
where $\hat{\Sigma}_n$ is an estimator for the long-run covariance $\Sigma_0=\sum_{t\in\mathbb{Z}}\Cov\left(\vech[\eta_0\eta_0^T],\vech[\eta_t\eta_t^T]\right)$ under $H_0$. We assume that $\hat{\Sigma}_n$ is consistent under the null hypothesis and
$$|\hat{\Sigma}_n-\Sigma_1|=o_p(1)\quad\mbox{under }H_1,$$
where $\Sigma_1$ is some positive-definite matrix which can differ from $\Sigma_0$. For the epidemic change point alternative we propose
the test statistic
\begin{equation*}
\Omega^{ep}_n=\frac 1n\sum_{1\leq k_1<k_2\leq n} S_{k_1,k_2}^T\hat{\Sigma}_n^{-1}S_{k_1,k_2}
\end{equation*}
with $S_{k_1,k_2}=S_{k_2}-S_{k_1}.$
While we consider sum-type test statistics throughout this paper the respective results for max-type test statistics obtained as the maximum over the quadratic forms $S_k^T\hat{\Sigma}_n^{-1}S_k$ resp. $S_{k_1,k_2}^T\hat{\Sigma}_n^{-1}S_{k_1,k_2}$ can be found in Section \ref{ap.max} in the supplementary material.
\paragraph*{Behavior under the null hypothesis}
We allow for a weak dependency structure by assuming the observed functional time series to be $L_m^4-$approximable. Effectively, this means the time series can be well approximated (in an $L_m^4-$sense) by an $m$-dependent one (see Definition 2.1 in \cite{HK}) - a property that many of the usual time series models possess.
\begin{theorem}\label{nullas}
Let Assumption \ref{ass.erg} (i) be satisfied. Additionally, we assume that the first $d+1$ eigenvalues of $c(u,s)$ are separated, i.e. $\lambda_1>\lambda_2>\ldots >\lambda_d>\lambda_{d+1}$. Then, the following asymptotics hold under the null hypothesis if $\hat{\Sigma}$ is a consistent estimator for the long-run covariance $\Sigma$.
\begin{align*}
\Omega_n\stackrel{\mathcal D}{\rightarrow}\sum_{l=1}^{\mathfrak{d}}\int_0^1 B_l^2(x)dx\quad\mbox{and}\quad\Omega_n^{ep}\stackrel{\mathcal D}{\rightarrow}\sum_{l=1}^{\mathfrak{d}}\int_0^1\int_{0}^y \left(B_l(y)-B_l(x)\right)^2dx\,dy,
\end{align*}
where \(\mathfrak{d}=d(d+1)/2\) and \((B_l(x):x\in [0,1],1\leq l\leq\mathfrak{d})\) are independent standard Brownian bridges.
\end{theorem}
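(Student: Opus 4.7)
The plan is to reduce the theorem to a functional CLT for the CUSUM process built from $\vech[\eta_t\eta_t^T]$, where $\eta_t=(\eta_{t,1},\ldots,\eta_{t,d})^T$ are the \emph{population} PCA scores with respect to the true eigenfunctions $v_l$. I would proceed in three stages.

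First, I would replace the estimated quantities $\hat v_l$ and $\bar X_n$ inside $S_k$ by $v_l$ and $\mu$. The eigenfunction sign ambiguity is harmless because $\hat\eta_t$ enters $S_k$ only through $\hat\eta_t\hat\eta_t^T$. Under $H_0$ and Assumption~\ref{ass.erg}(i), standard functional PCA consistency results (cf.\ \cite{HK,Bosq}) give $\|\hat v_l-\hat s_l v_l\|=O_P(n^{-1/2})$ for suitable random signs $\hat s_l\in\{-1,+1\}$, where the spectral gap assumption $\lambda_1>\ldots>\lambda_{d+1}$ is used, together with $\|\bar X_n-\mu\|=O_P(n^{-1/2})$. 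Expanding $\vech[\hat\eta_t\hat\eta_t^T]$ around $\vech[\eta_t\eta_t^T]$ and combining these rates with the fourth-moment bound on $Y_t$, I aim to establish $\sup_{1\le k\le n}\|S_k-\tilde S_k\|=o_P(1)$, where $\tilde S_k$ denotes the analogue of $S_k$ formed from the population scores $\eta_t$.

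Second, since $\{Y_t\}$ is $L_m^4$-approximable, the $\mathfrak d$-dimensional centered sequence $\{\vech[\eta_t\eta_t^T]-\E\vech[\eta_t\eta_t^T]\}$ inherits $L_m^2$-approximability with finite second moment, being a continuous functional of $Y_t$ with $L^4$-bounded inputs. Invoking a multivariate invariance principle for $L^2$-approximable sequences (see \cite{HK}), the partial-sum process converges weakly in the Skorokhod topology to a $\mathfrak d$-dimensional Brownian motion with covariance $\Sigma$. After subtracting the scaled total sum, this yields $\tilde S_{\lfloor n\cdot\rfloor}\Rightarrow\Sigma^{1/2}B(\cdot)$, with $B$ a standard $\mathfrak d$-dimensional Brownian bridge. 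Given $\hat\Sigma_n\stackrel{P}{\to}\Sigma$, the continuous mapping theorem applied to the Riemann-sum functional $f\mapsto\int_0^1 f(x)^T\Sigma^{-1} f(x)\,dx$ then gives
\[
\Omega_n\stackrel{\mathcal D}{\to}\int_0^1 (\Sigma^{1/2}B(x))^T\Sigma^{-1}\Sigma^{1/2}B(x)\,dx=\int_0^1 B(x)^T B(x)\,dx=\sum_{l=1}^{\mathfrak d}\int_0^1 B_l^2(x)\,dx,
\]
where the last equality uses that the coordinate processes $B_l$ are independent standard Brownian bridges. The completely analogous continuous-mapping argument over the simplex $\{0\le x<y\le 1\}$ delivers the stated limit for $\Omega_n^{ep}$.

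The hard part will be the first stage: although each eigenfunction and mean perturbation is individually $O_P(n^{-1/2})$, the CUSUM $S_k$ carries built-in $\sqrt n$ factors, so a naive bound on $S_k-\tilde S_k$ is only $O_P(1)$. To recover $o_P(1)$ uniformly in $k$, I would exploit both the mean-removal built into \eqref{sk1} and the $L_m^4$-approximable structure of $Y_t$ via a summation-by-parts argument, in a spirit similar to the multivariate AMOC treatment of Aue et al.\ \cite{Aue}. Once this stochastic equicontinuity is in hand, the remaining invariance-principle and continuous-mapping steps are routine.
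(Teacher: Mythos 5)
Your overall architecture coincides with the paper's: pass from the estimated scores to population scores, invoke an invariance principle for $\vech[\eta_t\eta_t^T]$ (the paper verifies that the score vectors inherit $L_m^4$-approximability and cites Theorem A.2 of \cite{Aue}), and finish with consistency of $\hat{\Sigma}_n$ and the continuous mapping theorem; your stages two and three are correct and match the paper. The problem is stage one, which you yourself flag as unresolved: you offer only ``a summation-by-parts argument'' for why $\sup_k\|S_k-\tilde S_k\|=o_P(1)$, and that step is the heart of the proof, not a routine technicality. As you note, a naive expansion of $\vech[\hat{\eta}_t\hat{\eta}_t^T]$ around $\vech[\eta_t\eta_t^T]$ using $\|\hat v_l-\hat s_l v_l\|=O_P(n^{-1/2})$ stalls at $O_P(1)$, and it is not clear how summation by parts could rescue it, because the eigenfunction error is a single random quantity multiplying the entire partial sum rather than anything telescoping in $t$.

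The paper's resolution exploits the bilinearity of the score products in the eigenfunctions: up to signs, $\check{\eta}_{t,l_1}\check{\eta}_{t,l_2}-\eta_{t,l_1}\eta_{t,l_2}=\int\int Y_t(u)Y_t(s)\left(\hat v_{l_1}(u)\hat v_{l_2}(s)-v_{l_1}(u)v_{l_2}(s)\right)du\,ds$, so the whole CUSUM difference factors as an $L^2$ inner product of the \emph{centered} process $\frac1n\sum_{t\le[nx]}\left(Y_t(u)Y_t(s)-\overline{Y(u)Y(s)}\right)$ with $\sqrt n\left(g_{l_1}g_{l_2}\hat v_{l_1}(u)\hat v_{l_2}(s)-v_{l_1}(u)v_{l_2}(s)\right)$; see \eqref{limeta.approx1a}. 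Cauchy--Schwarz then yields a product of (i) an $O_P(1)$ factor, namely the $\sqrt n$-rate for products of estimated eigenfunctions (Theorem \ref{prodv.cons} b), which uses the spectral gap together with the $O_P(n^{-1})$ bound on $\int\int(\hat c_n-c)^2$ from Lemma 2.3 of \cite{fda}), and (ii) an $o_P(1)$ factor, the uniform-in-$x$ $L^2$-norm of the centered partial sums of the product process, which holds because $Z_t(u,s)=Y_t(u)Y_t(s)$ is itself $L_m^4$-approximable (Lemma 2.1 in \cite{HK}) and satisfies the Hilbert-space invariance principle of \cite{Berkinv}, giving \eqref{inv.prod}. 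The CUSUM centering you appeal to is indeed essential (without it the $\frac1n$-scaled partial sum tends to $x\,c(u,s)\neq0$), but the mechanism that turns $O_P(1)$ into $o_P(1)$ is this factorization, not summation by parts. A smaller omission: replacing $\mu$ by $\overline X_n$ produces cross terms such as $\int\overline Y_{[nx]}(u)\hat v_{l_1}(u)du\cdot\int\sqrt n\,\overline Y_n(s)\hat v_{l_2}(s)ds$, which the paper handles separately via the ergodic theorem ($\int\overline Y_n^2(s)ds=o_P(1)$) and tightness of $\sqrt n\,\overline Y_n$; this needs to be spelled out rather than absorbed into the eigenfunction replacement. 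As written, your stage one therefore contains a genuine gap at the central step.
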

Based on this result we can now determine the critical value as  $(1-\alpha)$-quantile of the respective limit distribution. This can be done by using Monte Carlo simulations. However, it is notoriously difficult to estimate the long-run covariance (see discussion in \cite{fda2}). In this case, i.e. if $\hat{\Sigma}$ is not consistent or the convergence too slow to be appropriate for small samples, the limit distributions in Theorem \ref{nullas} are no longer true.
\paragraph*{Behavior under the alternative hypothesis}
Condition (\ref{changedet}) is examined for two exemplary alternatives, where the projection basis is determined based on principal component analysis. The following Lemma states that, under the alternative, the empirical covariance function converges to a contaminated limit $k(u,s)$.
\begin{Lemma}\label{climit}
Under Assumption \ref{ass.erg} it holds
\begin{align*}
\int\int \left(\hat{c}_n(u,s)-k(u,s)\right)^2du\,ds=o_P(1),
\end{align*}
where $k(u,s)=c(u,s)+(1-\theta)\delta(u,s)$.
\end{Lemma}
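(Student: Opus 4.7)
The plan is to decompose the empirical covariance into a stationary Ces\`aro part plus a sample-mean correction, and handle each by a Hilbert-space ergodic theorem.

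Since the mean $\mu(\cdot)$ is constant, $X_t(u)-\overline X_n(u)=Y_t(u)-\overline Y_n(u)$ with $\overline Y_n=n^{-1}\sum_{t=1}^n Y_t$. Expanding the product gives
\begin{equation*}
\hat c_n(u,s)=\frac{1}{n}\sum_{t=1}^n Y_t(u)Y_t(s)\;-\;\overline Y_n(u)\overline Y_n(s),
\end{equation*}
so it suffices to show the first summand tends to $k$ and the second to $0$ in $L^2(\mathcal Z\times\mathcal Z)$.

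First I split the main sum at the change point. Writing $(Y\otimes Y)(u,s):=Y(u)Y(s)$,
\begin{equation*}
\frac{1}{n}\sum_{t=1}^n Y_t\otimes Y_t\;=\;\frac{\lfloor\theta n\rfloor}{n}A_n^{(1)}+\frac{n-\lfloor\theta n\rfloor}{n}A_n^{(2)},
\end{equation*}
where $A_n^{(i)}$ is the Ces\`aro mean of $\{Y_t^{(i)}\otimes Y_t^{(i)}\}$ regarded as a sequence of random elements of the separable Hilbert space $L^2(\mathcal Z\times\mathcal Z)$. Under Assumption \ref{ass.erg}(i), $\{Y_t^{(1)}\otimes Y_t^{(1)}\}$ is stationary and ergodic (inherited from $Y^{(1)}$) with $\E\|Y_t^{(1)}\otimes Y_t^{(1)}\|_{L^2(\mathcal Z\times\mathcal Z)}=\E\|Y_1^{(1)}\|^2<\infty$ by Jensen; the Banach-space Birkhoff ergodic theorem therefore yields $A_n^{(1)}\to\E[Y_1^{(1)}\otimes Y_1^{(1)}]=c$ in $L^2$-norm almost surely. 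An analogous application using Assumption \ref{ass.erg}(ii), after discarding the possibly non-stationary initial segment (which has vanishing Ces\`aro contribution), gives $A_n^{(2)}\to c+\delta$. Since $\lfloor\theta n\rfloor/n\to\theta$, the main term converges to $\theta c+(1-\theta)(c+\delta)=c+(1-\theta)\delta=k$ in $L^2(\mathcal Z\times\mathcal Z)$.

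For the sample-mean correction, the same ergodic theorem applied now to $\{Y_t^{(i)}\}$ in $L^2(\mathcal Z)$ (using $\E Y_t^{(i)}=0$ and $\E\|Y_t^{(i)}\|\le(\E\|Y_t^{(i)}\|^2)^{1/2}<\infty$) gives $\overline Y_n\to 0$ in norm, whence
\begin{equation*}
\int\int\overline Y_n(u)^2\overline Y_n(s)^2\,du\,ds=\|\overline Y_n\|^4=o_P(1),
\end{equation*}
and combining via the triangle inequality in $L^2(\mathcal Z\times\mathcal Z)$ finishes the proof. The only real obstacle is the bookkeeping around the Banach-space ergodic theorem: one must verify Bochner-integrability of $Y_t^{(i)}\otimes Y_t^{(i)}$ (which is precisely the moment bound $\E\|Y_t^{(i)}\|^2<\infty$ stated in the assumption) and dispose of the non-stationary head of the post-change segment; neither step is delicate under the stated hypotheses.
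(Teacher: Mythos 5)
Your proposal is correct and follows essentially the same route as the paper: split the product sum at the change point, apply a Hilbert-space ergodic theorem to $Y_t\otimes Y_t$ in $L^2(\mathcal{Z}\times\mathcal{Z})$ so each segment's Ces\`aro mean converges to $c$ resp.\ $c+\delta$, and kill the centering term via $\|\overline{Y}_n\|=o_P(1)$. The only difference is bookkeeping: you expand $\hat c_n=\frac1n\sum_t Y_t\otimes Y_t-\overline Y_n\otimes\overline Y_n$ globally before splitting, whereas the paper splits first and bounds the resulting cross terms involving $\overline Y_{[\theta n]}$ and $\overline Y_n$ by Cauchy--Schwarz, which is a cosmetic rather than substantive distinction.
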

\begin{example}[Change does not affect eigenfunctions]
We consider a covariance change that does not affect the eigenfunctions, i.e. the covariance kernel after the change has the same eigenfunctions $v_l(\cdot)$ as the covariance kernel before the change:
$$\int\left(c(u,s)+\delta(u,s)\right)v_l(s)ds=\tilde{\lambda}_lv_l(u),$$
where $v_l(\cdot)$ and $\lambda_l$ are the eigenfunctions and eigenvalues of $c(u,s)$ and $\tilde{\lambda}_l=\lambda_l+\delta_l$ with $\delta_l\neq 0$ for some l=1,\dots,d. \\
Condition (\ref{changedet}) is fulfilled as it holds (see Section \ref{proofs} in the supplementary material)
\begin{align*}
\int\int\delta(u,s)v_{l_1}(u)v_{l_2}(s)du\,ds=\begin{cases}
0,&{l_1}\neq {l_2}\\
\delta_l,& {l_1}={l_2}.
\end{cases}
\end{align*}
Assuming that the eigenvalues of $k(u,s)$ are separated, the change is still detectable if the eigendirections are estimated based on the empirical covariance function (see (\ref{ex1.del})).
\end{example}
\begin{example}[Additive noise term]\label{ex2}
In this example, a covariance change in the functional time series occurs due to an additive noise term in the scores of the first m leading eigendirections. More precisely, it holds $X_t(s)-\mu(s)=\sum_{l=1}^{\infty}\tilde{\eta}_{t,l}v_l(s)$ with
$$\tilde{\eta}_{t,l}=\eta_{t,l}+1_{\{\theta n<t\leq n,1\leq l\leq m\}}\epsilon_{t,l},$$
where $\epsilon_1,\dots,\epsilon_n$ with $\epsilon_t=(\epsilon_{t,1},\ldots,\epsilon_{t,m})$ are independent and identically distributed with mean 0 and $\Cov\left(\epsilon_{t,l_1},\epsilon_{t,l_2}\right)=\sigma_{l_1,l_2}$ and independent of $\eta$. In this setting, it holds (see Section \ref{proofs} in the supplementary material)
\begin{align*}
\int\int\delta(u,s)v_{l_1}(u)v_{l_2}(s)du\,ds&=\sigma_{l_1,l_2}
\end{align*}
for $l_1,l_2\in\{1,\ldots,m\}$. Hence, condition \eqref{changedet} is fulfilled if $\sigma_{l_1,l_2}\neq 0$ for some  $l_1,l_2\in\{1,\ldots,m\}$.
According to (\ref{ex2.est}) the change can be detected by projecting on the subspace spanned by the first $d$ eigendirections of the empirical covariance kernel  if $$\sum_{k,l=1}^{m}\sigma_{k,l}\left(\int v_{k}(u)\tilde{v}_{l_1}(u)du \int v_{l}(s)\tilde{v}_{l_2}(s)ds\right)\neq 0$$ for at least one pair $l_1,l_2\in\{1,\ldots,\min\{d,m\}\}$, where $\{\tilde{v}_l(\cdot):l\geq 1\}$ are the eigenfunctions of $k(u,s)$. 
\end{example}
\paragraph*{Estimation of the long-run covariance}
The estimation of the long-run covariance matrix is a challenging issue in change point analysis. In the case where $Y_j$ are independent under $H_0$ the long-run covariance reduces to the covariance, i.e.
\begin{align*}
\Sigma_0=\Cov\left(\vech[\eta_0\eta_0^T]\right)=\E\left(\vech[\eta_0\eta_0^T]\vech[\eta_0\eta_0^T]^T\right)-\E\left(\vech[\eta_0\eta_0^T]\right)\E\left(\vech[\eta_0\eta_0^T]\right)^T.
\end{align*}
The components of the scores are known to be uncorrelated. However, this does not necessarily imply a diagonal long-run covariance as, in general, the squared components are not uncorrelated. By additionally assuming that the scores are Gaussian we get a diagonal long-run covariance depending only on the eigenvalues of the covariance kernel which can be estimated by the eigenvalues of the estimated covariance kernel. More precisely, it holds (see Section \ref{proofs} in the supplementary material for proof):
\begin{align}\label{normalsig}
\Sigma=\Cov\left(\vech[\eta_0\eta_0^T]\right)=\diag(2\lambda_{1}^2,\lambda_{1}\lambda_{2},\ldots,2\lambda_{2}^2,\lambda_{2}\lambda_{3},\ldots,2\lambda_{d}^2).
\end{align}
However, when dealing with a time series structure and non-Gaussian structure one has to estimate the full long-run covariance. Usual estimators, such as the Bartlett estimator, lead to problems, in particular if the dimension is large compared to the sample size (see \cite{fda2}). Aston and Kirch \cite{fda2} conclude that the change point procedure becomes more stable and conservative if one only corrects for the long-run variance, i.e. the diagonal of the long-run covariance matrix. In our case, this approach leads to the following test statistic:
\begin{equation}\label{statadapt}
\tilde{\Omega}_n=\frac 1n\sum_{k=1}^n S_k^T\hat{D}_n^{-1}S_k,
\end{equation}
where $\hat{D}_n^{-1}$ is an estimator for the inverse of the diagonal matrix given by the diagonal elements of $\Sigma$. This test statistic is not pivotal in the sense that the asymptotic critical value depends on the unknown correlation structure. As a consequence, this approach requires resampling procedures. As detailed in the next section we apply a circular block bootstrap where we estimate the long-run variance of the bootstrap samples by the block sample variance given in (\ref{blockest}). The estimator  $\hat{D}_n$ for the test statistic has to be chosen carefully with respect to its interaction with the estimator used for the bootstrap statistic. We decide to estimate the long-run variance for the test statistic with the block estimator in \eqref{lrest} as, based on simulations, this seems to yield the most stable size in comparison to, for example, the flat-top kernel estimator introduced in \cite{Pol} with automatic bandwidth selection.
\subsection{Resampling procedures}\label{resampling}
The critical values of change point procedures are usually chosen based on the limit distribution of the test statistic under the null hypothesis. Resampling methods can be applied to get a better small sample performance but cannot be avoided if the limit distribution is non-pivotal and cannot be estimated otherwise, as is the case in our example. Previous work on resampling procedures for functional time series include \cite{Murry} for independent data and \cite{Hboot} as well as \cite{Hboot2} for dependent Hilbert space-valued random variables. Recently, in \cite{sievebs}, a sieve-type bootstrap procedure for functional time series based on a vector autoregressive representation of the scores has been introduced.\\
In order to prove the validity of a bootstrap procedure it has to be shown that, given the observations, the bootstrap test statistic has the same limit distribution as the actual test statistic under the null hypothesis and thus leads to the same asymptotic critical values. For a good power behavior under alternatives, it is important to take into account that the underlying observations may contain a change. Ideally, the respective limit distribution holds under the null hypothesis as well as under the alternative
showing that the bootstrap test is asymptotically equivalent to the asymptotic test. Theoretical justifications for the bootstrap procedure providing better small sample behavior are mainly available for simple test statistics such as the mean (see for example \cite{accbootstrap}). Therefore, simulation studies are usually performed in order to assess the size and power of a bootstrap procedure. In this work, we apply the bootstrap to the projections as resampling the functional observations would require the estimation of the covariance kernel for each bootstrap sample which is computationally infeasible. Whether this leads to theoretically justifiable bootstrap procedures remains to be seen in future work.\\
As discussed above, due to the non-pivotal limit distribution, resampling procedures are required to obtain critical values for our test. Aston and Kirch \cite{fda2} obtained reasonable results by applying multivariate block bootstrap procedures for the corresponding mean change procedure. We apply a circular block bootstrap to the $\mathfrak{d}:=d(d+1)/2-$dimensional sequence of the score products. 
In order to correct the data for a possible change we first estimate the change point in each component $i=1,\ldots, \mathfrak{d}$ as follows:
$$\hat{k}_i^*=\arg\max_{1\leq k\leq n}\left(\sum_{t=1}^k\hat{q}_{i}(t)-\frac kn\sum_{t=1}^n\hat{q}_{i}(t)\right),\quad\mbox{where } \hat{q}(t):=\vech[\hat{\eta}_t\hat{\eta}_t^T].$$
Thus, we can estimate the uncontaminated data by
\begin{equation}\label{res}
\tilde{q}_{i}(t)=\hat{q}_{i}(t)-\begin{cases}
\overline{\hat{q}_i^0},&1\leq t\leq \hat{k}_i^*,\\
\overline{\hat{q}_i^1},&t> \hat{k}_i^*,
\end{cases}
\end{equation}
where $\overline{\hat{q}_i^0}=\frac{1}{\hat{k}_i^*}\sum_{t=1}^{\hat{k}_i^*}\hat{q}_{i}(t)$ and $\overline{\hat{q}_i^1}=\frac{1}{n-\hat{k}_i^*}\sum_{t=\hat{k}_i^*+1}^n\hat{q}_{i}(t).$ 
We estimate the long-run variance of the original test statistic by
\begin{align}\label{lrest}
\hat{D}_n(i,i)=\frac{1}{n}\sum_{j=0}^{L-1}\left(\sum_{k=1}^K\tilde{q}_{i}(Kj+k)\right)^2,\quad \hat{D}_n(i,j)=0\quad\mbox{for } i\neq j,
\end{align}
where we use the same blocklength $K$ as in the following bootstrap procedure. We split the whole sequence of length n circularly into overlapping subsequences of length $K$ and repeat the following steps B times to obtain the bootstrap statistics $\tilde{\Omega}_n^{*(b)},b=1,\ldots,B$:
\begin{itemize}
\item[(1)] Draw the starting points of the blocks as realizations of $$U(0),\ldots,U(L)\stackrel{i.i.d.}{\sim}U(\{0,\ldots,n-1\})\quad\mbox{ with }L:=\left\lfloor\frac{n}{K} \right\rfloor.$$
\item[(2)] Generate a bootstrap sample by
$$q_{i}^*(Kj+k):=\tilde{q}_{i}(U(j)+k),\quad j=0,\ldots,L,\quad k=1,\ldots,K,\quad i=1,\ldots, \mathfrak{d},$$
where $\tilde{q}_{i}(t)=\tilde{q}_{i}(t-n)$ if $t>n.$
\item[(3)] Calculate residuals $\tilde{q}^*_{i}(t)$ of the bootstrap sample of length n analogously to (\ref{res}).
\item[(4)] Calculate $D^*_n$ by
\begin{align}\label{blockest}
D^*_n(i,i)=\frac{1}{n}\sum_{j=0}^{L-1}\left(\sum_{k=1}^K\tilde{q}^*_{i}(Kj+k)\right)^2,\quad D^*_n(i,j)=0\quad\mbox{for } i\neq j.
\end{align}
\item[(5)] Calculate the bootstrap statistic by
$$\tilde{\Omega}_n^*=\frac 1n\sum_{k=1}^n S^{*T}_kD^{*^{-1}}_nS^*_k,
$$
with $S^*_k=(S^*_k(1),\ldots, S^*_k(\mathfrak{d}))^T,$ $S^*_k(i)=\frac{1}{\sqrt{n}}\left(\sum_{t=1}^k(q^*_{i}(t)-\overline{q}_{n,i}^*)\right),$ $\overline{q}_{n,i}^*=\frac 1n\sum_{t=1}^nq^*_{i}(t).$ 
\end{itemize}
We obtain the critical values as the upper $\alpha$-quantiles of the B realizations $\tilde{\Omega}_n^{*(b)},b=1,\ldots,B$. The validity of the corresponding multivariate block bootstrap has been shown in \cite{Weber} taking possible changes into account. In the functional setting this should carry over as long as the eigenvalues are well separated but a detailed theoretic analysis is beyond the scope of this paper.
\subsection{Some alternative test statistics}\label{alttest}
The main drawback of change point procedures based on dimension reduction techniques is their inability to detect changes which are orthogonal to the projection space as given by condition (\ref{changedet}) for the covariance change. Furthermore, the asymptotic distributions do not yield reasonable small sample approximations if the dimension of the projection space is chosen too large. This is particularly problematic when testing for a covariance change as the procedure is based on the $d(d+1)/2$-dimensional product vector when projecting on a d-dimensional subspace. Even if we only use the first two leading eigenfunctions of each direction in the separable dimension reduction and thus risk missing possible changes which do not occur in this very limited number of eigendirections we project on a 8-dimensional subspace and obtain 36-dimensional product vectors. Taking 3 eigenfunctions in each direction results in a 378-dimensional product vector which is considerably larger than the sample size and thus problematic for the multivariate procedure. This motivates us to consider fully functional test statistics.
Recall that after reducing the dimension, the test statistic as given in (\ref{statadapt}) is based on
\begin{align}\label{tkd}
T_k=S_k^TD_n^{-1}S_k=\frac 1n\sum_{l_1=1}^{d}\sum_{l_2=l_1}^{d}\frac{1}{\hat{\gamma}^2_{l_1,l_2}}\left(\sum_{t=1}^k\left(\hat{\eta}_{t,l_1}\hat{\eta}_{t,l_2}-\overline{\hat{\eta}_{l_1}\hat{\eta}_{l_2}}\right)\right)^2
\end{align}
with $S_k$ as given in (\ref{sk1}), $\overline{\hat{\eta}_{l_1}\hat{\eta}_{l_2}}=\frac 1n\sum_{t=1}^n\hat{\eta}_{t,l_1}\hat{\eta}_{t,l_2}$ and $\hat{\gamma}^2_{l_1,l_2}$ is an estimator for $\gamma^2_{l_1,l_2}=\sum_{t\in\mathbb{Z}}\Cov\left(\hat{\eta}_{0,l_1}\hat{\eta}_{0,l_2},\hat{\eta}_{t,l_1}\hat{\eta}_{t,l_2}\right)$. The weight $\frac{1}{\hat{\gamma}^2_{l_1,l_2}}$ corrects for different variances in the time series of the score products making smaller changes in components with smaller variances better visible for the test statistic. This approach is related to the likelihood ratio statistic in the multivariate case. However, the price to pay is that changes - even big ones - in score components other than the first d will not be detected at all. This seems quite unnatural. Therefore, we consider alternative test statistics related to the procedures proposed in \cite{Wendler} and \cite{AueFull} for the mean change problem which take the full functional structure into account. An obvious and well defined alternative to reducing the dimension is
\begin{align}\label{tkf}
T_k^F=\frac 1n \sum_{l_1=1}^{\infty}\sum_{l_2=l_1}^{\infty}\left(\sum_{t=1}^k(\eta_{t,l_1}\eta_{t,l_2}-\overline{\eta_{l_1}\eta_{l_2}})\right)^2
\end{align}
which takes all scores of the basis expansion into account but without correcting for different variances as the multivariate test statistic does. Due to the squared summability of the eigenvalues, this infinite sum is well defined. In order to keep the advantage of $T_k$ in terms of the weights improving the visibility of changes in components with smaller variances while not risking to miss a change due to dimension reduction we suggest the following weighting
\begin{align}\label{tkw}
T_k^W=\frac 1n \sum_{l_1=1}^{\infty}\sum_{l_2=l_1}^{\infty}\frac{1}{s_{1,1}^2+\hat{\gamma}^2_{l_1,l_2}}\left(\sum_{t=1}^k(\hat{\eta}_{t,l_1}\hat{\eta}_{t,l_2}-\overline{\hat{\eta}_{l_1}\hat{\eta}_{l_2}})\right)^2,
\end{align}
where $s_{1,1}^2$ is the estimated variance of the first squared score component. This additive constant is needed for bounding the denominator of the weights away from zero and is chosen such that the test statistic is scale invariant. By (\ref{normalsig}) for independent Gaussian scores the variance of the first squared score component is given by $2\lambda_1^2$ which is the largest element in the long-run covariance matrix.\\
We calculate the critical values with an analogous the bootstrap procedure as described in Section \ref{resampling}. For the weighted functional procedure the long-run variances are estimated for each bootstrap sample with the block estimator as in step (4) whereas we keep the variance of the first squared score component fixed. Analogously to the multivariate procedure we also use the block estimator \eqref{lrest} for estimating the long-run covariance for the test statistics.
\begin{remark}\label{rem.Tf}
$T_k^F$ is related to the statistic $\|S_k^F\|^2$, where
\begin{align*}
S_k^F(u,s)=\frac{1}{\sqrt{n}}\sum_{t=1}^k\left(X_t(u)X_t(s)-\overline{X(u)X(s)}\right),
\end{align*}
with $\overline{X(u)X(s)}=\frac 1n\sum_{t=1}^nX_t(u)X_t(s)$. More precisely, some calculations show (see (\ref{l2s}) in the supplementary material) that
\begin{align*}
\|S_k^F\|^2=\frac 1n \sum_{l_1,l_2=1}^{\infty}\left(\sum_{t=1}^k(\eta_{t,l_1}\eta_{t,l_2}-\overline{\eta_{l_1}\eta_{l_2}})\right)^2.
\end{align*}
In contrast to $T_k^F$, this statistic contains all combinations of $l_1\neq l_2$ twice such that the cross-covariances have double weights compared to the variances. This is an artefact when dealing with a bivariate symmetric function which does not occur in the mean change problem. In accordance with $T_k$ we construct the functional statistic $T_k^F$ such that each combination is only contained once.
\end{remark}
\section{Simulation study}\label{sim}
In the following simulation study we assess the empirical size and power of the proposed procedures. As there are no mathematical justifications for the bootstrap procedures for the functional test statistics available yet, the simulation study is of particular interest to evaluate their performance. Independent innovations $e_t(s)=\sum_{l=1}^D\eta_{t,l}v_l(s),t=1,\ldots,n,$ of length $n=200$ are generated using a Fourier basis with $D=55$ basis functions $\{v_1,\ldots,v_{55}\}$ on $[0,1]$, where $v_1(s)\equiv 1$ followed by pairs of $\sin(i\cdot s)$ and $\cos(i\cdot s)$ for $i=2,\ldots,27$. The scores $\{\eta_{t,l}:l=1,\ldots,55\}$ are independent and normally distributed with standard deviations $\{\sigma_{l}:l=1,\ldots,55\}$. Following the simulation study in \cite{AueFull} we consider the following settings:
\begin{itemize}
\item[]Setting 1 (small number of nonzero eigenvalues): $\sigma_l=1$ for $l=1,\ldots,8$ and $\sigma_l=0$ for $l=9,\ldots, 55$.
\item[]Setting 2 (fast decay of eigenvalues using): $\sigma_l=3^{-l}, l=1,\ldots,55$.
\item[]Setting 3 (slow decay of eigenvalues using): $\sigma=l^{-1}, l=1,\ldots,55$.
\end{itemize}
\begin{figure}[H]
\begin{subfigure}[t]{0.32\textwidth}
\centering
	\includegraphics[width=\textwidth]{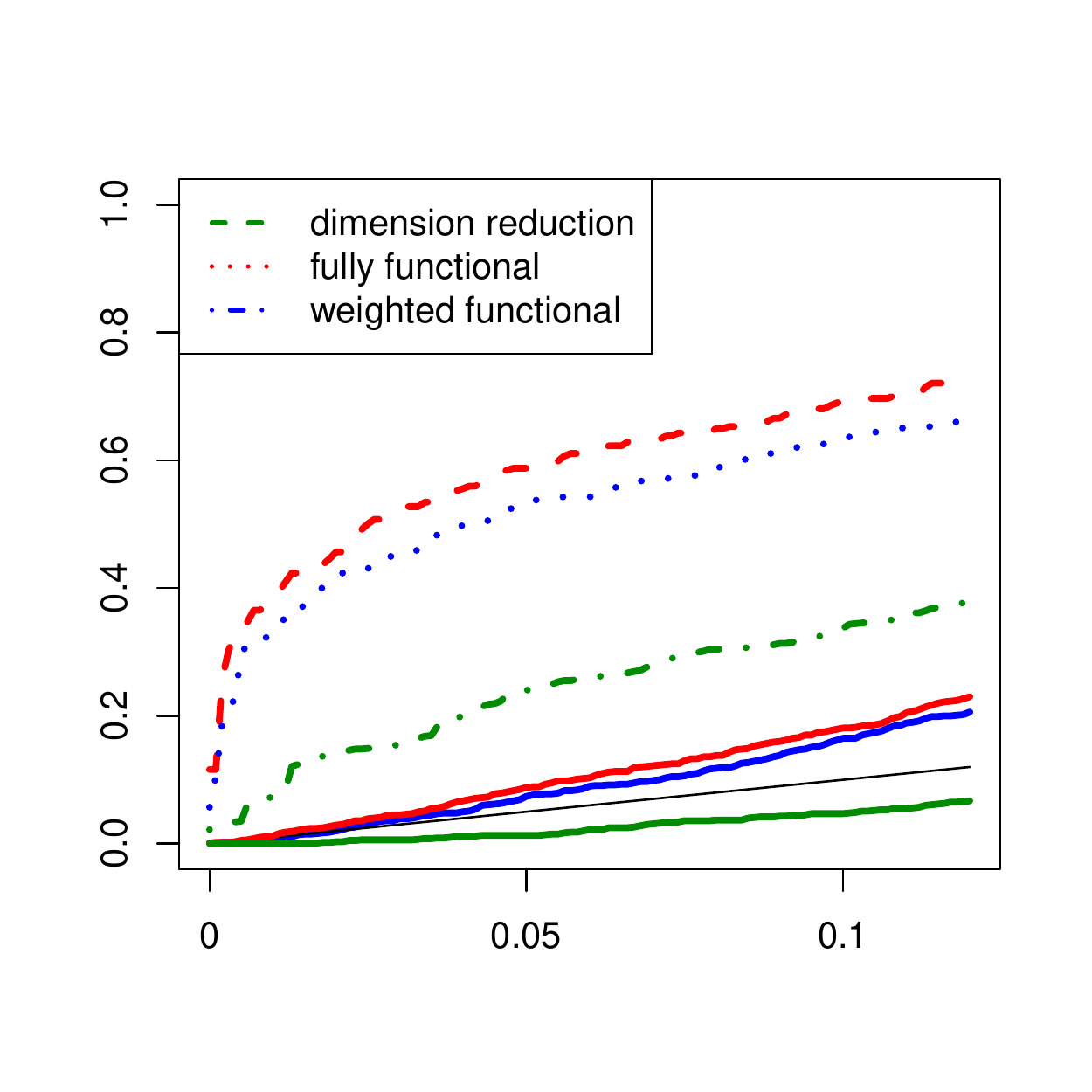}
	\caption{Setting 1,$\sigma_{\epsilon}=1,m=2,25,50$}
	\end{subfigure}
\hfill
\begin{subfigure}[t]{0.32\textwidth}
	\centering
	\includegraphics[width=\textwidth]{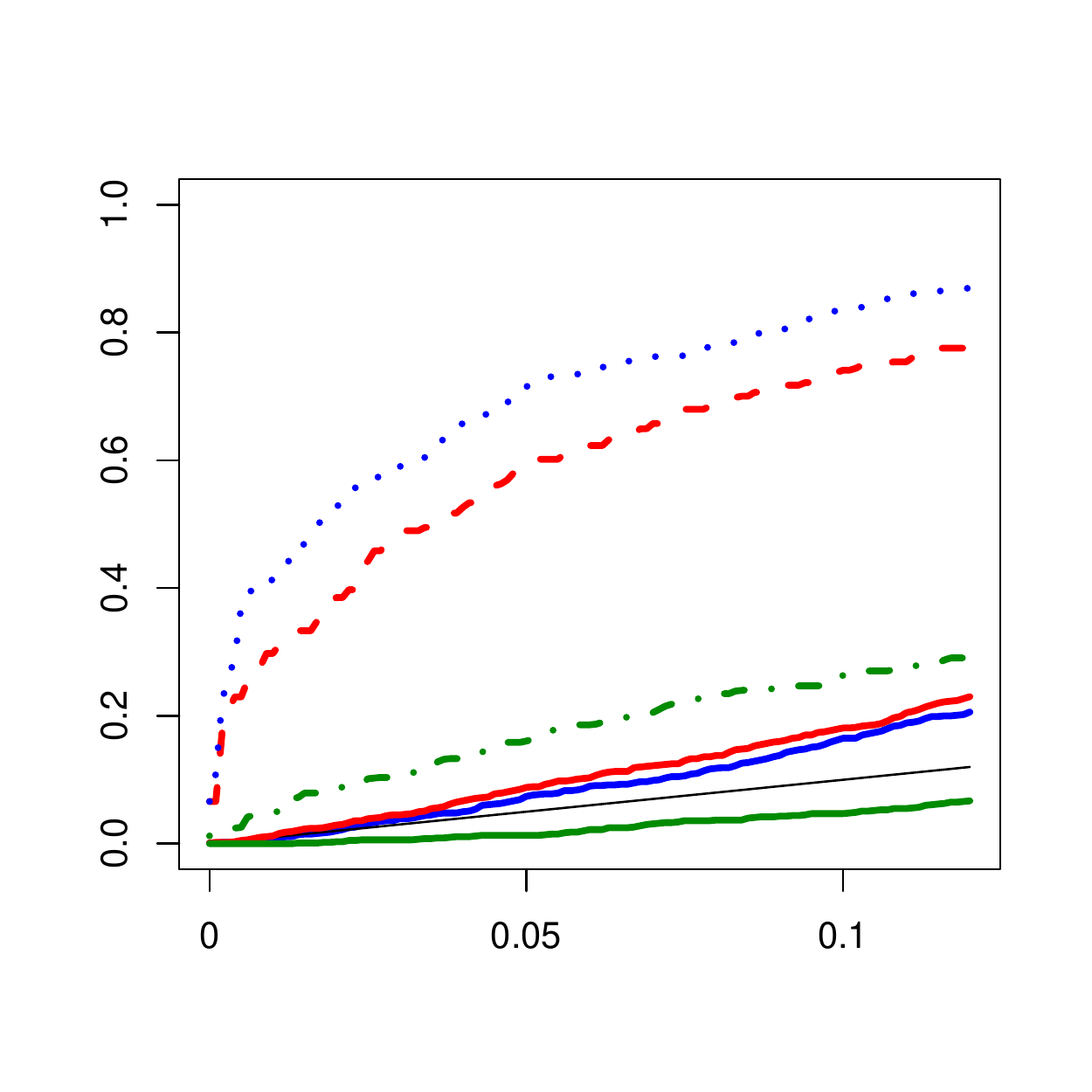}

\end{subfigure}
\hfill
\begin{subfigure}[t]{0.32\textwidth}
	\centering
	\includegraphics[width=\textwidth]{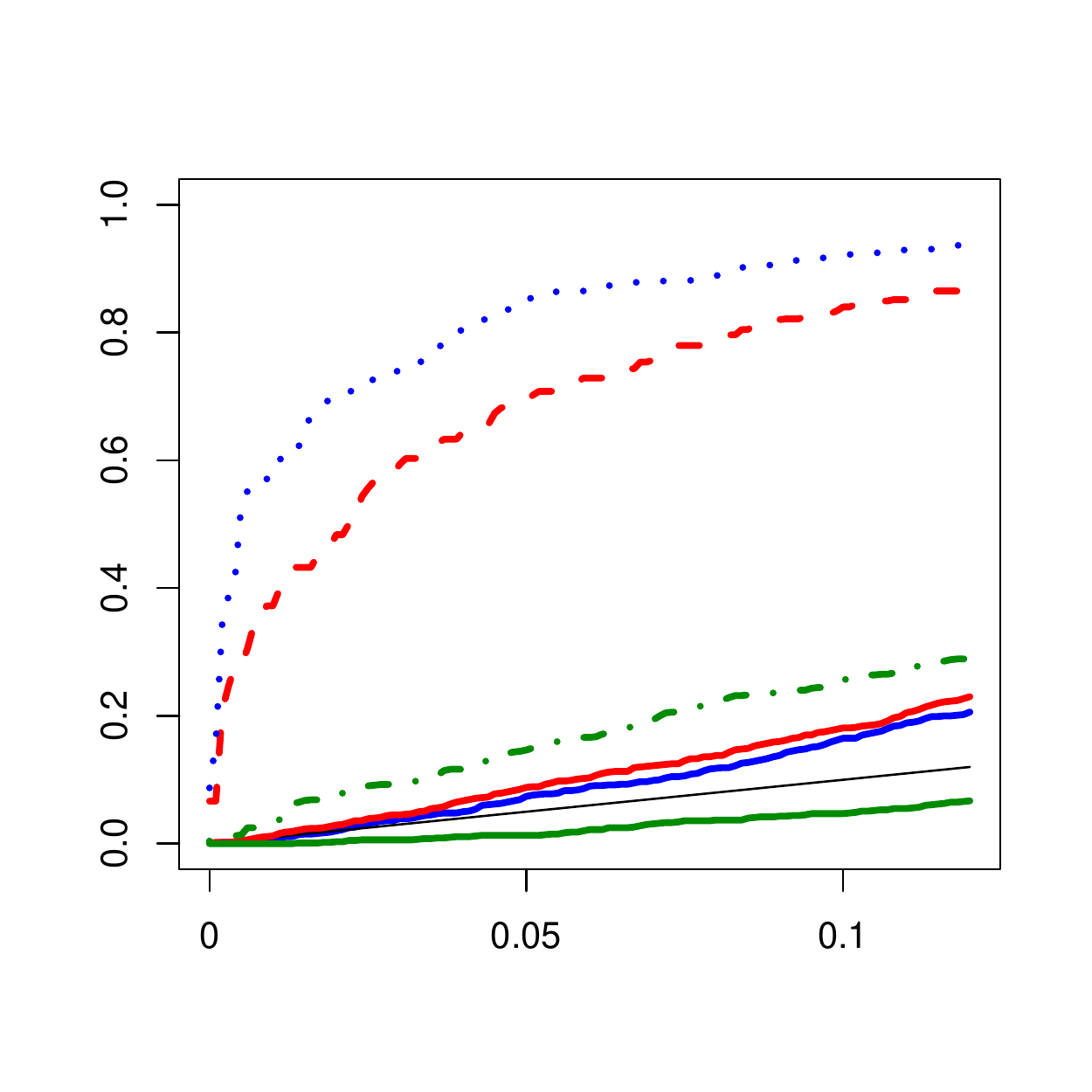}

\end{subfigure}
\begin{subfigure}[t]{0.32\textwidth}
\centering
	\includegraphics[width=\textwidth]{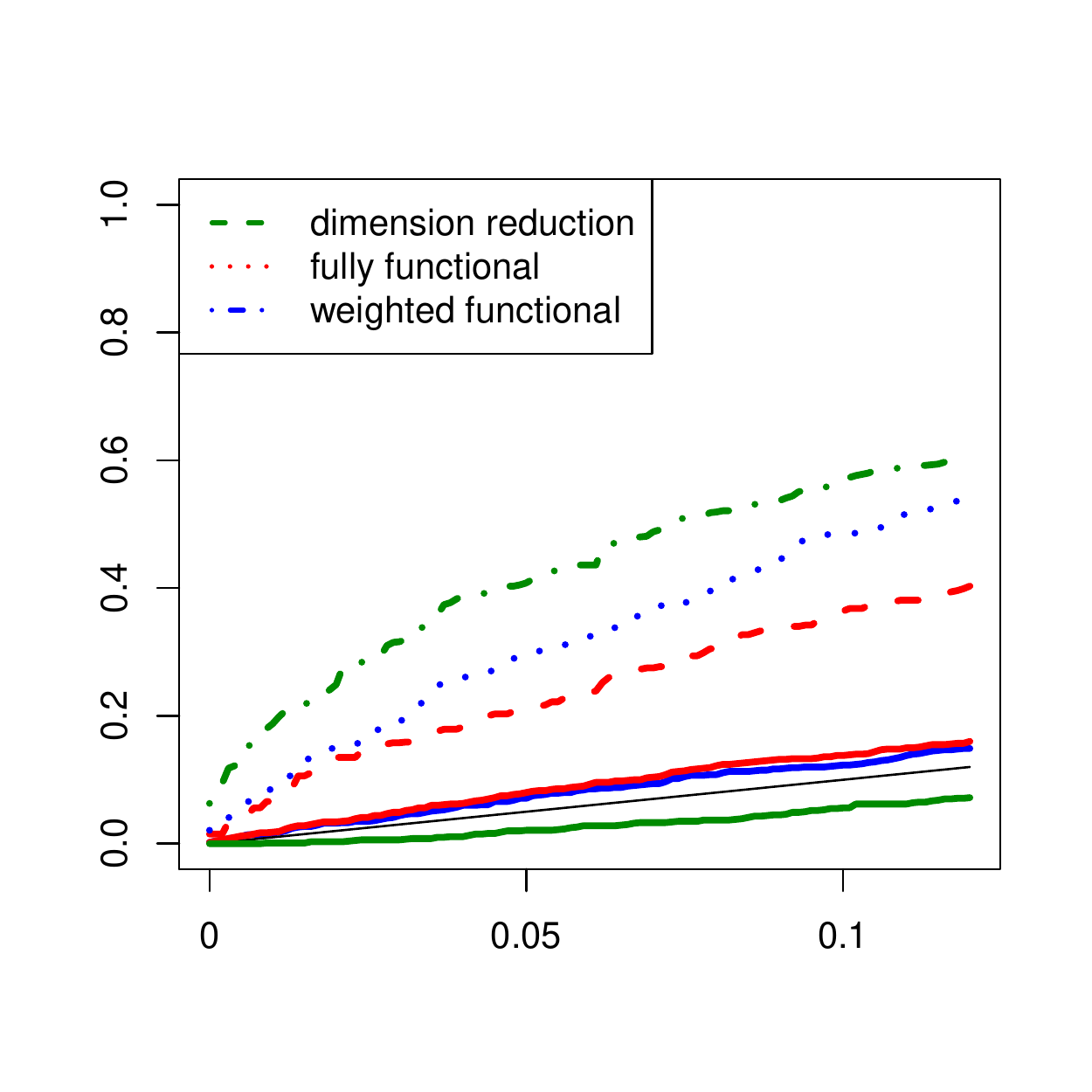}
	\caption{Setting 2, $\sigma_{\epsilon}=0.2,m=2,25,50$}
	\end{subfigure}
\hfill
\begin{subfigure}[t]{0.32\textwidth}
	\centering
	\includegraphics[width=\textwidth]{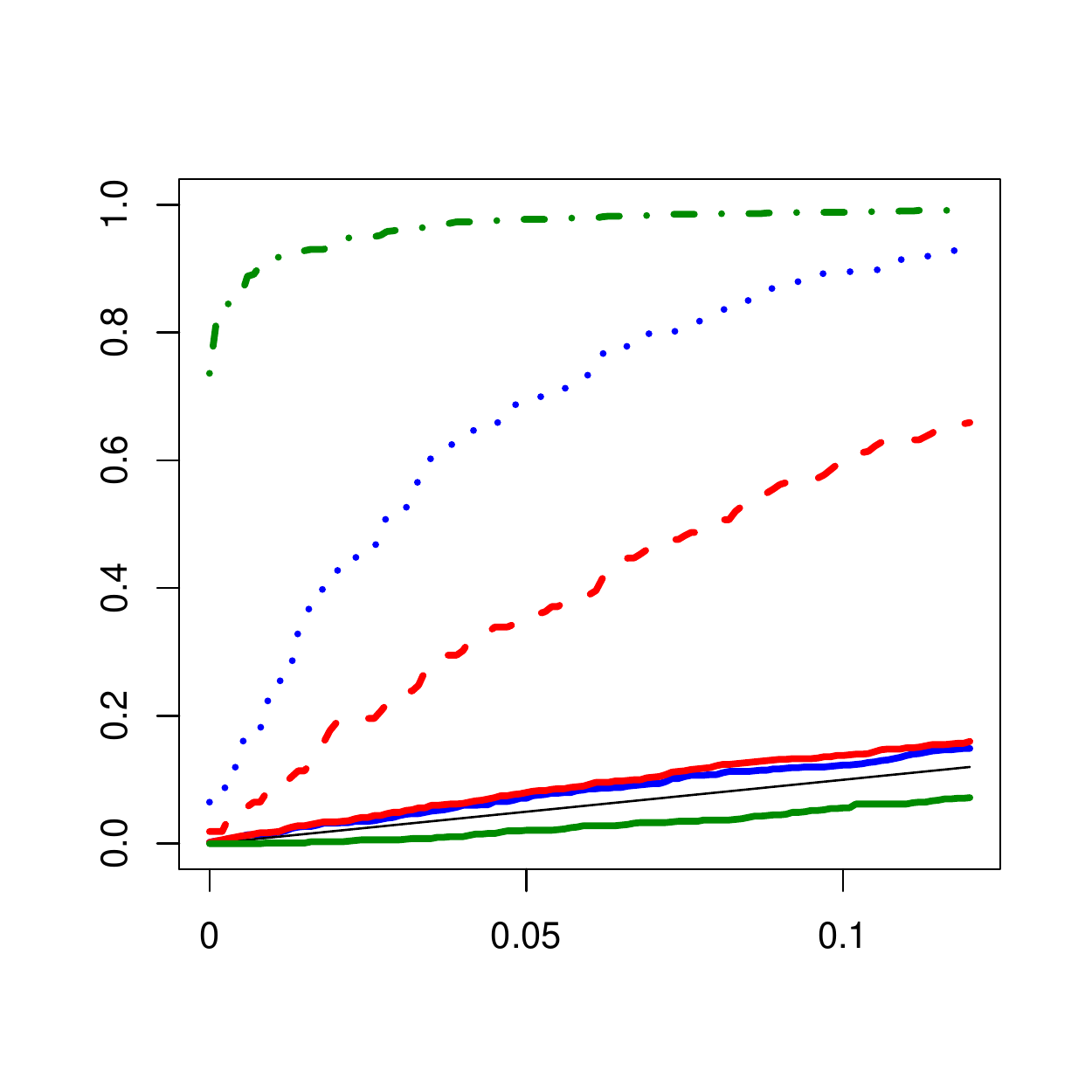}

\end{subfigure}
\hfill
\begin{subfigure}[t]{0.32\textwidth}
	\centering
	\includegraphics[width=\textwidth]{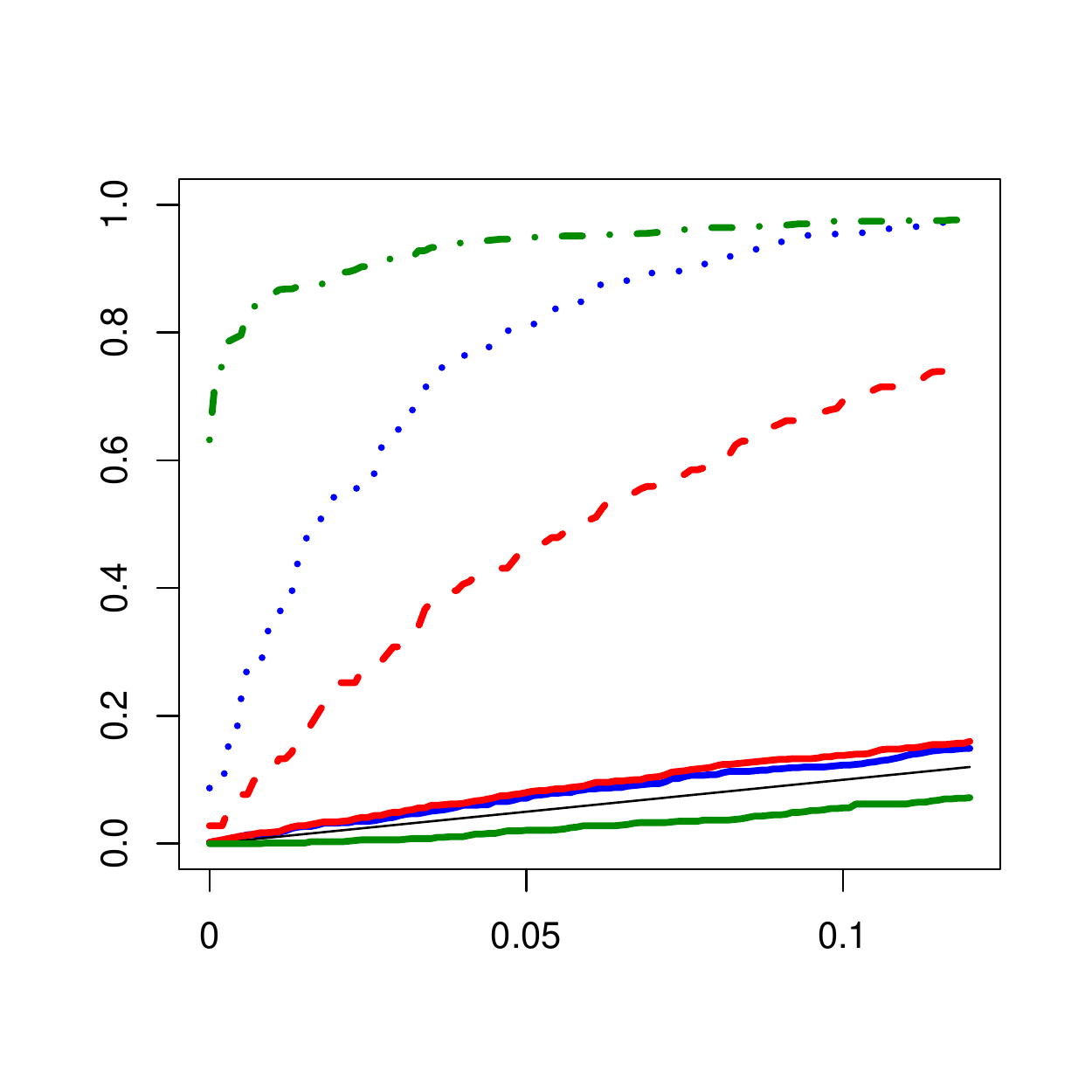}

\end{subfigure}
\begin{subfigure}[t]{0.32\textwidth}
\centering
	\includegraphics[width=\textwidth]{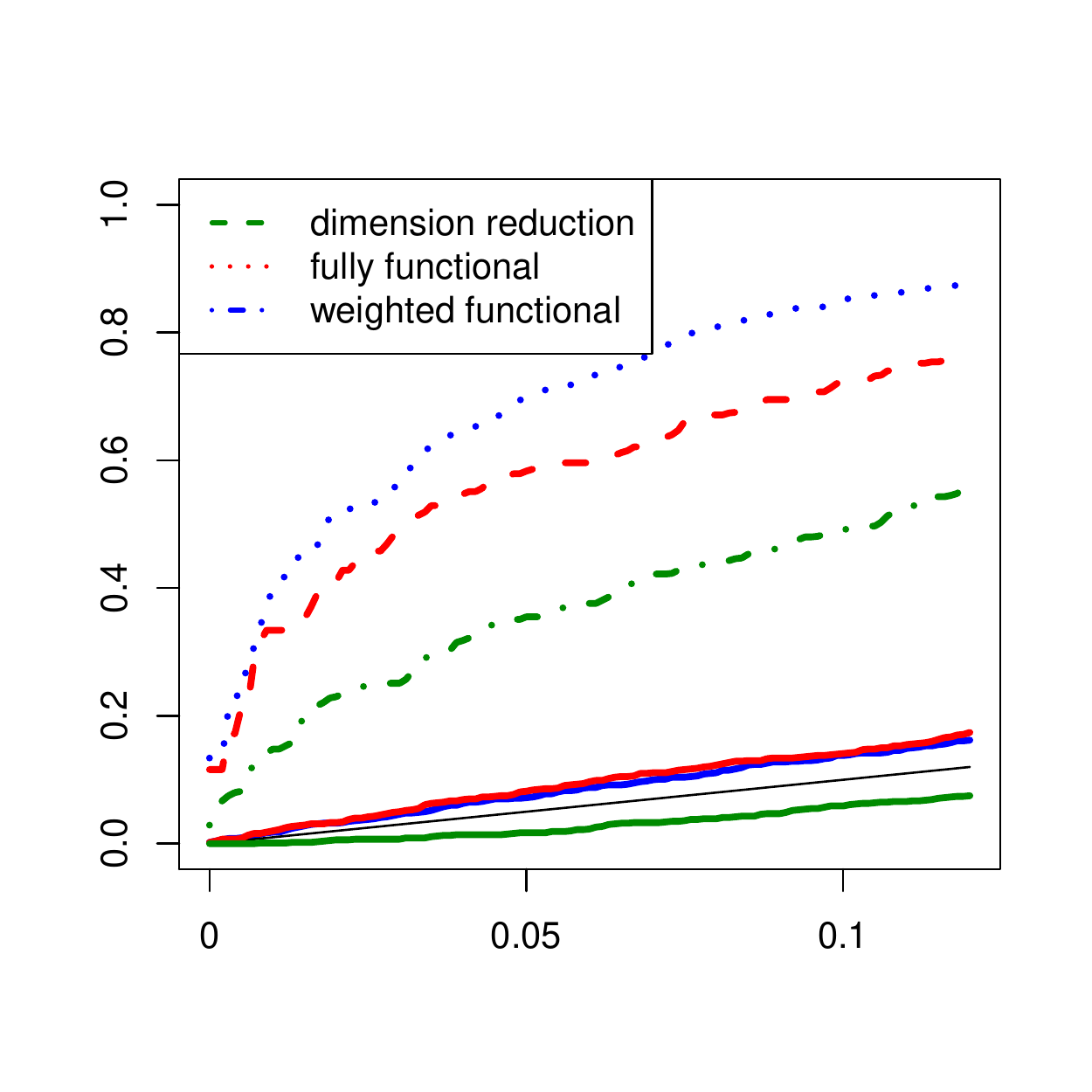}
	\caption{Setting 3, $\sigma_{\epsilon}=0.8,m=2,25,50$}
	\end{subfigure}
\hfill
\begin{subfigure}[t]{0.32\textwidth}
	\centering
	\includegraphics[width=\textwidth]{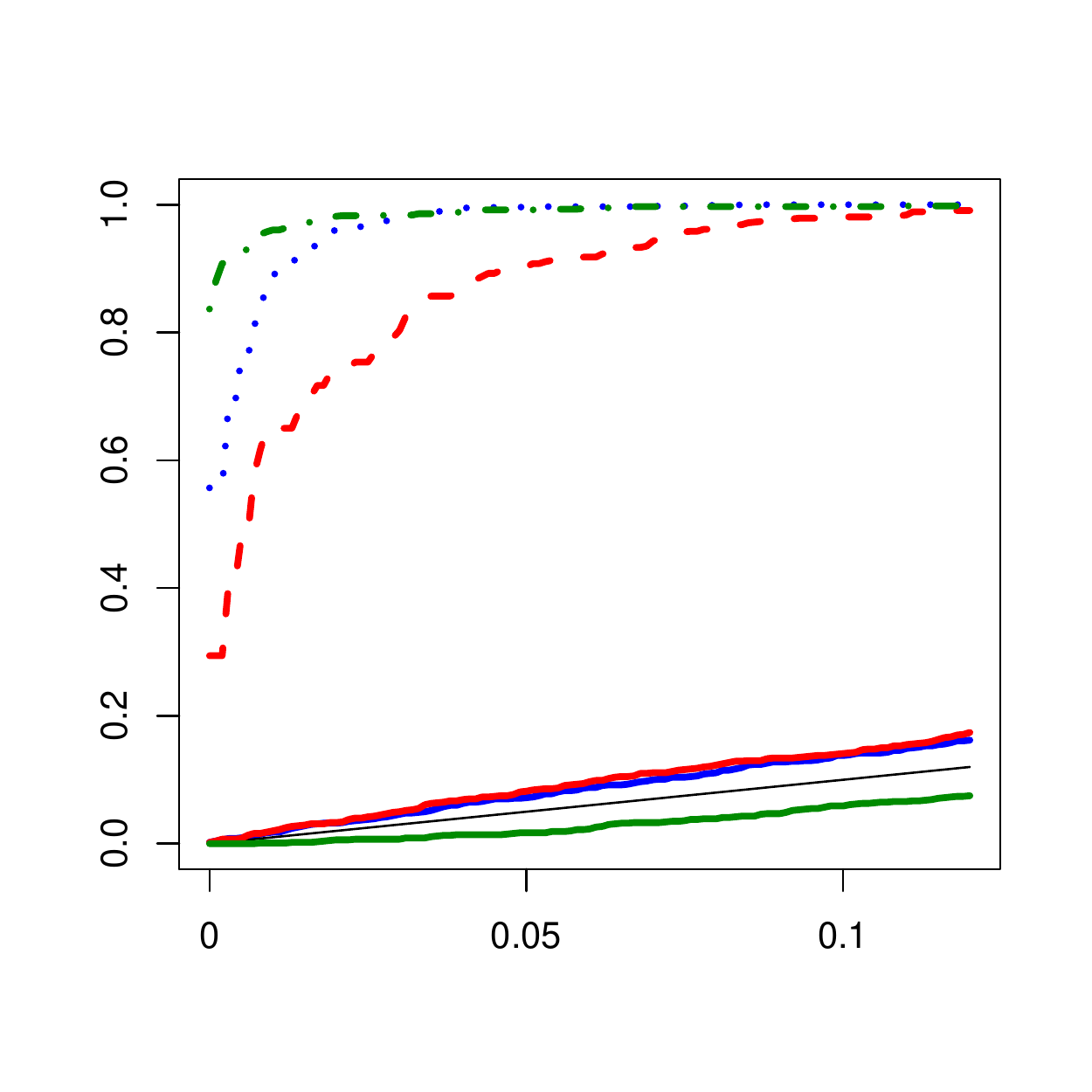}

\end{subfigure}
\hfill
\begin{subfigure}[t]{0.32\textwidth}
	\centering
	\includegraphics[width=\textwidth]{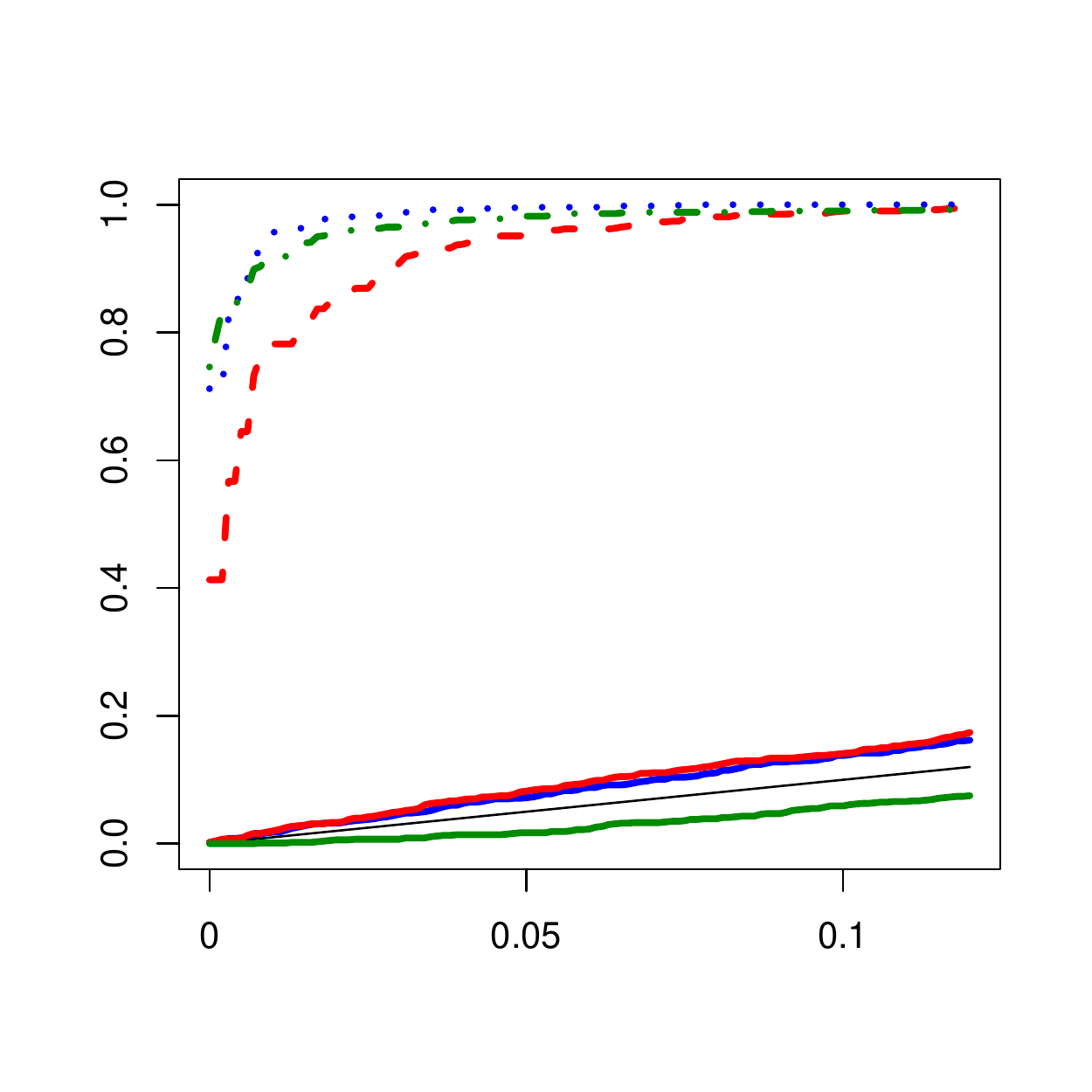}
\end{subfigure}
\caption{Empirical size (solid lines) and size corrected power (dashed lines) of the proposed procedures with $K=6$ for functional autoregressive time series using the multivariate procedure after dimension reduction based on \eqref{tkd} (green), the fully functional procedure based on \eqref{tkf} (red) and the weighted functional procedure based on \eqref{tkw} (blue).}
\label{fig.sim.far}
\end{figure}
Functional autoregressive time series $X_t=\Psi(X_{t-i})+e_t$ are simulated where the linear operator $\Psi$ can be represented as a $D\times D$-matrix that is applied to the coefficients of the basis representation via $\{v_1,\ldots,v_{55}\}$ (for further details see \cite{Auesim}). In this simulation study we use the operator with $0.4$ on the diagonal and $0.1$ on the superdiagonal and the subdiagonal which has infinity norm $0.6$ such that the resulting functional autoregressive time series is stationary. A covariance change at the time point $0.5n$  is inserted in the first $m$ leading eigendirections for $m=2,25,50$ by adding a common additive noise term $\epsilon_{t,l}=\epsilon_t$ with variance $\sigma_{l_1,l_2}=\frac{\sigma_{\epsilon}^2}{m}$ according to Example \ref{ex2}. The variance of the noise term is chosen such that $\int\int\delta^2(u,s)du\,ds=1$ for all m. In view of the application to fMRI data in Section \ref{appl} the multivariate procedure is applied to the projections on the subspace spanned by the first 8 eigendirections of the empirical covariance function. The plots in Figure \ref{fig.sim} show the empirical size and the size corrected power for the different procedures obtained based on N=1000 repetitions with B=1000 bootstrap iterations each.\\

The multivariate procedure is very conservative in all settings whereas the size of the functional procedures is mostly larger but closer to the nominal level except for setting 1. However, it should be mentioned that for independent data (see Figure \ref{fig.sim} in the supplementary material) all procedures keep the level very well in all settings. 
As expected by construction, the procedure based on PCA fails to detect the change in setting 1 for increasing m as most of the change is orthogonal to the first 8 eigendirections which are still dominating the contaminated covariance kernel. The advantage of the procedures which take the full functional structure into account is clearly visible here. The opposite power behavior can be observed for the fast decay of eigenvalues in setting 2, where the procedure based on PCA is superior to the functional procedures. In particular the unweighted functional procedure has problems to detect the change in this setting. In setting 3, the functional procedures have good power for all choices of m. In applications where one aims to explain a large amount of the variability of the data via PCA, a slow decay of eigenvalues as in setting 3 usually leads to a bad performance. However, this is not true when PCA is applied for change point detection if the change leads to an increased variability in the affected directions which is true for the alternative in this simulation study. Hence, directions which are affected by the change but orthogonal to the uncontaminated subspace are more likely to be chosen by PCA if the eigenvalues are flat. This effect can be observed when comparing the power of the multivariate procedure for $m=50$ in setting 2 and 3. For $m=2$ the power of the multivariate procedure is slightly better in setting 2 than in setting 3 as for the fast decay of eigenvalues the change occurs in those eigendirections which already clearly dominate in the uncontaminated subspace. Across all situations considered in this simulation study except for setting 1 with $m=2$ the weighted functional procedure outperforms the unweighted functional procedure. Hence, the weighted functional procedure behaves not only as a compromise between the other two procedures but even more as a promising improvement of the unweighted functional approach.
\subsection*{Discussion}
The above simulation study reveals that the functional test statistics with critical values obtained by the block bootstrap described in Section \ref{resampling} can be liberal for dependent data. This is mostly a small sample effect which did not occur in simulations of longer time series (T=500). However, even for the sample size in the present simulations, the size is reasonable up to a nominal level of $5\%$ and the procedures seem to be suitable for the purpose of the application in this paper. We do not expect them to cause too many false rejections and we are in particular interested in a good power behavior in order to avoid nonstationarities to contaminate subsequent analyzes. For future work, it would be of great interest to investigate the mathematical validity of this bootstrap approach as well as develop procedures that improve the behavior of the functional procedures for dependent data and can also deal with stonger dependency structures.

\section{Application to resting state fMRI data}\label{appl}
In \cite{fda2} a subset of 198 scans from the \textit{1000 Connectome Resting State Data}\footnote{The data is publicly available from the International Neuroimaging Data-Sharing Initiative (INDI) at \url{http://fcon_1000.projects.nitrc.org}.} which have all been recorded at the same location (Beijing, China) are tested for an epidemic mean change. We test for deviations from covariance stationarity in those 118 data sets among these where no epidemic mean change was detected at a level of $5\%$ in the previously mentioned work. Each scan consists of a three-dimensional image of size $64\times 64\times 33$ ($\sim 10^5$ voxels) recorded every 2 seconds at 225 time points. As usual in fMRI data analysis, each data set is preprocessed by voxelwise removing a polynomial trend of order 3 to correct for technical effects as for example scanner drift. We apply the separable covariance estimation and for the multivariate procedure we reduce the dimension by projecting on the 8-dimensional subspace obtained by taking the first two eigenfunctions in each direction.
\subsection{Implementation of the functional procedures}
In practice, the sums in (\ref{tkf}) and (\ref{tkw}) are finite as we cut after the number N of strictly positive eigenvalues obtained by principal component analysis. In the above simulation study we obtained $N\approx 100$ but for the fMRI data sets the separable covariance estimation yields $N\approx 10^5$. For the functional test statistics all combinations of the score components have to be taken into account. As the number of those combinations is of order $10^{10}$ this is computationally infeasible, in particular with regard to the fact that the test statistic also has to be calculated for every single bootstrap sample. However, as the variances of the score products rapidly decrease, most of the score products only have a negligible influence on the value of the test statistic in comparison to those with large variances and can thus be omitted. 
\begin{figure}[H]
\begin{minipage}[t]{0.48\textwidth}
	\centering
	\includegraphics[width=\textwidth]{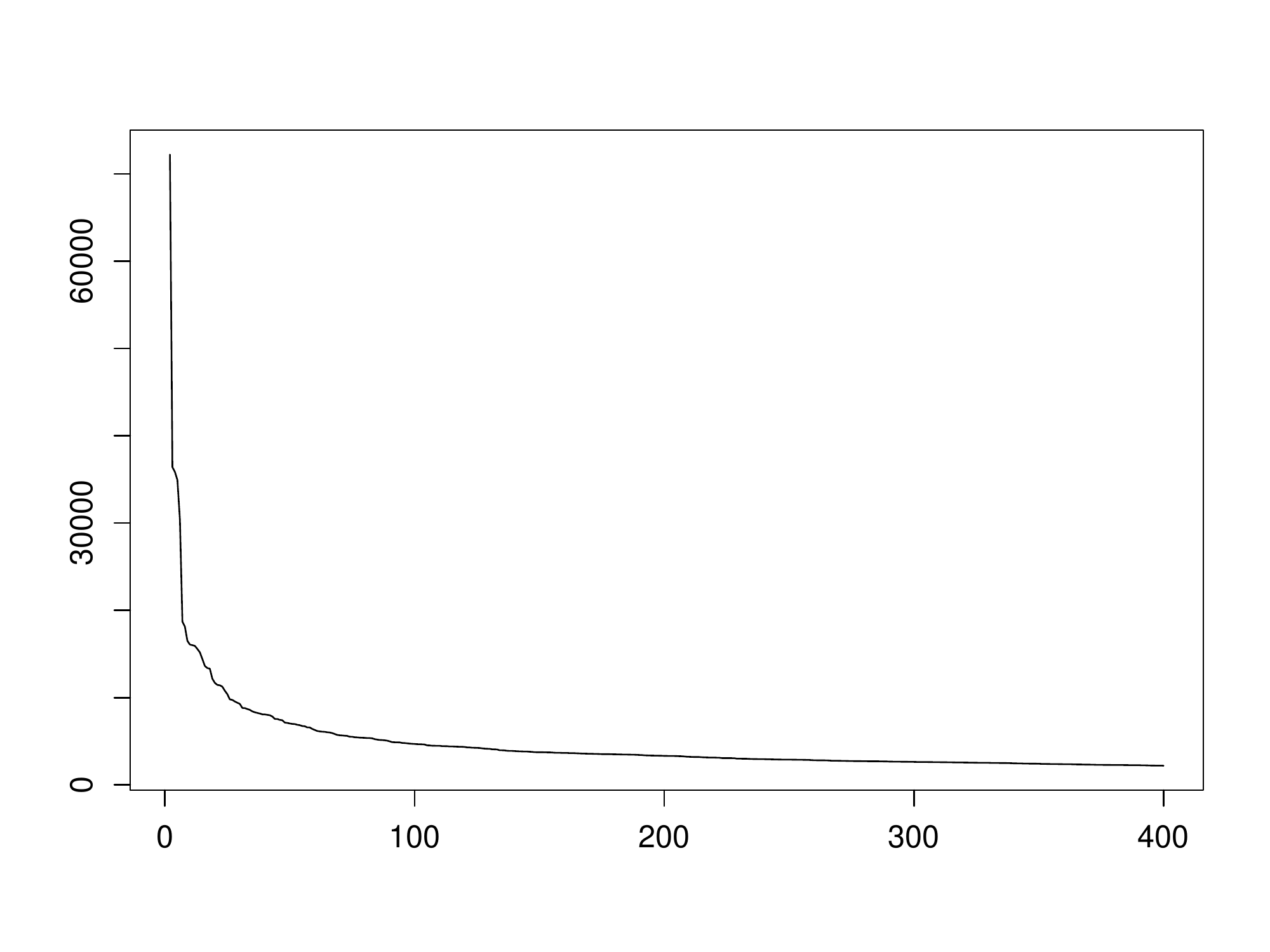}
	\caption{sub06880: 2nd to 200 largest variance of score products in decreasing order.}
	\label{varprod}
\end{minipage}
\hfill
\begin{minipage}[t]{0.48\textwidth}
	\centering
	\includegraphics[width=\textwidth]{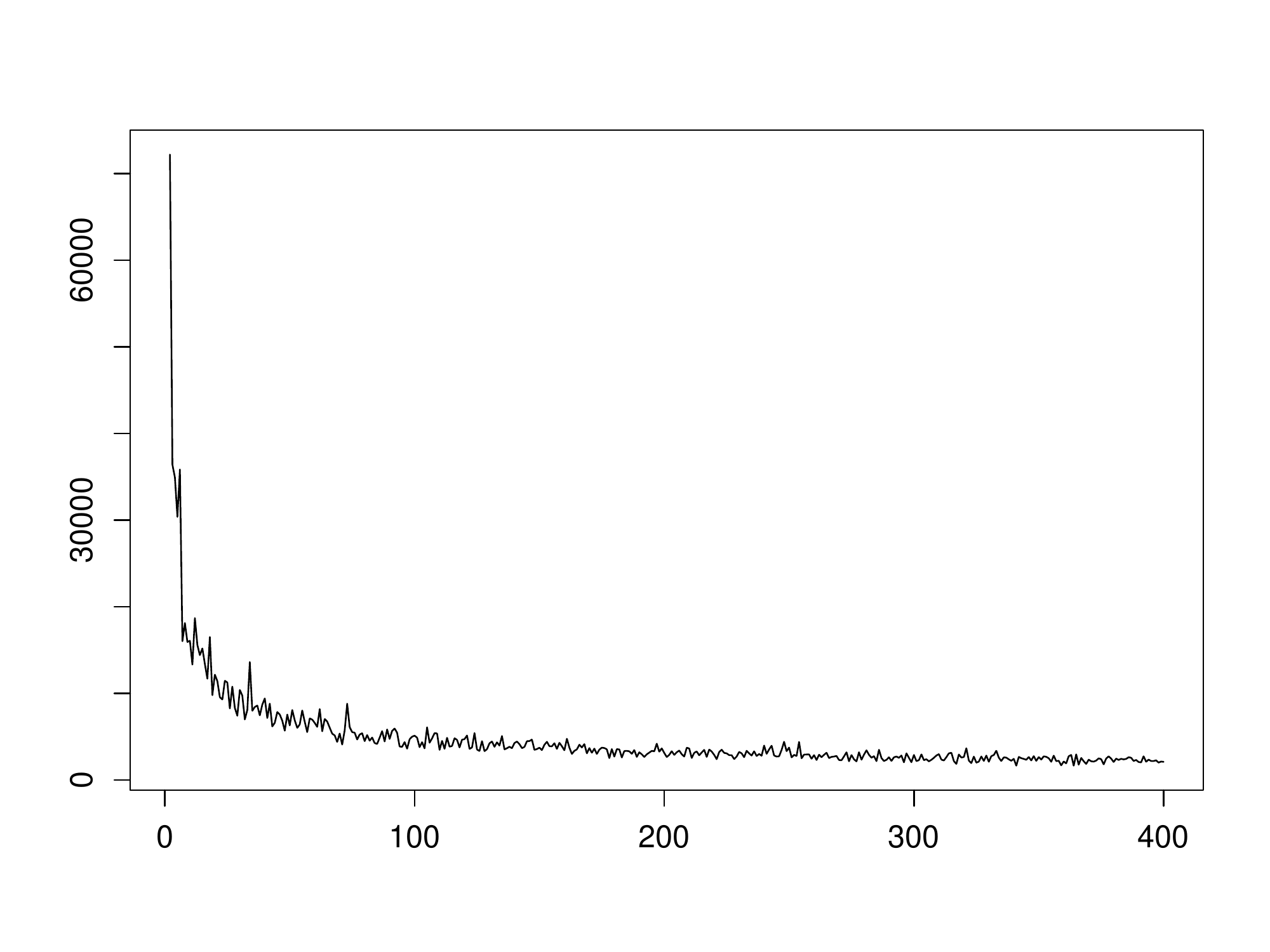}
	\caption{sub06880: 2nd to 200 largest variance of score products ordered according to their approximations given by the products of the variances of the single components.}
	\label{varest}
\end{minipage}
\end{figure}
Figure \ref{varprod} shows the 200 largest variances of the score products after correcting for a possible change in decreasing order exemplarily for one subject. The variance of the first squared score component is approximately 10 times larger than the second one and is thus omitted in this plot for a better visibility. It can clearly be seen that the variances strongly decrease and quickly level off at a magnitude which is only a small fraction of the larger variances. We make use of this observation to solve the computational problem discussed above where the main idea is to only consider those score products that have a sufficiently large variance compared to the variance of the first squared score component. However, estimating this ratio by calculating the empirical variance of the residuals for each of the $10^{10}$ combinations is still very time consuming. Therefore, we use a preselection step where we predict which combinations could possibly exceed a certain threshold based on the variances of the single score components. More precisely, we proceed as follows:
\begin{itemize}
\item[(1)] For each $l_1,l_2 =1,\ldots,N$ calculate
$$\hat{r}_{l_1,l_2}=\begin{cases}
\frac{s_{l_1}s_{l_2}}{2s_{1}^2},&l_1\neq l_2,\\
\frac{s_{l_1}^2}{s_1^2},&l_1= l_2
\end{cases}\quad\mbox{with}\quad s_{l}^2=\frac{1}{n-1}\sum_{t=1}^n\tilde{\eta}_{l}(t)^2,$$
where $\tilde{\eta}_{l}(t)$ is the estimated residual of $\hat{\eta}_{t,l}$ obtained as in (\ref{res}). Determine for $\epsilon_1=0.0005$ $$P:=\{(l_1,l_2):l_1,l_2=1,\ldots,N,\hat{r}_{l_1,l_2}\geq\epsilon_1\}.$$ This estimation of the ratio is based on the Gaussian approximation as given in (\ref{normalsig}). While this is only correct in the Gaussian case and if the separability assumption is correct, according to some preliminary analysis (see Figure \ref{varest}) it at least approximates the order of magnitude in the misspecified case. Figure \ref{varest} shows the variances of the score products ordered according to their approximations given by the products of the variances of the single components.
\end{itemize}
\begin{itemize}
\item[(2)] Perform the following steps for each $(l_1,l_2)\in P$:
\begin{itemize} 
\item[(2.1)] Estimate the ratio nonparametrically (without relying on Gaussanity or the separability assumption) by
$$r_{l_1,l_2}=\frac{s_{l_1,l_2}^2}{s_{1,1}^2}\quad\mbox{with}\quad s_{l_1,l_2}^2=\frac{1}{n-1}\sum_{t=1}^n\left(\widetilde{\eta_{l_1}\eta_{l_2}}(t)\right)^2,$$
where $\widetilde{\eta_{l_1}\eta_{l_2}}(t)$ is the estimated residual of the product $\hat{\eta}_{t,l_1}\hat{\eta}_{t,l_2}$ obtained analogously to (\ref{res}).
\item[(2.2)] If $r_{l_1,l_2}\geq\epsilon_2=0.0025$ continue with step (2.3), else skip this combination and continue with step (2.1) for the next combination.
\item[(2.3)] Update
\begin{align*}
T_k^W&=T_k^W+\frac{1}{s_{1,1}^2+\hat{\gamma}^2_{l_1,l_2}}\left(\sum_{t=1}^k(\hat{\eta}_{t,l_1}\hat{\eta}_{t,l_2}-\overline{\hat{\eta}_{l_1}\hat{\eta}_{l_2}})\right)^2,\quad k=1,\ldots,n\\
T_k^F&=T_k^F+\left(\sum_{t=1}^k(\hat{\eta}_{t,l_1}\hat{\eta}_{t,l_2}-\overline{\hat{\eta}_{l_1}\hat{\eta}_{l_2}})\right)^2,\quad k=1,\ldots,n.
\end{align*}
with $\hat{\gamma}^2_{l_1,l_2}=\frac{1}{n}\sum_{j=0}^{L-1}\left(\sum_{k=1}^K\widetilde{\eta_{l_1}\eta_{l_2}}(Kj+k)\right)^2$, where K is the block length of the respective bootstrap procedure and $L:=\left\lfloor\frac{n}{K} \right\rfloor$.
\end{itemize}
\item[(3)] Calculate the test statistics: $\Omega_n^W=\frac{1}{n}\sum_{k=1}^nT_k^W\quad\mbox{and}\quad\Omega_n^F=\frac{1}{n}\sum_{k=1}^nT_k^F.$
\end{itemize}
We additionally applied the procedure with $\epsilon_2=0.005$ to the resting state fMRI data and the results are similar to those obtained for $\epsilon_2=0.0025$. Hence, there is no need to further reduce the threshold as there is already no considerable loss of information when reducing it from $0.005$ to $0.0025$. In the preselection step we find those combinations for which $\hat{r}_{l_1,l_2}\geq \epsilon_1$ with a very conservative threshold $\epsilon_1=0.0005$. In the above example the predicted ratio $\hat{r}_{l_1,l_2}$ is at most $1.2$ times larger than the actual ratio such that  $\epsilon_1=0.0005$ is indeed very conservative. We calculate the critical values analogously to the bootstrap procedure described in Section \ref{resampling}. For the weighted procedure the long-run variances are estimated for each bootstrap sample with the block estimator as in step (4) whereas we keep the variance of the first squared score component fixed.
\subsection{Results}\label{results}
In this section, we refer to the $p$-values obtained for $\epsilon_1=0.0005, \epsilon_2=0.0025$ and a blocklength of $K=\sqrt[3]{225}\approx 6$. The results of the data analysis are illustrated exemplary by the score products of certain subjects as a change in the covariance structure is visible as a mean change in the products which is indicated by the black line in the plots. However, as the functional procedures are, on average, based on around 10000 score products, the plots are limited to the 64 most significant products in the sense of having the smallest $p$-values which are obtained by componentwise calculating the $p$-values for the weighted functional statistic based on the respective bootstrap components. The main findings of the data analysis can be summarized as follows, for further details see \ref{results.app} in the supplementary material:
\begin{itemize}
\item When testing for the AMOC alternative at a level of $5\%$, the null hypothesis of covariance stationarity is rejected for $43\%$ of the data sets by the multivariate procedure, for $39\%$ by the unweighted functional procedure and for $36\%$ by the weighted functional procedure. The functional procedures always lead to similar results whereas the multivariate procedure implies different test decisions in some cases. Those deviation occur in both directions and are explained in more detail in the supplementary material. As an example, in sub12220 a covariance change is detected by all considered procedures with $p$-values of at most $0.001$.  Figure \ref{amoc.example1} shows the 64 most significant components of the score products for the weighted functional procedure. The estimated global change point is $\hat{k}^*=57.$
\item There are some data sets with epidemic changes. For example, sub08816 is not significant when testing for the AMOC alternative with a $p$-value of $0.11$ for the multivariate procedure and at least $0.36$ for the functional procedures whereas the test for the epidemic alternative yields $p$-values which are smaller than $0.04$ for the functional procedures. Figure \ref{ep.example2} shows the 64 most significant components of the score products for the epidemic alternative. The respective plots for the AMOC alternative can be found in Figure \ref{ep.example1} in the supplementary material.
\item Some data sets contain outliers which cause the rejection of the null hypothesis. For example, testing for an epidemic change in sub08992 yields $p$-values smaller than $0.05$ for all considered procedures. Figure \ref{out.example2} reveals that the procedure picks the outlier as epidemic change in form of a very small interval. The mean of this interval is obviously much larger than the mean of the remaining observations and additionally always at the same position determined by the outlier such that the test for an epidemic change is significant. Although, in this case, the rejection of the null hypothesis is not due to an actual change in the covariance structure, an outlier constitutes a deviation from stationarity which contaminates the subsequent analyzes if they are not robust. On the other hand, if the data is only involved in analyses which require stationarity but are robust against outliers, it would be of interest to have robust change point procedures such as in \cite{dehling2015robust} for the univariate mean change problem. At this point it should be mentioned that even though for the AMOC alternative the null hypothesis is not rejected for sub08992 (see Figure \ref{out.example1} in the supplementary material) the procedures proposed in this work are not robust against outliers as all of them are based on the empirical covariance. For another setting, for example if the outlier occurs rather at the beginning of the observations, the null hypothesis of stationarity might also be rejected for the AMOC alternative which is the case for sub08455 (see Figure \ref{ex08455} in the supplementary material).
\end{itemize}
\begin{figure}[H]
\centering
	\includegraphics[width=0.8\textwidth]{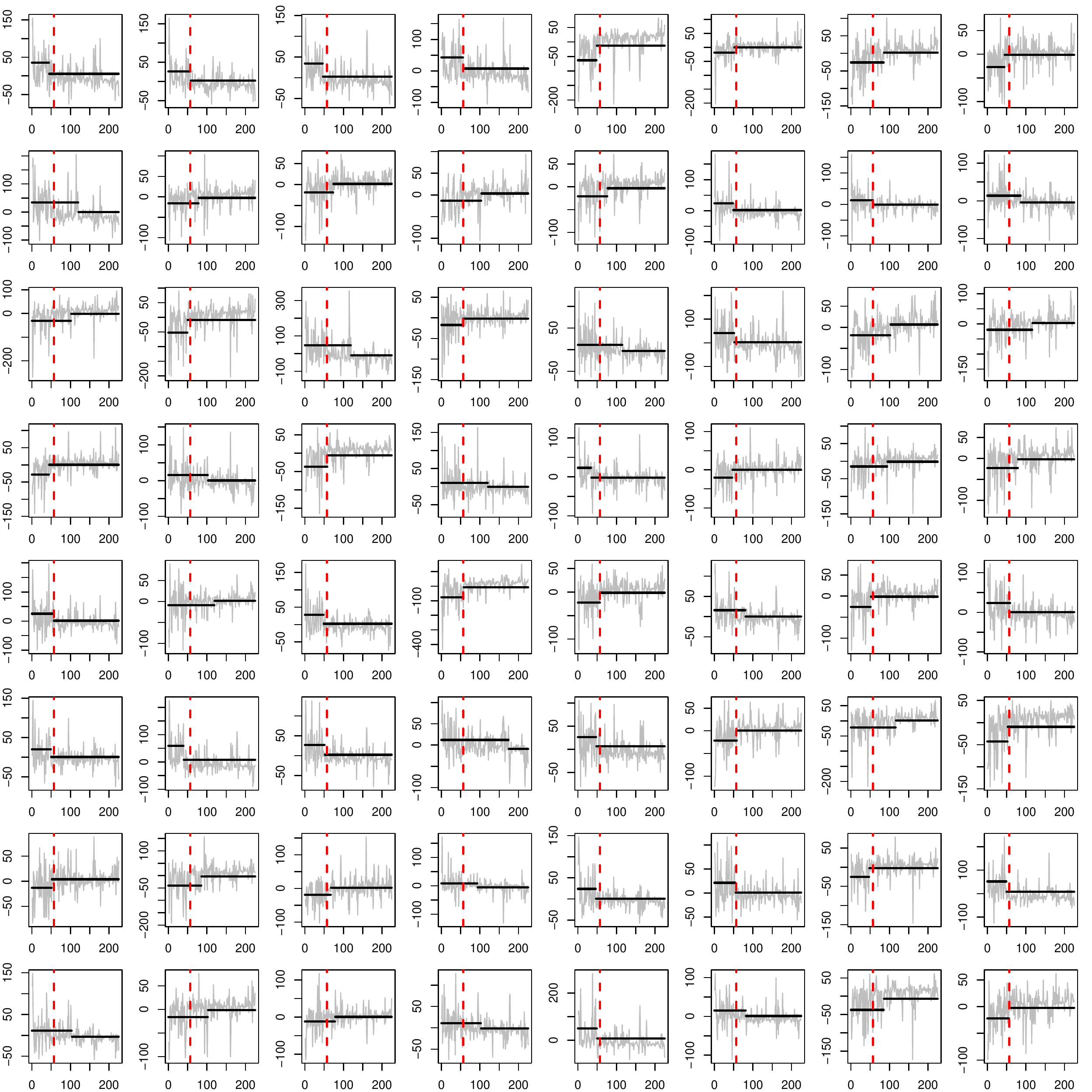}
	\caption{sub12220: 64 score products with the smallest $p$-values for the weighted functional statistic when testing for the AMOC alternative. The global estimated change is $\hat{k}^*=57$ (dashed line).}
	\label{amoc.example1}
\end{figure}
\begin{figure}[H]
\centering
	\includegraphics[width=0.8\textwidth]{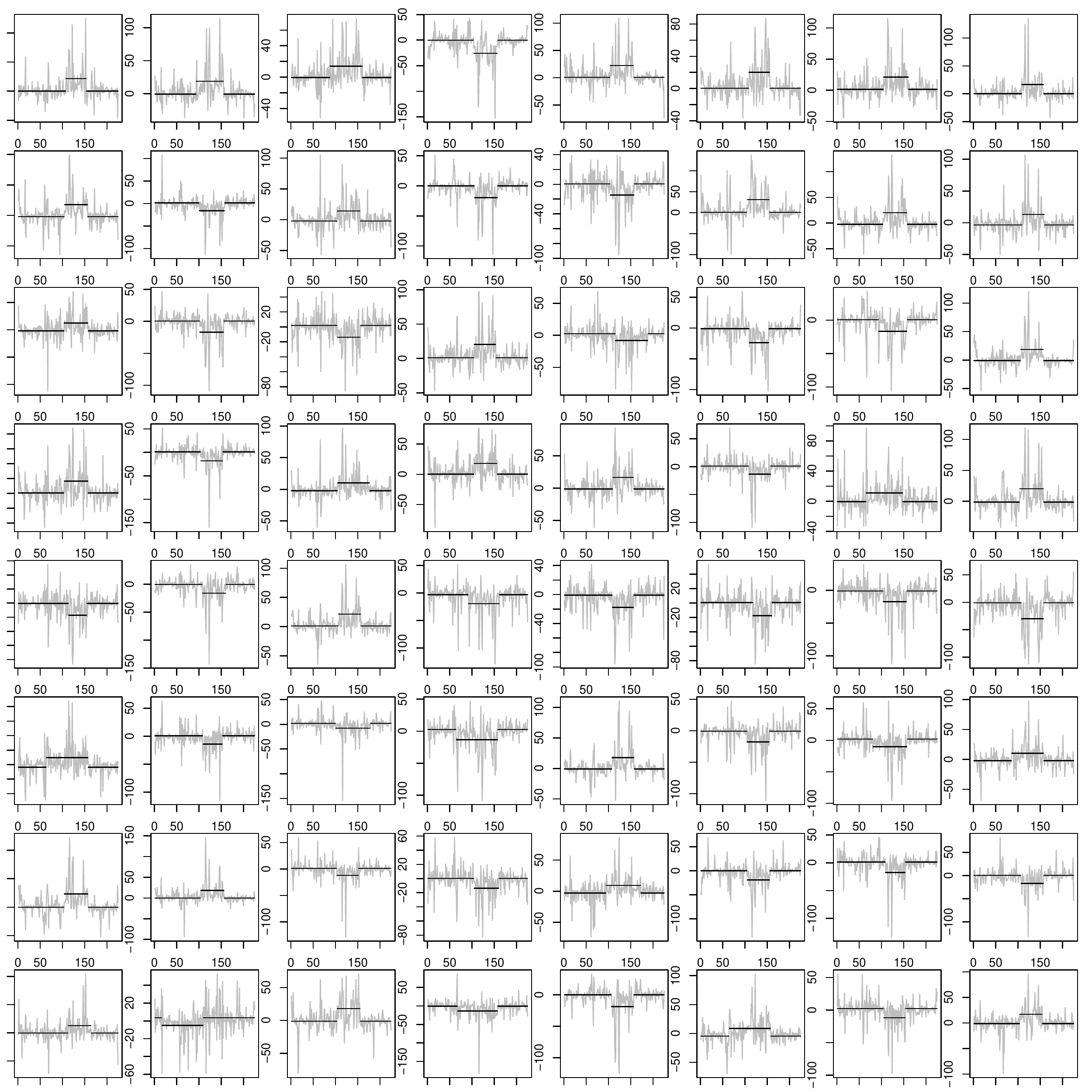}
	\caption{sub08816: 64 score products with the smallest $p$-values for the weighted functional statistic when testing for the epidemic alternative.}
	\label{ep.example2}
\end{figure}
\begin{figure}[H]
\centering
	\includegraphics[width=0.8\textwidth]{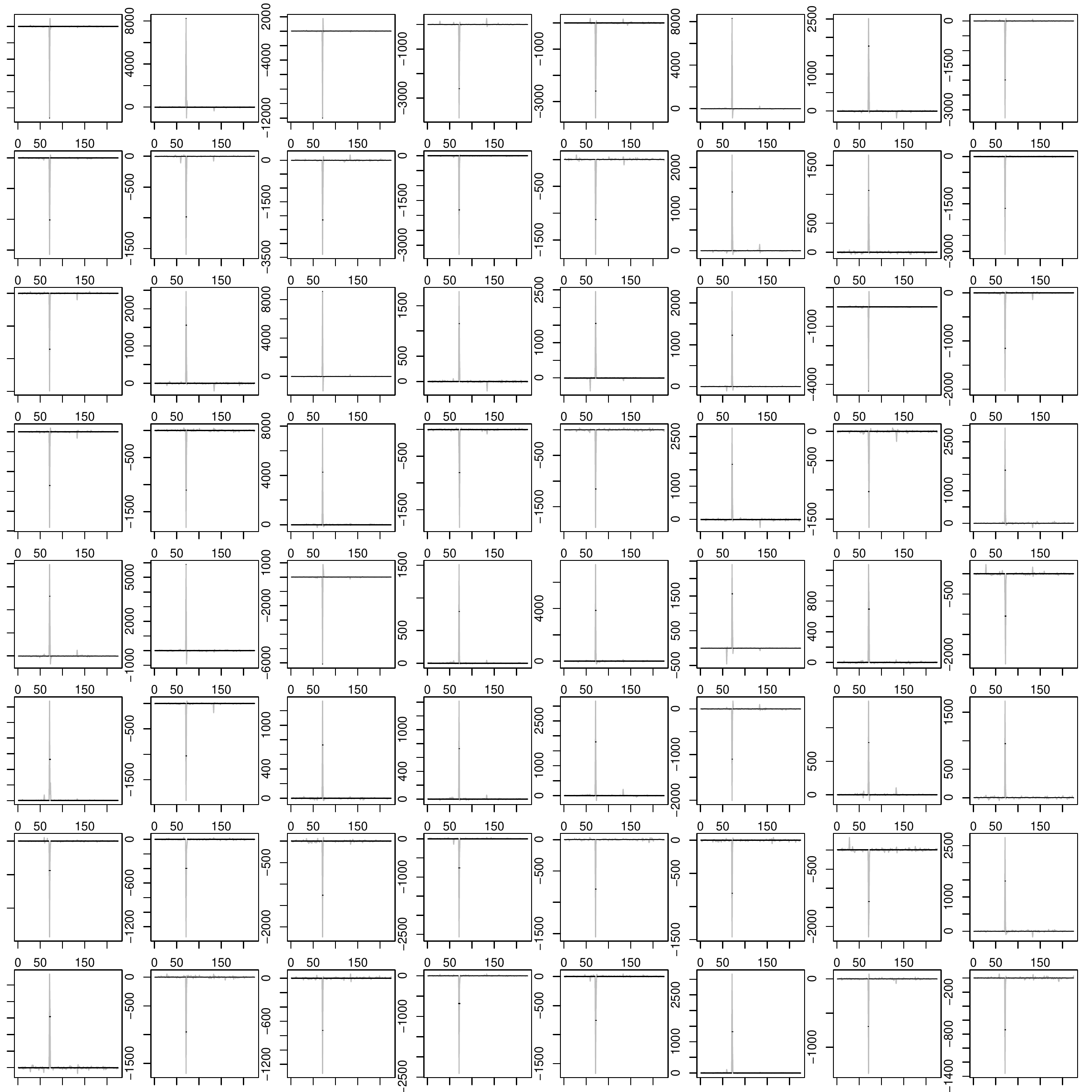}
	\caption{sub08992: 64 score products with the smallest $p$-values for the weighted functional statistic when testing for the epidemic alternative.}
	\label{out.example2}
\end{figure}

\section{Concluding remarks}\label{Conclusions}
In this paper, different methods for detecting devtiations from stationarity in the covariance structure of functional time series have been introduced and investigated with the main focus on applications to fMRI data. Dimension reduction via projections is a very common approach in functional time series analysis and enables the application of a multivariate change point procedure. We derived the asymptotic distribution of the test statistic based on the projection scores for the AMOC alternative as well as for the epidemic alternative. This asymptotic procedure requires the estimation of the long-run covariance which is statistically unstable but can be avoided by using resampling procedures. We applied a circular block bootstrap to obtain the critical values for an adapted test statistic where we only correct for the diagonal elements of the long-run covariance. This gave us a reasonable approach for detecting changes in the covariance structure of fMRI data which, however, comes with the risk of missing changes that are orthogonal to the projection subspace. As alternative solution we provided two test statistics which both take the full functional structure into account and differ with respect to the weighting. The unweighted functional test statistic has been derived from the $L^2$-norm of the functional partial sum process without additional weights. In contrast to that, the weights in the multivariate procedure correct for differenct variances of the components. We incorporate this idea into the functional approach by proposing the weighted functional test statistic. Simulations confirmed that this statistic indeed improves the unweighted functional procedure in different situations and is thus a very promising approach for the detection of change points in functional data analysis, not only for detecting changes in the covariance as considered in this paper but, in an analogous version, also for the mean change problem. A mathematical investigation of this test statistic, as for example deriving the asymptotic distribution, will be of future interest. While the validity of the multivariate block bootstrap has been proven in \cite{Weber}, it still has to be shown for the functional procedures. However, the simulation study already indicates their reasonable performance. The application of the proposed methods to resting state fMRI data has shown that taking possible nonstationarities in the covariance structure into account is crucial. Although we only considered data sets where no mean change was detected the null hypothesis of covariance stationarity was still rejected in more than one third of the cases. Many of those nonstationarities have been detected when testing for the AMOC alternative while in some cases the epidemic alternative seemed to be more appropriate. For some data sets, the null hypothesis was rejected due to outliers, so that the development of more robust methods is of future interest.

\subsection*{Acknowledgements}
This work was supported by the Deutsche Forschungsgemeinschaft (DFG, German Research Foundation) - 314838170, GRK 2297 MathCoRe. The first author would like to thank the German Academic Exchange Service (DAAD) for supporting her visit in Cambridge. The second author was supported by the Engineering and Physical Sciences Research Council (UK) grants  EP/K021672/2, EP/N031938/1 as well as EP/N014588/1. The authors would like to thank the Isaac Newton Institute for Mathematical Sciences for support and hospitality during the programme Statistical Scalability (supported by EPSRC grant numbers EP/K032208/1 and EP/R014604/1).\\
The authors (especially JA) thank Wenda Zhou (Columbia) for his helpful discussions, particularly of his independent derivation of the change point statistics using PCA for covariances under iid settings that were produced as part of his Cambridge Part III essay. 

\bibliography{FdCov_arxiv}
\appendix
\part*{Supplementary material}
This supplement contains additional technical details, proofs and further results from the data analysis.
\section{Max-type test statistics}\label{ap.max}
An alternative to the sum-type statistics discussed in the main paper are the following max-type statistics. For the procedure based on dimension reduction such a statistic is given by $$\Lambda_n=\max_{1\leq k\leq n}S_k^T\hat{\Sigma}_n^{-1}S_k$$ for the AMOC-alternative and $$\Lambda^{ep}_n=\max_{1\leq k_1<k_2\leq n}S_{k_1,k_2}^T\hat{\Sigma}_n^{-1}S_{k_1,k_2}$$ for the epidemic alternative. The asymptotic distributions under the null hypothesis are stated in the following theorem.
\begin{theorem}\label{nullas.max}
Let $\{Y_t(\cdot)\}$ be $L_m^4-$approximable with $\E\|Y_1(\cdot)\|^4<\infty$.
Then, the following asymptotics hold under the null hypothesis if $\hat{\Sigma}$ is a consistent estimator for the long-run covariance $\Sigma$:
\begin{align*}
\Lambda_n&\stackrel{\mathcal D}{\rightarrow}\sup_{0\leq x\leq1}\sum_{l=1}^{\mathfrak{d}}B_l^2(x)\\
&\mbox{as well as}\\
\Lambda^{ep}_n&\stackrel{\mathcal D}{\rightarrow}\sup_{0\leq x<y \leq1}\sum_{l=1}^{\mathfrak{d}}\left(B_l(y)-B_l(x)\right)^2
\end{align*}
where \(\mathfrak{d}=d(d+1)/2\) and \((B_l(x):x\in [0,1],1\leq l\leq\mathfrak{d})\) are independent standard Brownian bridges.
\end{theorem}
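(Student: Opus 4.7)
The plan is to follow the same skeleton as in Theorem \ref{nullas} and then apply the continuous mapping theorem with the supremum functional in place of the Lebesgue integral. First, I would establish the functional invariance principle
$$\hat{\Sigma}_n^{-1/2}\, S_{\lfloor n\cdot\rfloor}\ \Rightarrow\ B(\cdot)$$
in the Skorokhod space $D([0,1],\mathbb{R}^{\mathfrak{d}})$, where $B=(B_1,\ldots,B_{\mathfrak{d}})$ is a vector of independent standard Brownian bridges. This is exactly the weak convergence used to prove Theorem \ref{nullas}, and it rests on (a) a Hilbert-space FCLT for partial sums of $\vech[\eta_t\eta_t^T]$ under $L_m^4$-approximability with the fourth-moment bound (which ensures square-integrability of the score products), (b) a negligibility argument replacing the estimated eigenfunctions $\hat v_l$ by the true $v_l$ based on eigenvalue separation as in Theorem \ref{nullas}, and (c) Slutsky combined with consistency of $\hat{\Sigma}_n$.

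Second, the mapping $g\mapsto \sup_{0\leq x\leq 1}\|g(x)\|^2$ is continuous on $C([0,1],\mathbb{R}^{\mathfrak{d}})$ and the Brownian bridge limit has a.s.\ continuous paths, so the continuous mapping theorem together with the identity
$$\Lambda_n=\max_{1\leq k\leq n}\bigl\|\hat{\Sigma}_n^{-1/2} S_k\bigr\|^2 = \sup_{0\leq x\leq 1}\bigl\|\hat{\Sigma}_n^{-1/2} S_{\lfloor nx\rfloor}\bigr\|^2$$
immediately yields $\Lambda_n\stackrel{\mathcal{D}}{\to}\sup_{0\leq x\leq 1}\sum_{l=1}^{\mathfrak{d}} B_l^2(x)$. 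For the epidemic statistic, I would apply the same weak convergence jointly to the functional $g\mapsto \sup_{0\leq x<y\leq 1}\|g(y)-g(x)\|^2$, which is likewise continuous on $C([0,1],\mathbb{R}^{\mathfrak{d}})$. Writing $S_{k_1,k_2}=S_{k_2}-S_{k_1}$ gives
$$\Lambda_n^{ep}=\sup_{0\leq x<y\leq 1}\bigl\|\hat{\Sigma}_n^{-1/2}(S_{\lfloor ny\rfloor}-S_{\lfloor nx\rfloor})\bigr\|^2\ \stackrel{\mathcal{D}}{\to}\ \sup_{0\leq x<y\leq 1}\sum_{l=1}^{\mathfrak{d}}\bigl(B_l(y)-B_l(x)\bigr)^2.$$

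The main obstacle, exactly as for Theorem \ref{nullas}, is the initial functional invariance principle for $S_{\lfloor n\cdot\rfloor}$, and in particular the replacement of the empirical eigenfunctions by the population ones while keeping the error $o_P(1)$ uniformly in the partial-sum index. Once that step is granted, the transition from the sum-type limit to the max-type limit is just a change of continuous functional in the continuous mapping theorem, so the argument proceeds almost verbatim from the proof of Theorem \ref{nullas}.
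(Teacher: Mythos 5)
Your proposal is correct and follows essentially the same route as the paper, which proves Theorems \ref{nullas} and \ref{nullas.max} jointly by first establishing $S_{[nx]}\stackrel{D^{\mathfrak{d}}[0,1]}{\rightarrow}B_{\Sigma}(x)$ (via $L_m^4$-approximability of the scores, the replacement of empirical by population eigenfunctions uniformly in the partial-sum index, and the negligibility of the centering) and then applying the continuous mapping theorem with the appropriate functional. Your only addition is making the sup-functional continuity argument explicit, which the paper leaves implicit in its final appeal to continuous mapping.
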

\section{Proofs}\label{proofs}
\begin{theorem}\label{prodv.cons}
Let $\hat{v}_l(\cdot)$ be orthonormal eigenfunctions of $\hat{c}_n(u,s)$ and $\tilde{v}_l(\cdot)$ be orthonormal eigenfunctions of $\tilde{c}(u,s)$, where both sets of eigenfunctions are arranged according to the respective eigenvalues in decreasing order. Furthermore, assume that the eigenvalues of $\tilde{c}(u,s)$ are separated, i.e. $\tilde{\lambda}_1>\tilde{\lambda}_2>\ldots>\tilde{\lambda}_d>\tilde{\lambda}_{d+1}$.
\begin{itemize}
\item[a)] If $\int\int (\hat{c}_n(u,s)-\tilde{c}(u,s))^2du\,ds=o_P(1)$, it holds for $l_1,l_2=1,\ldots,d$
$$\int\int\left(\tilde{g}_{l_1}\tilde{g}_{l_2}\hat{v}_{l_1}(u)\hat{v}_{l_2}(s)-\tilde{v}_{l_1}(u)\tilde{v}_{l_2}(s)\right)^2du\,ds=o_P(1),$$
where $\tilde{g}_l=sgn\left(\int\tilde{v}_l(s)\hat{v}_l(s)ds\right)$.
\item[b)] If $\int\int (\hat{c}_n(u,s)-\tilde{c}(u,s))^2du\,ds=O_P(n^{-1})$, it holds for $l_1,l_2=1,\ldots,d$
$$\int\int\left(\tilde{g}_{l_1}\tilde{g}_{l_2}\hat{v}_{l_1}(u)\hat{v}_{l_2}(s)-\tilde{v}_{l_1}(u)\tilde{v}_{l_2}(s)\right)^2du\,ds=O_P(n^{-1}).$$
\end{itemize}
\end{theorem}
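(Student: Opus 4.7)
The plan is to reduce the tensor-product statement to the already-standard univariate consistency for eigenfunctions of symmetric Hilbert--Schmidt operators and then patch the two factors together by a telescoping identity. Concretely, the key auxiliary fact I would invoke is the following perturbation result (Lemma 2.2/2.3 and Theorem 2.8 in the monograph \cite{HK}, originally due to results for compact operators in \cite{Bosq}): if the eigenvalues of $\tilde{c}$ are simple up to index $d+1$ with strictly positive separation $\alpha_l=\min(\tilde{\lambda}_{l-1}-\tilde{\lambda}_l,\tilde{\lambda}_l-\tilde{\lambda}_{l+1})$, then for $l=1,\dots,d$
\begin{equation*}
\int\bigl(\tilde{g}_l\hat{v}_l(s)-\tilde{v}_l(s)\bigr)^2\,ds\;\le\;\frac{8}{\alpha_l^2}\int\!\!\int\bigl(\hat{c}_n(u,s)-\tilde{c}(u,s)\bigr)^2\,du\,ds.
\end{equation*}
This perturbation bound is purely deterministic and does not require $\tilde{c}$ to be the true covariance; it only needs $\hat{c}_n$ and $\tilde{c}$ to be symmetric, Hilbert--Schmidt, and to have well-separated eigenvalues in the top $d$ positions. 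Consequently, under the assumption of part (a) we immediately obtain $\|\tilde{g}_l\hat{v}_l-\tilde{v}_l\|^2=o_P(1)$, and under part (b) we obtain $\|\tilde{g}_l\hat{v}_l-\tilde{v}_l\|^2=O_P(n^{-1})$ for every $l=1,\dots,d$.

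Next, I would pass from univariate to tensor-product consistency by the standard telescoping
\begin{equation*}
\tilde{g}_{l_1}\tilde{g}_{l_2}\hat{v}_{l_1}(u)\hat{v}_{l_2}(s)-\tilde{v}_{l_1}(u)\tilde{v}_{l_2}(s)
=\bigl(\tilde{g}_{l_1}\hat{v}_{l_1}(u)-\tilde{v}_{l_1}(u)\bigr)\tilde{g}_{l_2}\hat{v}_{l_2}(s)+\tilde{v}_{l_1}(u)\bigl(\tilde{g}_{l_2}\hat{v}_{l_2}(s)-\tilde{v}_{l_2}(s)\bigr).
\end{equation*}
Squaring, using $(a+b)^2\le 2a^2+2b^2$, integrating over $u$ and $s$, and exploiting the orthonormality identities $\int\hat{v}_{l_2}^2=\int\tilde{v}_{l_1}^2=1$, yields
\begin{equation*}
\int\!\!\int\bigl(\tilde{g}_{l_1}\tilde{g}_{l_2}\hat{v}_{l_1}(u)\hat{v}_{l_2}(s)-\tilde{v}_{l_1}(u)\tilde{v}_{l_2}(s)\bigr)^2 du\,ds\le 2\|\tilde{g}_{l_1}\hat{v}_{l_1}-\tilde{v}_{l_1}\|^2+2\|\tilde{g}_{l_2}\hat{v}_{l_2}-\tilde{v}_{l_2}\|^2.
\end{equation*}
Plugging in the univariate rates from the previous step gives the claimed $o_P(1)$ rate in part (a) and the $O_P(n^{-1})$ rate in part (b). The $\tilde{g}_l$ signs are needed because eigenfunctions are only determined up to sign; they cancel neatly in the Fubini-type factorization above.

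The only delicate step is the deterministic perturbation bound in the first paragraph, and I would not reprove it from scratch but cite the Hilbert--Schmidt operator versions in \cite{Bosq} and \cite{HK}; the rest is bookkeeping via the telescoping identity and an application of $(a+b)^2\le 2a^2+2b^2$, so I expect no genuine obstacle beyond verifying that the separation hypothesis on $\tilde{\lambda}_1,\dots,\tilde{\lambda}_{d+1}$ is precisely what the cited perturbation lemma requires (which it is).
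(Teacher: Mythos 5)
Your proposal is correct and follows essentially the same route as the paper: the paper also reduces the tensor-product error to the univariate $L^2$-errors $\|\tilde{g}_l\hat{v}_l-\tilde{v}_l\|^2$ via an algebraic decomposition (a three-term expansion rather than your two-term telescoping, which differ only by a higher-order cross term) and then invokes the same deterministic perturbation bound for eigenfunctions of symmetric Hilbert--Schmidt kernels with separated eigenvalues, cited there as Theorem 2.1 in \cite{fda} rather than from \cite{Bosq}/\cite{HK}. No gap; your observation that the perturbation bound does not require $\tilde{c}$ to be the true covariance is exactly what the paper needs when applying the result to the contaminated limit kernel.
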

\begin{proof}
Observing that
\begin{align*}
&\int\int\left(\tilde{g}_{l_1}\tilde{g}_{l_2}\hat{v}_{l_1}(u)\hat{v}_{l_2}(s)-\tilde{v}_{l_1}(u)\tilde{v}_{l_2}(s)\right)^2du\,ds\notag\\
=&\int\int\left((\tilde{g}_{l_1}\hat{v}_{l_1}(u)-\tilde{v}_{l_1}(u))(\tilde{g}_{l_2}\hat{v}_{l_2}(s)-\tilde{v}_{l_2}(s))+\tilde{v}_{l_1}(u)(\tilde{g}_{l_2}\hat{v}_{l_2}(s)-\tilde{v}_{l_2}(s))\right.\\
&\left.+\tilde{v}_{l_2}(s)(\tilde{g}_{l_1}\hat{v}_{l_1}(u)-\tilde{v}_{l_1}(u))\right)^2du\,ds\notag\\
\leq&\, C\left(\int(\tilde{g}_{l_1}\hat{v}_{l_1}(u)-\tilde{v}_{l_1}(u))^2du\int(\tilde{g}_{l_2}\hat{v}_{l_2}(s)-\tilde{v}_{l_2}(s))^2ds+\int \tilde{v}_{l_1}^2(u)du\int (\tilde{g}_{l_2}\hat{v}_{l_2}(s)-\tilde{v}_{l_2}(s))^2ds\right.\notag\\
&\left.+\int \tilde{v}_{l_2}(s)^2ds\int (\tilde{g}_{l_1}\hat{v}_{l_1}(u)-\tilde{v}_{l_1}(u))^2du\right)\notag\\
=&\, C\left(\int(\tilde{g}_{l_1}\hat{v}_{l_1}(u)-\tilde{v}_{l_1}(u))^2du\int(\tilde{g}_{l_2}\hat{v}_{l_2}(s)-\tilde{v}_{l_2}(s))^2ds+\int (\tilde{g}_{l_2}\hat{v}_{l_2}(s)-\tilde{v}_{l_2}(s))^2ds\right.\\
&\left.+\int (\tilde{g}_{l_1}\hat{v}_{l_1}(u)-\tilde{v}_{l_1}(u))^2du\right)
\end{align*}
the assertions follow with Theorem 2.1 in \cite{fda}.
\end{proof}
\begin{proof}[Proof of Theorem \ref{nullas} and \ref{nullas.max}]
We first show that the $L_m^p-$approximability is passed on to the projection scores. Let $Y_t^{(m)}$ be the $m$-approximations for an  $L_m^p-$approximable sequence $Y_t$. The sequence $\eta_t^{(m)}$ with components $\eta_{t,l}^{(m)}=\int Y_t^{(m)}(s)v_l(s)ds$ is m-dependent as $Y_t^{(m)}$ is $m$-dependent. Furthermore, it holds with the Cauchy-Schwarz inequality, similarly to the proof of Theorem 5.1 in \cite{HK},
\begin{align*}
&\sum_{m\geq 1}\left(\E\left[\left|\eta_t-\eta_t^{(m)}\right|^p\right]\right)^{1/p}=\sum_{m\geq 1}\left(\E\left[\left(\sum_{l=1}^d\left(\eta_{t,l}-\eta_{t,l}^{(m)}\right)^2\right)^{p/2}\right]\right)^{1/p}\\
=&\sum_{m\geq 1}\left(\E\left[\left(\sum_{l=1}^d\left(\int\left(Y_t(s)-Y_t^{(m)}(s)\right)v_l(s)ds\right)^2\right)^{p/2}\right]\right)^{1/p}\\
\leq&\sum_{m\geq 1}\left(\E\left[\left(\sum_{l=1}^d\int\left(Y_t(s)-Y_t^{(m)}(s)\right)^2ds\int v_l^2(s)ds\right)^{p/2}\right]\right)^{1/p}\\
=&\sqrt{d}\sum_{m\geq 1}\left(\E\left[\left(\int\left(Y_t(s)-Y_t^{(m)}(s)\right)^2ds\right)^{p/2}\right]\right)^{1/p}=\sqrt{d}\sum_{m\geq 1}\left(\E\left[\left\| Y_t-Y_t^{(m)}\right\| ^{p}\right]\right)^{1/p}<\infty,
\end{align*}
where $|\cdot|$ denotes the Euklidean norm. Thus, the score vectors  $\{\eta_t\}_{t\geq 1}$ with components $\eta_{t,l}=\int Y_t(s)v_l(s)ds$ are $L_m^4-$approximable and it holds with Theorem A.2 in \cite{Aue} and the continuous mapping theorem
\begin{align}\label{limeta}
\frac{1}{\sqrt{n}}\left(\sum_{t=1}^{[nx]}\vech[\eta_t\eta_t^T]-\frac{[nx]}{n}\sum_{t=1}^n\vech[\eta_t\eta_t^T]\right)\stackrel{D^{\mathfrak{d}}[0,1]}{\rightarrow}B_{\Sigma}(x),
\end{align}
where $\{B_{\Sigma}(x):0\leq x\leq 1\}$ is a $\mathfrak{d}$-dimensional centered Gaussian process with covariance function $\Cov\left(B_{\Sigma}(x),B_{\Sigma}(y)\right)=\Sigma(\min\{x,y\}-xy)$. The convergence in (\ref{limeta}) still holds true if the projection basis is obtained based on the empirical covariance kernel. More precisely, it holds for $\check{\eta}_{t,l}=\int Y_t(s)\hat{v}_l(s)ds$ and $g_l=sgn\left(\int v_l(s)\hat{v}_l(s)ds\right)$ with the Cauchy-Schwarz inequality
\begin{align}\label{limeta.approx1a}
&\sup_{0\leq x\leq 1}\left|\frac{1}{\sqrt{n}}\sum_{t=1}^{[nx]}g_{l_1}g_{l_2}\left(\check{\eta}_{t,l_1}\check{\eta}_{t,l_2}-\overline{\check{\eta}_{l_1}\check{\eta}_{l_2}}\right)-\frac{1}{\sqrt{n}}\sum_{t=1}^{[nx]}\left(\eta_{t,l_1}\eta_{t,l_2}-\overline{\eta_{l_1}\eta_{l_2}}\right)\right|\notag\\
=&\sup_{0\leq x\leq 1}\left|\int\int\left(\frac{1}{n}\sum_{t=1}^{[nx]}\left(Y_t(u)Y_t(s)-\overline{Y(u)Y(s)}\right)\right)\sqrt{n}\left(g_{l_1}g_{l_2}\hat{v}_{l_1}(u)\hat{v}_{l_2}(s)-v_{l_1}(u)v_{l_2}(s)\right)du\,ds\right|\notag\\
\leq&\sup_{0\leq x\leq 1}\left(\int\int \left(\frac{1}{n}\sum_{t=1}^{[nx]}\left(Y_t(u)Y_t(s)-\overline{Y(u)Y(s)}\right)\right)^2du\,ds\right)^{\frac 12}\left(n\int\int\left(g_{l_1}g_{l_2}\hat{v}_{l_1}(u)\hat{v}_{l_2}(s)\right.\right.\notag\\
&\left.\phantom{\int}\left.-v_{l_1}(u)v_{l_2}(s)\right)^2du\,ds\right)^{\frac 12}.
\end{align}
By Lemma 2.3 b) in \cite{fda} and the separation of the eigenvalues of $c(u,s)$ the assumptions of Theorem  \ref{prodv.cons} b) are fulfilled such that we obtain 
$$\left(n\int\int\left(g_{l_1}g_{l_2}\hat{v}_{l_1}(u)\hat{v}_{l_2}(s)-v_{l_1}(u)v_{l_2}(s)\right)^2du\,ds\right)^{\frac 12}=O_P(1).$$
Lemma 2.1 in \cite{HK} yields that $Z_t(u,s)=Y_t(u)Y_t(s)$ is $L_m^4-$approximable and with the invariance principle in \cite{Berkinv} we obtain
\begin{align}\label{inv.prod}
\sup_{0\leq x\leq 1}\int\int\left(\frac{1}{n}\sum_{t=1}^{[nx]}\left[Y_t(u)Y_t(s)-E(Y_1(u)Y_1(s))\right]\right)^2du\,ds=O_P\left(n^{-1}\right)=o_P(1).
\end{align}
It follows that
\begin{align*}
&\sup_{0\leq x\leq 1}\int\int\left(\frac{1}{n}\sum_{t=1}^{[nx]}\left(Y_t(u)Y_t(s)-\overline{Y(u)Y(s)}\right)\right)^2du\,ds\notag\\
=&\sup_{0\leq x\leq 1}\int\int\left(\left(\frac{1}{n}\sum_{t=1}^{[nx]}\left[Y_t(u)Y_t(s)-E\left(Y_1(u)Y_1(s)\right)\right]\right)\right.\\
&\quad\left.-\frac{[nx]}{n}\left(\frac 1n\sum_{t=1}^n\left[Y_t(u)Y_t(s)-E\left(Y_1(u)Y_1(s)\right)\right]\right)\right)^2du\,ds\notag\\
\leq&\,C\sup_{0\leq x\leq 1}\int\int\left(\frac{1}{n}\sum_{t=1}^{[nx]}\left[Y_t(u)Y_t(s)-E\left(Y_1(u)Y_1(s)\right)\right]\right)^2du\,ds\\
&\quad+C\int\int\left(\frac 1n\sum_{t=1}^n\left[Y_t(u)Y_t(s)-E\left(Y_1(u)Y_1(s)\right)\right]\right)^2du\,ds\notag\\
\leq&\,2C\sup_{0\leq x\leq 1}\int\int\left(\frac{1}{n}\sum_{t=1}^{[nx]}\left[Y_t(u)Y_t(s)-E\left(Y_1(u)Y_1(s)\right)\right]\right)^2du\,ds=o_P(1).
\end{align*}
Hence, (\ref{limeta.approx1a}) yields
\begin{align}\label{limeta.approx1}
&\sup_{0\leq x\leq 1}\left|\frac{1}{\sqrt{n}}\sum_{t=1}^{[nx]}g_{l_1}g_{l_2}\left(\check{\eta}_{t,l_1}\check{\eta}_{t,l_2}-\overline{\check{\eta}_{l_1}\check{\eta}_{l_2}}\right)-\frac{1}{\sqrt{n}}\sum_{t=1}^{[nx]}\left(\eta_{t,l_1}\eta_{t,l_2}-\overline{\eta_{l_1}\eta_{l_2}}\right)\right|=o_P(1).
\end{align}
We obtain the same limit distribution if we replace $\check{\eta}_{t,l}$ by $\hat{\eta}_{t,l}=\int \left(X_t(s)-\overline{X}_n(s)\right)\hat{v}_l(s)ds$ as in our statistics. Indeed, with the notations $\tilde{Y}_t:=Y_t-\overline{Y}_n,$ $\overline{\tilde{Y}(u)\tilde{Y}(s)}=\frac 1n\sum_{t=1}^n\tilde{Y}_t(u)\tilde{Y}_t(s),$ $\overline{Y}_{k}(u)=\frac{1}{n}\sum_{t=1}^{k}Y_t(u)$ we obtain
\begin{align}\label{limeta.approx2}
&\sup_{0\leq x\leq 1}\left|\frac{1}{\sqrt{n}}\sum_{t=1}^{[nx]}\left(\check{\eta}_{t,l_1}\check{\eta}_{t,l_2}-\overline{\check{\eta}_{l_1}\check{\eta}_{l_2}}\right)-\frac{1}{\sqrt{n}}\sum_{t=1}^{[nx]}\left(\hat{\eta}_{t,l_1}\hat{\eta}_{t,l_2}-\overline{\hat{\eta}_{l_1}\hat{\eta}_{l_2}}\right)\right|\notag\\
=&\sup_{0\leq x\leq 1}\left|\frac{1}{\sqrt{n}}\sum_{t=1}^{[nx]}\int\int \left(Y_t(u)Y_t(s)-\overline{Y(u)Y(s)}\right)\hat{v}_{l_1}(u)\hat{v}_{l_2}(s)du\,ds\right.\notag\\
&\left.-\frac{1}{\sqrt{n}}\sum_{t=1}^{[nx]}\int\int \left(\tilde{Y}_t(u)\tilde{Y}_t(s)-\overline{\tilde{Y}(u)\tilde{Y}(s)}\right)\hat{v}_{l_1}(u)\hat{v}_{l_2}(s)du\,ds\right|\notag\\
=&\sup_{0\leq x\leq 1}\left|\int\int\left(\frac{1}{\sqrt{n}}\sum_{t=1}^{[nx]}\left(Y_t(u)\overline{Y}_n(s)+\overline{Y}_n(u)Y_t(s)-2\overline{Y}_n(u)\overline{Y}_n(s)\right)\right)\hat{v}_{l_1}(u)\hat{v}_{l_2}(s)du\,ds\right|\notag\\
=&\sup_{0\leq x\leq 1}\left|\int\overline{Y}_{[nx]}(u)\hat{v}_{l_1}(u)du\int\sqrt{n}\,\overline{Y}_{n}(s)\hat{v}_{l_2}(s)ds+\int\sqrt{n}\,\overline{Y}_{n}(u)\hat{v}_{l_1}(u)du\int\overline{Y}_{[nx]}(s)\hat{v}_{l_2}(s)ds\right.\notag\\*
&\left.\quad-2\frac{[nx]}{n}\int\sqrt{n}\,\overline{Y}_{n}(u)\hat{v}_{l_1}(u)du\int\overline{Y}_{n}(s)\hat{v}_{l_2}(s)ds\right|\notag\\
\leq&\sup_{0\leq x\leq 1}\left(\left|\int\overline{Y}_{[nx]}(u)\hat{v}_{l_1}(u)du\int\sqrt{n}\,\overline{Y}_{n}(s)\hat{v}_{l_2}(s)ds\right|+\left|\int\sqrt{n}\,\overline{Y}_{n}(u)\hat{v}_{l_1}(u)du\int\overline{Y}_{[nx]}(s)\hat{v}_{l_2}(s)ds\right|\right.\notag\\*
&\quad\left.+\left|2\frac{[nx]}{n}\int\sqrt{n}\,\overline{Y}_{n}(u)\hat{v}_{l_1}(u)du\int\overline{Y}_{n}(s)\hat{v}_{l_2}(s)ds\right|\right)\notag\\
\leq&\sup_{0\leq x\leq 1}\left[\left(\int\overline{Y}^2_{[nx]}(u)du\right)^{\frac 12}\left(\int\hat{v}^2_{l_1}(u)du\right)^{\frac 12}\left(\int\left(\sqrt{n}\,\overline{Y}_{n}(s)\right)^2ds\right)^{\frac 12}\left(\int\hat{v}^2_{l_2}(s)ds\right)^{\frac 12}\right.\notag\\*
&\quad+\left(\int\left(\sqrt{n}\,\overline{Y}_{n}(u)\right)^2du\right)^{\frac 12}\left(\int\hat{v}^2_{l_1}(u)du\right)^{\frac 12}\left(\int\overline{Y}^2_{[nx]}(s)ds\right)^{\frac 12}\left(\int\hat{v}^2_{l_2}(s)ds\right)^{\frac 12}\notag\\*
&\quad\left.+2\frac{[nx]}{n}\left(\int\left(\sqrt{n}\,\overline{Y}_{n}(u)\right)^2du\right)^{\frac 12}\left(\int\hat{v}^2_{l_1}(u)du\right)^{\frac 12}\left(\int\overline{Y}^2_{n}(s)ds\right)^{\frac 12}\left(\int\hat{v}^2_{l_2}(s)ds\right)^{\frac 12}\right]\notag\\
=&\sup_{0\leq x\leq 1}2\left(\int\overline{Y}^2_{[nx]}(u)du\right)^{\frac 12}\left(\int\left(\sqrt{n}\,\overline{Y}_{n}(s)\right)^2ds\right)^{\frac 12}\notag\\
&+2\sup_{0\leq x\leq 1}\frac{[nx]}{n}\left(\int\left(\sqrt{n}\,\overline{Y}_{n}(u)\right)^2du\right)^{\frac 12}\left(\int\overline{Y}^2_{n}(s)ds\right)^{\frac 12}\notag\\
\leq&2\left(\int\left(\sqrt{n}\,\overline{Y}_{n}(s)\right)^2ds\right)^{\frac 12}\left(\left(\sup_{0\leq x\leq 1}\int\overline{Y}^2_{[nx]}(s)ds\right)^{\frac 12}+\left(\int\overline{Y}^2_{n}(s)ds\right)^{\frac 12}\right)=o_P(1)
\end{align}
as it holds with the ergodic theorem (see, for example, \cite{rao})
\begin{align}\label{yinv2}
\int\overline{Y}^2_{n}(s)ds=o_P(1).
\end{align}
Combining (\ref{limeta}), (\ref{limeta.approx1}) and (\ref{limeta.approx2}) we obtain
\begin{align*}
S_{[nx]}\stackrel{D^{\mathfrak{d}}[0,1]}{\rightarrow}B_{\Sigma}(x).
\end{align*}
and the assertions follow by the continuous mapping theorem.
\end{proof}
\paragraph*{Behaviour under alternatives}
\begin{proof}[Proof of Lemma \ref{climit}]
We split the empirical covariance as follows:
\begin{align*}
&\hat{c}_n(u,s)=\frac1n\sum_{t=1}^n\left(X_t(u)-\overline{X}_n(u)\right)\left(X_t(s)-\overline{X}_n(s)\right)=\frac1n\sum_{t=1}^n\left(Y_t(u)-\overline{Y}_n(u)\right)\left(Y_t(s)-\overline{Y}_n(s)\right)\notag\\
=&\frac{1}{n}\sum_{t=1}^{[\theta n]}\left(Y_t^{(1)}(u)-\overline{Y}_n(u)\right)\left(Y_t^{(1)}(s)-\overline{Y}_n(s)\right)+\frac{1}{n}\sum_{t=[\theta n]+1}^n\left(Y_t^{(2)}(u)-\overline{Y}_n(u)\right)\left(Y_t^{(2)}(s)-\overline{Y}_n(s)\right).
\end{align*}
Now, observe that
\begin{align*}
&\int\int\left(\frac{1}{n}\sum_{t=1}^{[\theta n]}\left[\left(Y_t^{(1)}(u)-\overline{Y}_n(u)\right)\left(Y_t^{(1)}(s)-\overline{Y}_n(s)\right)- c(u,s)\right]\right)^2du\,ds\notag\\
\leq&\, C\int\int\left(\frac{1}{n}\sum_{t=1}^{[\theta n]}\left(Y_t^{(1)}(u)Y_t^{(1)}(s)-E(Y_1(u)Y_1(s))\right)\right)^2du\,ds\notag\\*
&\,+C\int\int\left((\theta+o(1))\overline{Y}_n(u)\overline{Y}_n(s)-\overline{Y}_{[\theta n]}(u)\overline{Y}_n(s)-\overline{Y}_n(u)\overline{Y}_{[\theta n]}(s)\right)^2du\,ds.
\end{align*}
Furthermore, it holds 
\begin{align*}
&\int\int\left((\theta+o(1))\overline{Y}_n(u)\overline{Y}_n(s)-\overline{Y}_{[\theta n]}(u)\overline{Y}_n(s)-\overline{Y}_n(u)\overline{Y}_{[\theta n]}(s)\right)^2du\,ds\notag\\
\leq&\,C\left((\theta+o(1))\int\overline{Y}^2_n(u)du\int\overline{Y}^2_n(s)ds+\int\overline{Y}^2_{[\theta n]}(u)du\int\overline{Y}^2_n(s)ds+\int\overline{Y}^2_n(u)du\int\overline{Y}^2_{[\theta n]}(s)ds\right)\\
=&\,o_P(1)
\end{align*}
by (\ref{yinv2}), where one needs to note that this assertion remains true under the alternative which can easily be seen by splitting the time series at the change point. By the ergodic theorem it holds
\begin{align*}
&\int\int\left(\frac{1}{n}\sum_{t=1}^{[\theta n]}\left(Y_t^{(1)}(u)Y_t^{(1)}(s)-E(Y_1(u)Y_1(s))\right)\right)^2du\,ds\\
=&\left(\frac{[\theta n]}{n}\right)^2\int\int\left(\frac{1}{[\theta n]}\sum_{t=1}^{[\theta n]}\left(Y_t^{(1)}(u)Y_t^{(1)}(s)-E(Y_1(u)Y_1(s))\right)\right)^2du\,ds=o_P(1).
\end{align*}
Hence, we obtain
\begin{align*}
&\int\int\left(\frac{1}{n}\sum_{t=1}^{[\theta n]}\left[\left(Y_t^{(1)}(u)-\overline{Y}_n(u)\right)\left(Y_t^{(1)}(s)-\overline{Y}_n(s)\right)-c(u,s)\right]\right)^2du\,ds=o_P(1)
\end{align*}
and analogously
\begin{align*}
&\int\int\left(\frac{1}{n}\sum_{t=[\theta n]+1}^n\left[\left(Y_t^{(2)}(u)-\overline{Y}_n(u)\right)\left(Y_t^{(2)}(s)-\overline{Y}_n(s)\right)-(c(u,s)+\delta(u,s))\right]\right)^2du\,ds\\
=&\,o_P(1).
\end{align*}
As $$\int\int\left(\frac{[\theta n]}{n}c(u,s)+\frac{n-[\theta n]}{n}\left(c(u,s)+\delta(u,s)\right)-k(u,s)\right)^2du\,ds=o_P(1),$$
where $k(u,s)=\theta c(u,s)+(1-\theta)\left(c(u,s)+\delta(u,s)\right)=c(u,s)+(1-\theta)\delta(u,s)$, it follows that
\begin{align}\label{cconv}
\int\int \left(\hat{c}_n(u,s)-k(u,s)\right)^2du\,ds=o_P(1).
\end{align}

\end{proof}
\paragraph*{Example 1}
In this setting, condition (\ref{changedet}) is fufilled as it holds
\begin{align}\label{eigenval}
\int\delta(u,s)v_{l}(s)ds=\int\left(c(u,s)+\delta(u,s)\right)v_{l}(s)ds-\int c(u,s)v_{l}(s)ds=\delta_{l}v_{l}(u)
\end{align}
and thus
\begin{equation}
\begin{split}\label{ex.1}
\int\int\delta(u,s)v_{l_1}(u)v_{l_2}(s)du\,ds=\delta_{l_1}\int v_{l_1}(u)v_{l_2}(u)du=\begin{cases}
0,&l_1\neq l_2\\
\delta_{l_1},& l_1=l_2.
\end{cases}
\end{split}
\end{equation}
By (\ref{eigenval}) each $v_{l}$ is an eigenfunction of $k(u,s)$ with eigenvalue $\lambda_{l}+\theta\delta_{l}$. It follows with the Cauchy-Schwarz inequality
\begin{align}\label{vconv}
&\left|g_{l_1}g_{l_2}\int\int\delta(u,s)v_{l_1}(u)v_{l_2}(s)du\,ds-\int\int\delta(u,s)\hat{v}_{l_1}(u)\hat{v}_{l_2}(s)du\,ds \right|\notag\\
=&\left|\int\int\delta(u,s)\left(g_{l_1}g_{l_2}v_{l_1}(u)v_{l_2}(s)-\hat{v}_{l_1}(u)\hat{v}_{l_2}(s)\right)du\,ds \right|\notag\\
\leq&\left(\int\int\delta^2(u,s)du\,ds\right)^{\frac 12}\left(\int\int\left(g_{l_1}g_{l_2}v_{l_1}(u)v_{l_2}(s)-\hat{v}_{l_1}(u)\hat{v}_{l_2}(s)\right)^2du\,ds\right)^{\frac 12}=o_p(1)
\end{align}
with Theorem \ref{prodv.cons} a) and $\delta(u,s)\in \mathcal{L}^2(\mathcal{Z})$. Hence, we get
\begin{align}\label{ex1.del}
&\int\int\delta(u,s)\hat{v}_{l_1}(u)\hat{v}_{l_2}(s)du\,ds=g_{l_1}g_{l_2}\delta_{l_1}\int v_{l_1}(u)v_{l_2}(u)du+o_P(1)\notag\\
=&\begin{cases}
o_P(1),&l_1\neq l_2\\
g_{l_1}g_{l_2}\delta_{l_1}+o_P(1),& l_1=l_2.
\end{cases}
\end{align}
This shows that the change is detectable if the eigendirections are estimated based on the empirical covariance function.
\paragraph*{Example 2}
First observe that, as $\epsilon_{t,l}$ is independent of $\eta_{t,l}$ and as the score components are uncorrelated,
$$\Cov(\eta_{t,k}+\epsilon_{t,k},\eta_{t,l}+\epsilon_{t,l})=\Cov(\eta_{t,k},\eta_{t,l})+\Cov(\epsilon_{t,k},\epsilon_{t,l})=\begin{cases}
\lambda_{k}+\sigma_{k,k},&k=l,\\
\sigma_{k,l},&k\neq l.
\end{cases}
$$
Hence, it holds with (\ref{KL}) for $t>\theta n$
\begin{align*}
\Cov(X_t(u),X_t(s))=&\sum_{k,l=1}^{\infty}v_{k}(u)v_{l}(s)\Cov(\tilde{\eta}_{t,k},\tilde{\eta}_{t,l})\\
=&\sum_{l=m+1}^{\infty}\lambda_l v_{l}(u)v_{l}(s)+\sum_{k,l=1}^{m}v_{k}(u)v_{l}(s)\Cov(\eta_{t,k}+\epsilon_{t,k},\eta_{t,l}+\epsilon_{t,l})\notag\\
=&\sum_{l=m+1}^{\infty}\lambda_l v_{l}(u)v_{l}(s)+\sum_{l=1}^m(\lambda_{l}+\sigma_{l,l})v_{l}(u)v_{l}(s)+\sum_{k,l=1,k\neq l}^{m}\sigma_{k,l}v_{k}(u)v_{l}(s)\\
=&\sum_{l=1}^{\infty}\lambda_l v_{l}(u)v_{l}(s)+\sigma_{k,l} \sum_{k,l=1}^{m}v_{k}(u)v_{l}(s)\notag\\
=&c(u,s)+1_{\{\theta n<t\leq n\}}\sum_{k,l=1}^{m}\sigma_{k,l}v_{k}(u)v_{l}(s)
\end{align*}
such that the change in the covariance kernel is given by
\begin{align}\label{ex2.delta}
\delta(u,s)=\sum_{k,l=1}^{m}\sigma_{k,l}v_{k}(u)v_{l}(s).
\end{align}
For $l_1,l_2\in\{1,\ldots,m\}$ it holds
\begin{align*}
&\int\int\delta(u,s)v_{l_1}(u)v_{l_2}(s)du\,ds=\sum_{k,l=1}^{m}\sigma_{k,l}\int\int v_{k}(u)v_{l}(s)v_{l_1}(u)v_{l_2}(s)du\,ds\\*
=&\sum_{l,k=1}^{m}\sigma_{k,l}\left(\int v_{k}(u)v_{l_1}(u)du \int v_{l}(s)v_{l_2}(s)ds\right)=\sigma_{l_1,l_2}.
\end{align*}
Hence, condition (\ref{changedet}) is fufilled. Analogously to (\ref{vconv}) we obtain
\begin{align}\label{ex2.est}
\int\int\delta(u,s)\hat{v}_{l_1}(u)\hat{v}_{l_2}(s)du\,ds&=\tilde{g}_l\tilde{g}_k\sum_{k,l=1}^{m}\sigma_{k,l}\left(\int v_{k}(u)\tilde{v}_{l_1}(u)du \int v_{l}(s)\tilde{v}_{l_2}(s)ds\right)+o_P(1),
\end{align}
showing that the change is detectable if the eigendirections are estimated based on the empirical covariance function if $\sum_{k,l=1}^{m}\sigma_{k,l}\left(\int v_{k}(u)\tilde{v}_{l_1}(u)du \int v_{l}(s)\tilde{v}_{l_2}(s)ds\right)\neq 0$ for at least one pair $l_1,l_2\in\{1,\ldots,\min\{d,m\}\}$.
\paragraph*{Long-run covariance for Gaussian scores}
Assuming a normal distribution, the components $\{\eta_{t,l}:l=1,\ldots,d\}$ of the score vectors are independent. This leads to
\begin{align*}
\Cov(\eta_{t,l_1}\eta_{t,l_2},\eta_{t,l_3}\eta_{t,l_4})
=&\begin{cases}
\E(\eta_{t,l_1}^4)-\E(\eta_{t,l_1}^2)^2,&l_1=l_2=l_3=l_4\\
\E(\eta_{t,l_1}^2)\E(\eta_{t,l_3}^2)-\E(\eta_{t,l_1}^2)\E(\eta_{t,l_3}^2),&l_1=l_2\neq l_3=l_4,\\
\E(\eta_{t,l_1}^2)\E(\eta_{t,l_2}^2),&l_1=l_3\neq l_2=l_4,\\
\E(\eta_{t,l_1}^2)\E(\eta_{t,l_2}^2),&l_1=l_4\neq l_2=l_3,\\
0,&\mbox{otherwise},
\end{cases}\\
=&\begin{cases}
3\lambda_{l_1}^2-\lambda_{l_1}^2,&l_1=l_2=l_3=l_4\\
\lambda_{l_1}\lambda_{l_2},&l_1=l_3\neq l_2=l_4,\\
\lambda_{l_1}\lambda_{l_2},&l_1=l_4\neq l_2=l_3,\\
0,&\mbox{otherwise}.
\end{cases}
\end{align*}
With $\vech[\eta_0\eta_0^T]=
          (\eta_{0,1}^2,
          \eta_{0,1}\eta_{0,2},
					\ldots,
					\eta_{0,2}^2,
          \eta_{0,2}\eta_{0,3},
					\ldots,
					\eta_{0,d}^2)
       $, we obtain
$$\Sigma=\Cov\left(\vech[\eta_0\eta_0^T]\right)=\diag(2\lambda_{1}^2,\lambda_{1}\lambda_{2},\ldots,2\lambda_{2}^2,\lambda_{2}\lambda_{3},\ldots,2\lambda_{d}^2).$$
\paragraph*{Functional test statistic}
The representation of the $L^2$-norm of the functional partial sum process in terms of the projection scores as stated in Remark \ref{rem.Tf} is obtained by
\begin{align}\label{l2s}
&\|S_k^F\|^2=\frac 1n\int\int \sum_{t_1,t_2=1}^k\left(\left(X_{t_1}(u)X_{t_1}(s)-\overline{X(u)X(s)}\right)\right)\left(X_{t_2}(u)X_{t_2}(s)-\overline{X(u)X(s)}\right) du\, ds\notag\\
=&\frac 1n \sum_{t_1,t_2=1}^k\sum_{l_1,l_2,l_3,l_4=1}^{\infty}(\eta_{t_1,l_1}\eta_{t_1,l_2}-\overline{\eta_{l_1}\eta_{l_2}})(\eta_{t_2,l_3}\eta_{t_2,l_4}-\overline{\eta_{l_3}\eta_{l_4}})\int v_{l_1}(u)v_{l_3}(u)du \int v_{l_2}(s)v_{l_4}(s)\,ds\notag\\
=&\frac 1n \sum_{t_1,t_2=1}^k\sum_{l_1,l_2=1}^{\infty}(\eta_{t_1,l_1}\eta_{t_1,l_2}-\overline{\eta_{l_1}\eta_{l_2}})(\eta_{t_2,l_1}\eta_{t_2,l_2}-\overline{\eta_{l_1}\eta_{l_2}})=\frac 1n \sum_{l_1,l_2=1}^{\infty}\left(\sum_{t=1}^k(\eta_{t,l_1}\eta_{t,l_2}-\overline{\eta_{l_1}\eta_{l_2}})\right)^2.
\end{align}
\section{FurthersSimulations}\label{sim.app}
Figure \ref{fig.sim} shows the empirical size and the size corrected power where the procedures considered in this paper are applied to the independent innovations of the simulation study in Section \ref{sim} in the main paper using Efron's Bootstrap to obtain the critical values.
\begin{figure}[htb]
\begin{subfigure}[t]{0.32\textwidth}
\centering
	\includegraphics[width=\textwidth]{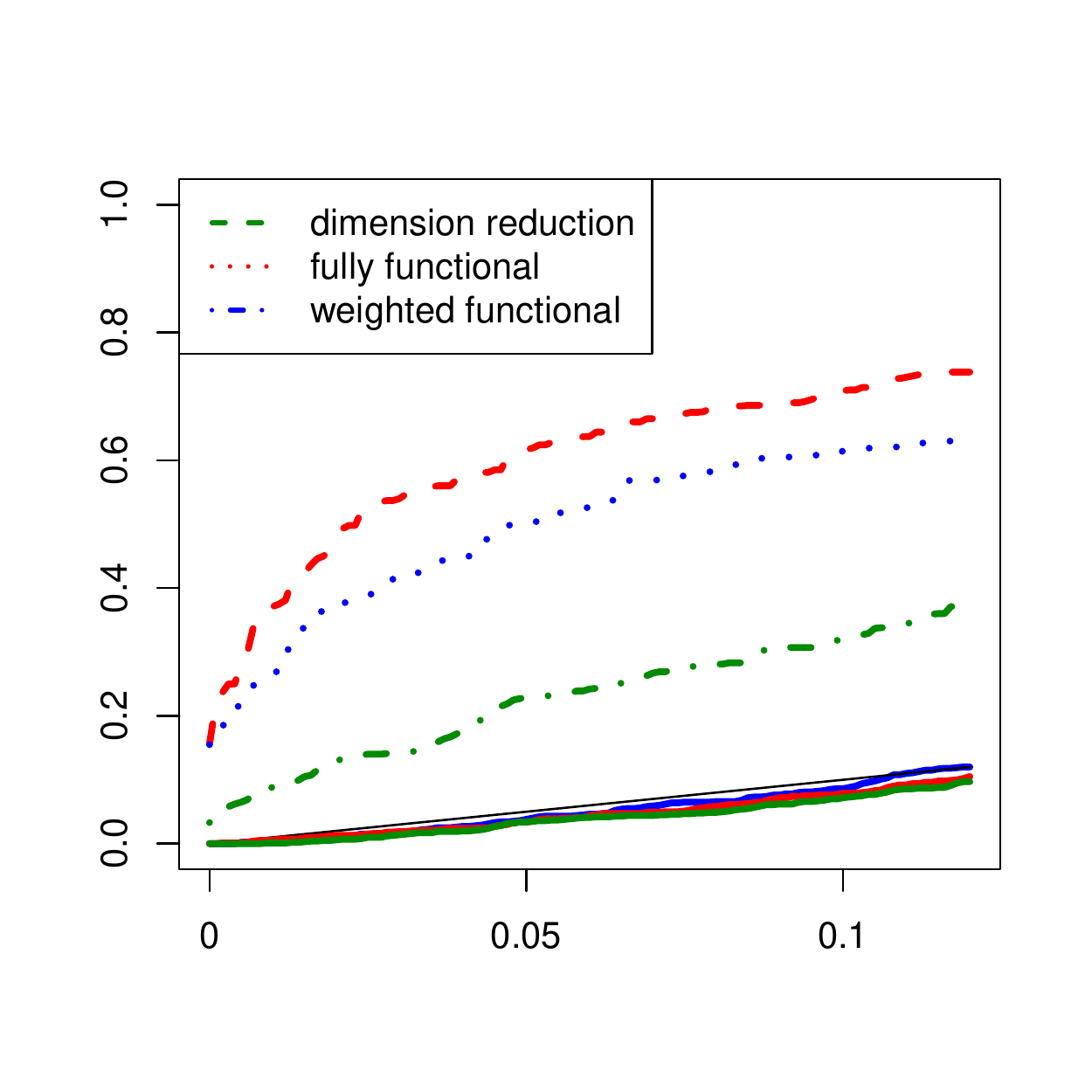}
	\caption{Setting 1,$\sigma_{\epsilon}=1,m=2,25,50$}
	\end{subfigure}
\hfill
\begin{subfigure}[t]{0.32\textwidth}
	\centering
	\includegraphics[width=\textwidth]{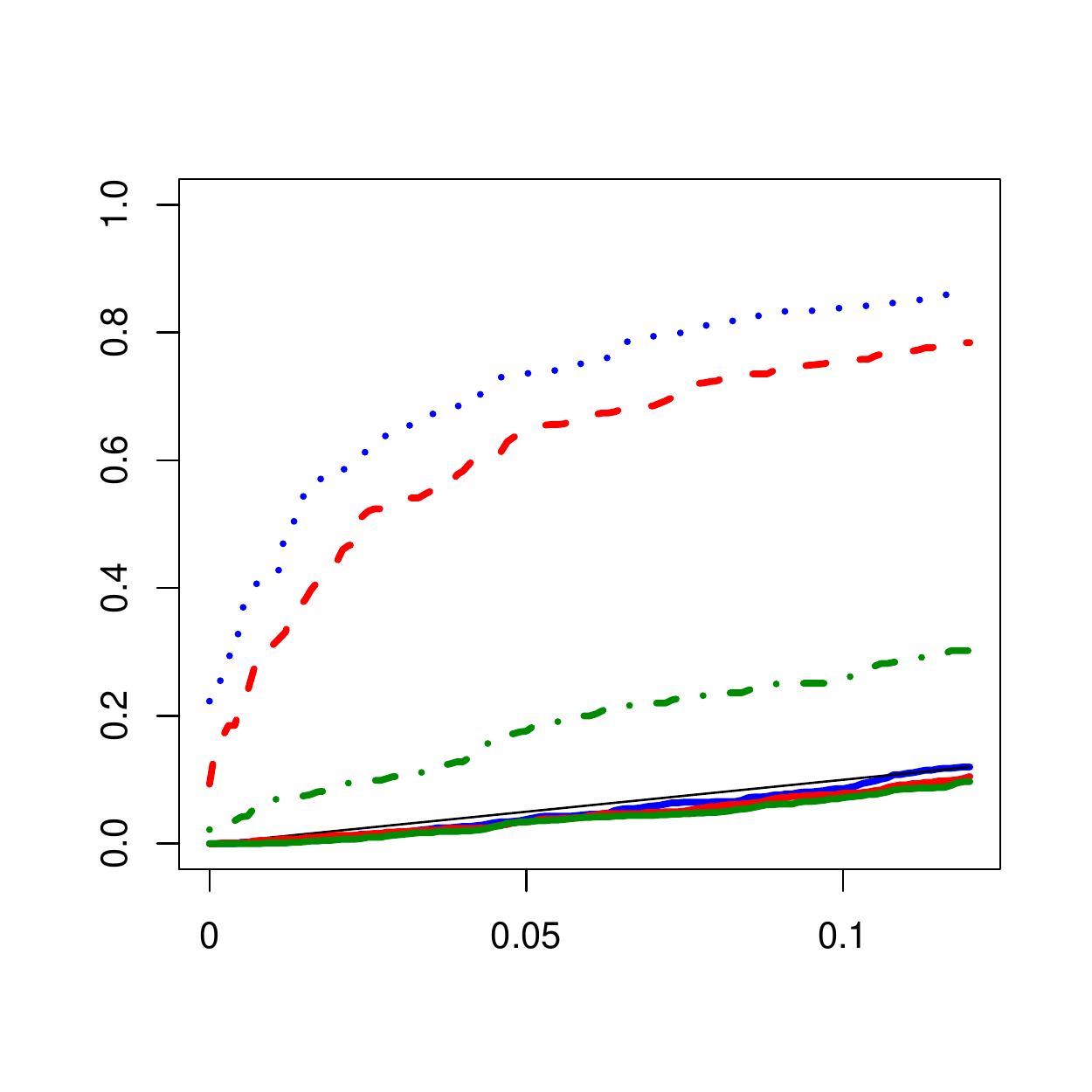}

\end{subfigure}
\hfill
\begin{subfigure}[t]{0.32\textwidth}
	\centering
	\includegraphics[width=\textwidth]{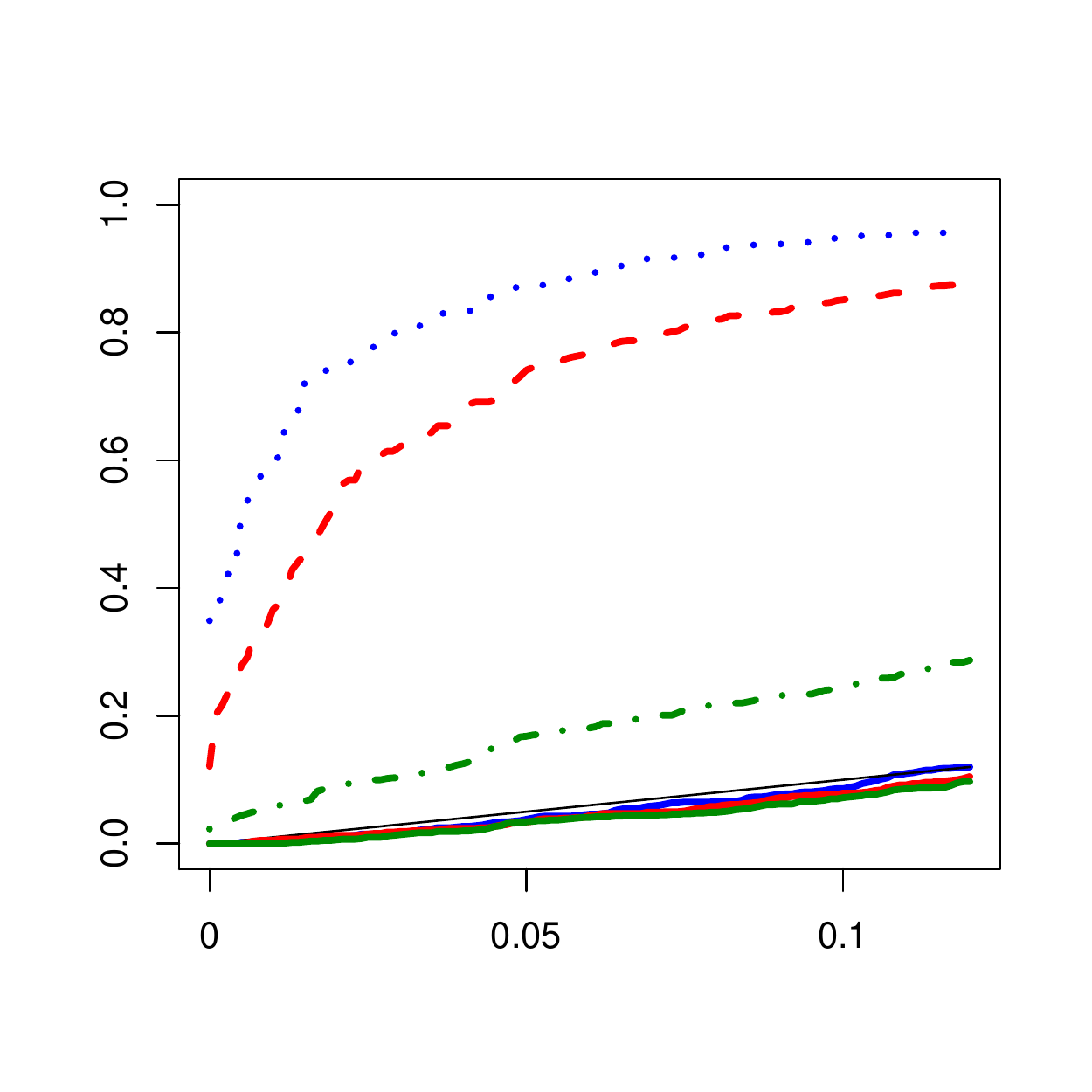}

\end{subfigure}
\begin{subfigure}[t]{0.32\textwidth}
\centering
	\includegraphics[width=\textwidth]{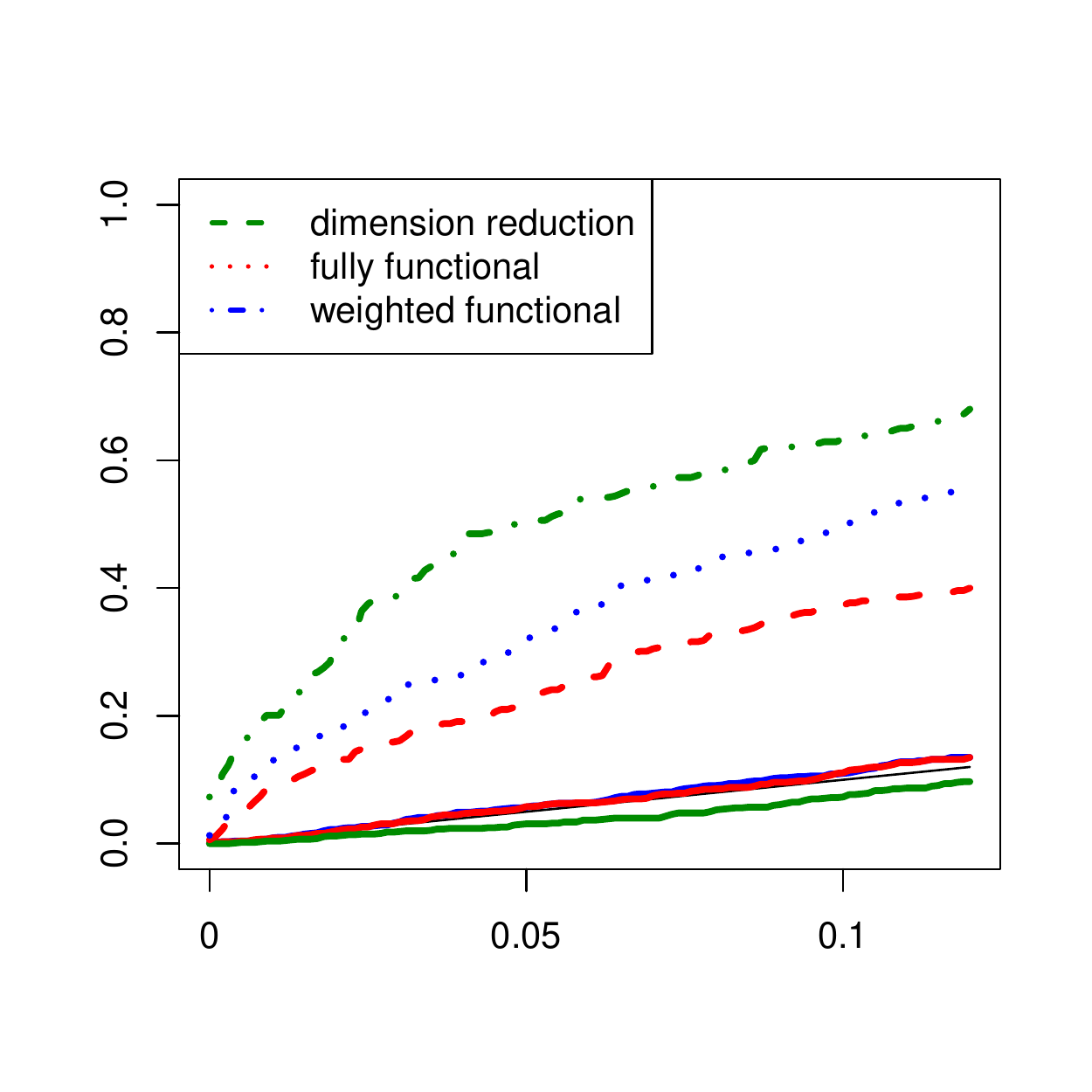}
	\caption{Setting 2, $\sigma_{\epsilon}=0.2,m=2,25,50$}
	\end{subfigure}
\hfill
\begin{subfigure}[t]{0.32\textwidth}
	\centering
	\includegraphics[width=\textwidth]{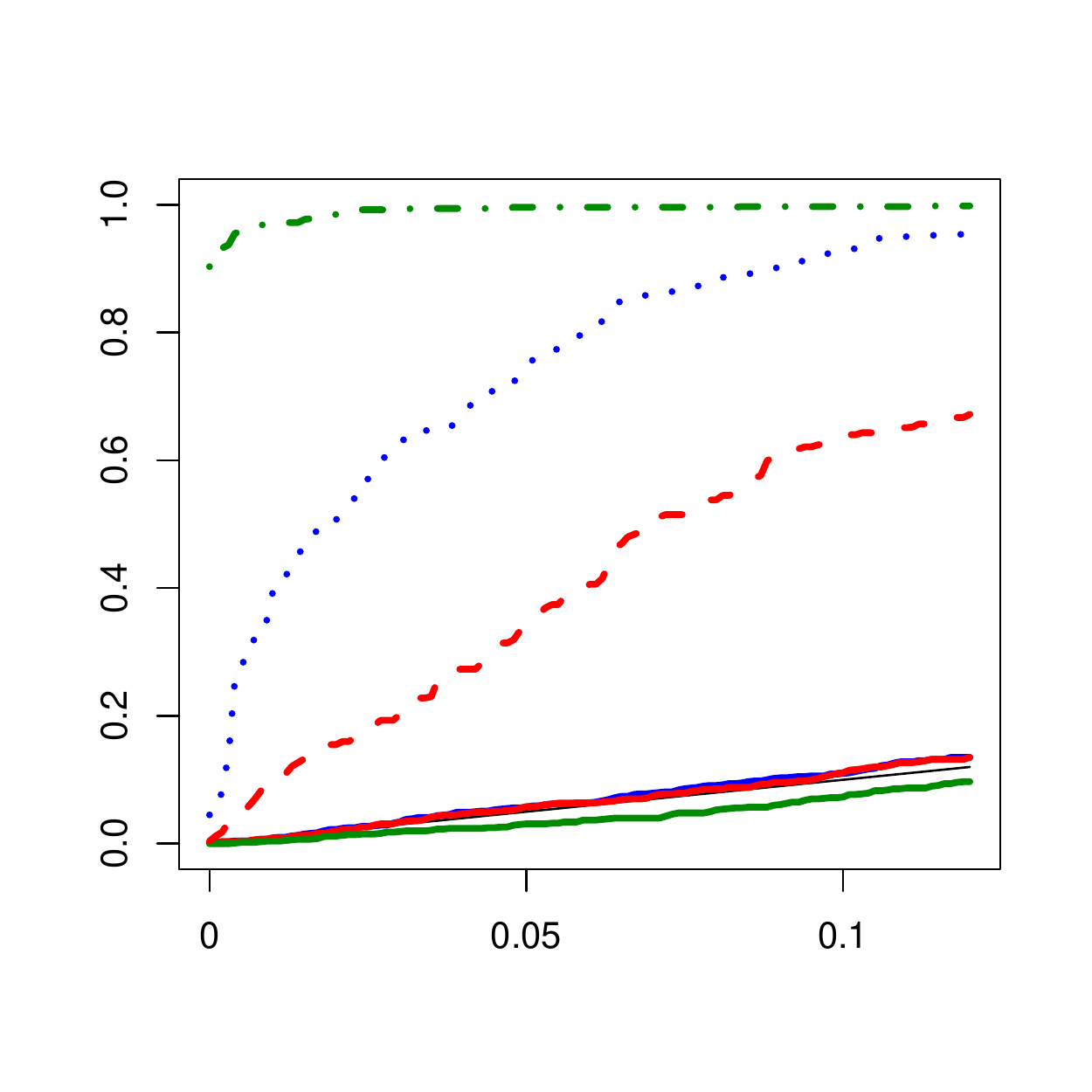}

\end{subfigure}
\hfill
\begin{subfigure}[t]{0.32\textwidth}
	\centering
	\includegraphics[width=\textwidth]{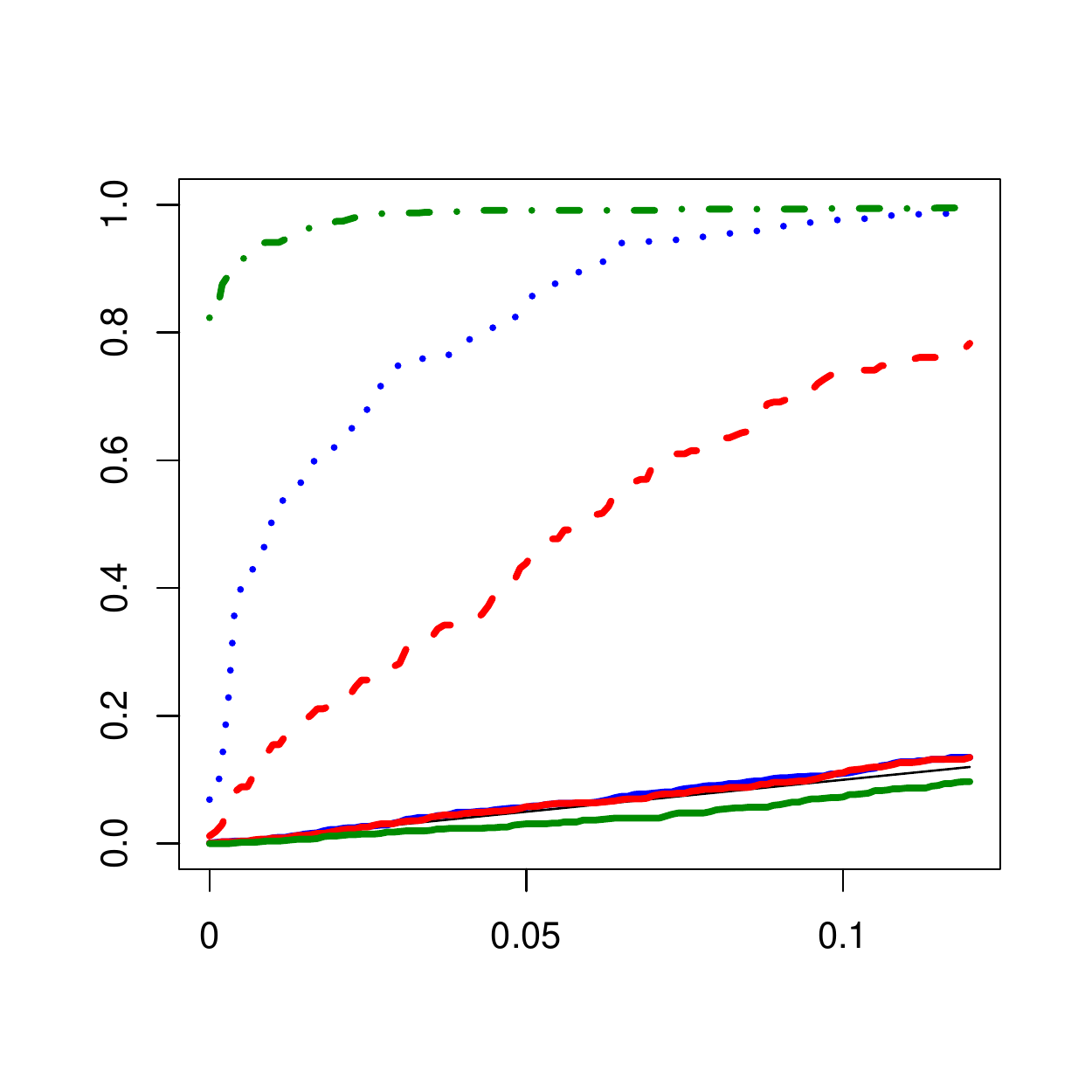}

\end{subfigure}
\begin{subfigure}[t]{0.32\textwidth}
\centering
	\includegraphics[width=\textwidth]{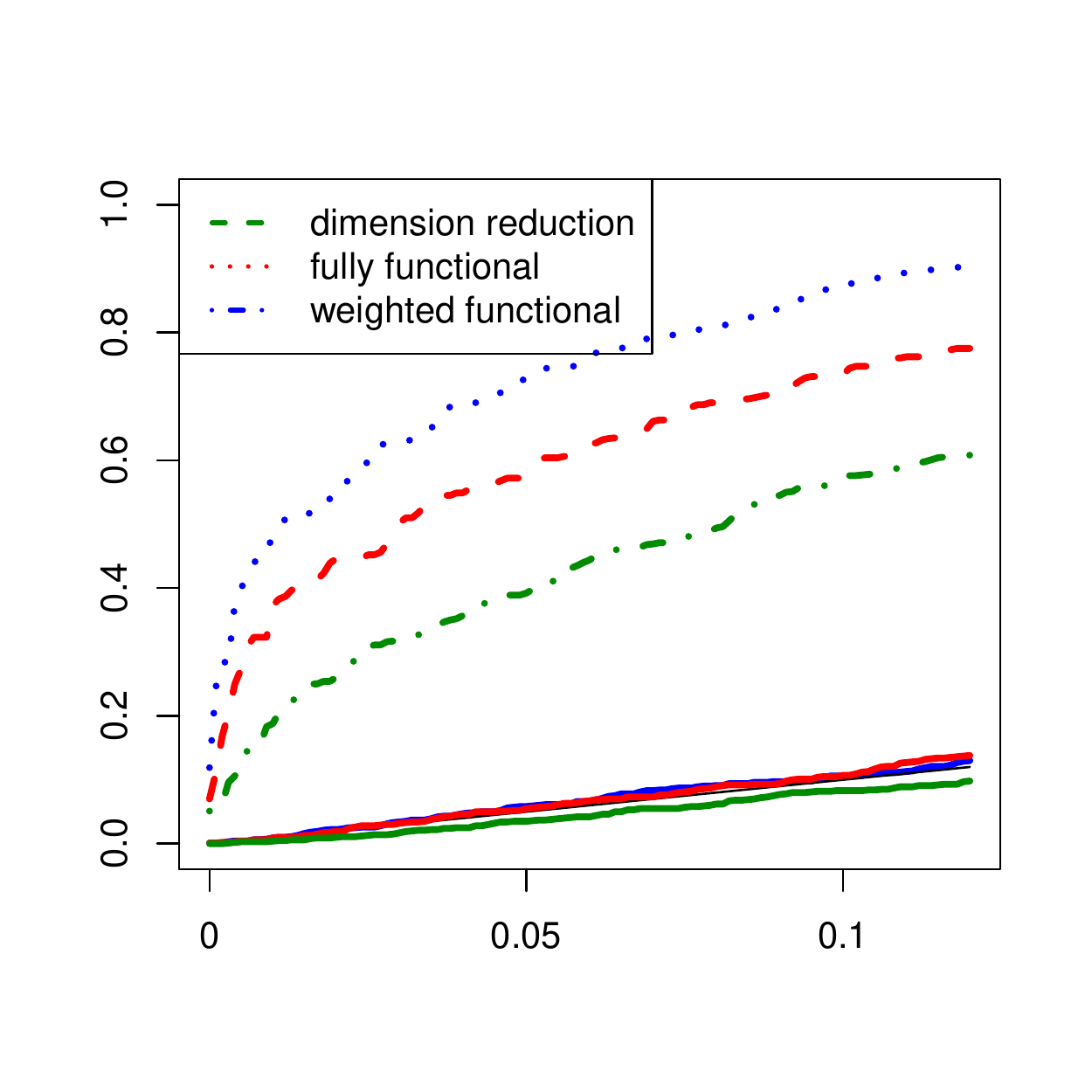}
	\caption{Setting 3, $\sigma_{\epsilon}=0.8,m=2,25,50$}
	\end{subfigure}
\hfill
\begin{subfigure}[t]{0.32\textwidth}
	\centering
	\includegraphics[width=\textwidth]{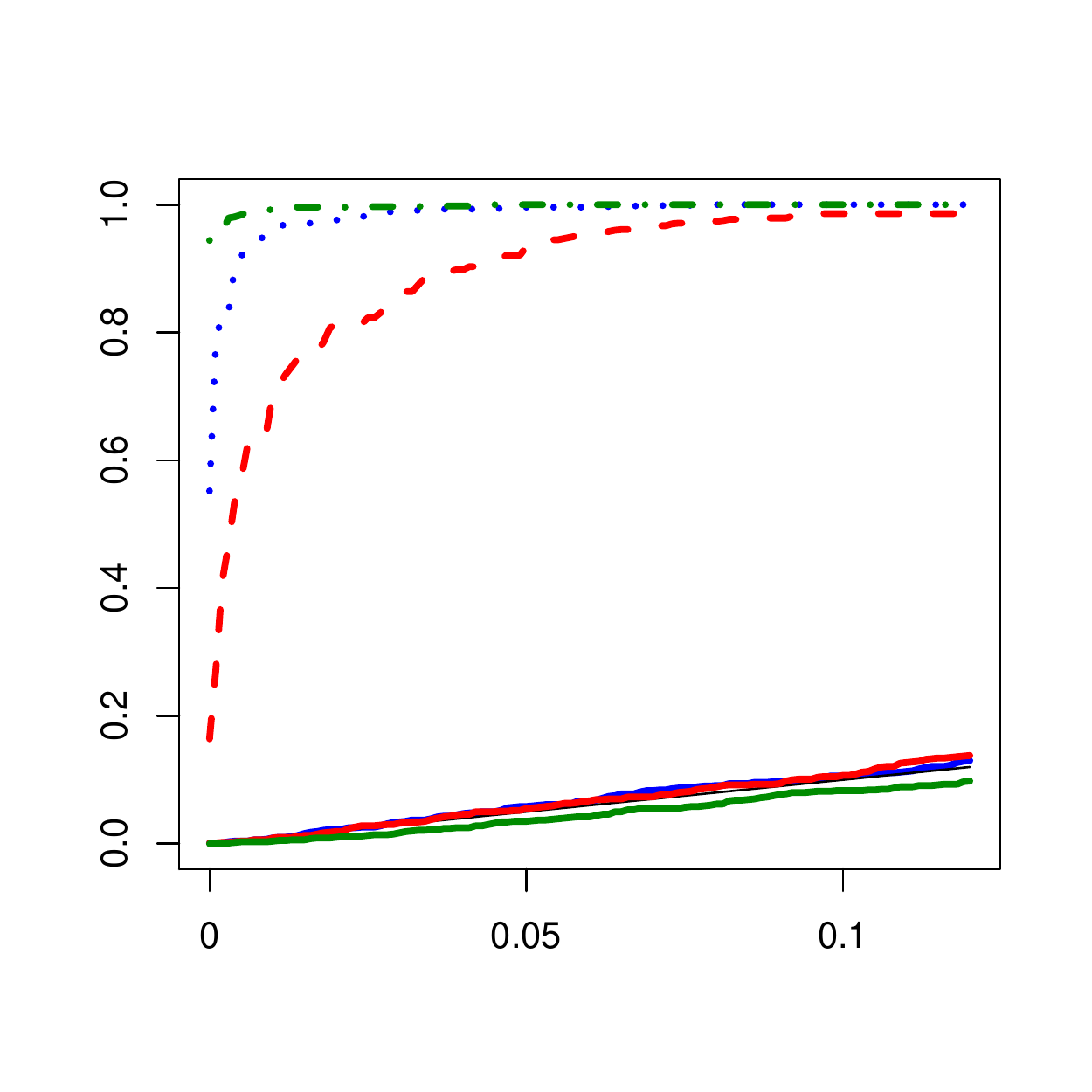}

\end{subfigure}
\hfill
\begin{subfigure}[t]{0.32\textwidth}
	\centering
	\includegraphics[width=\textwidth]{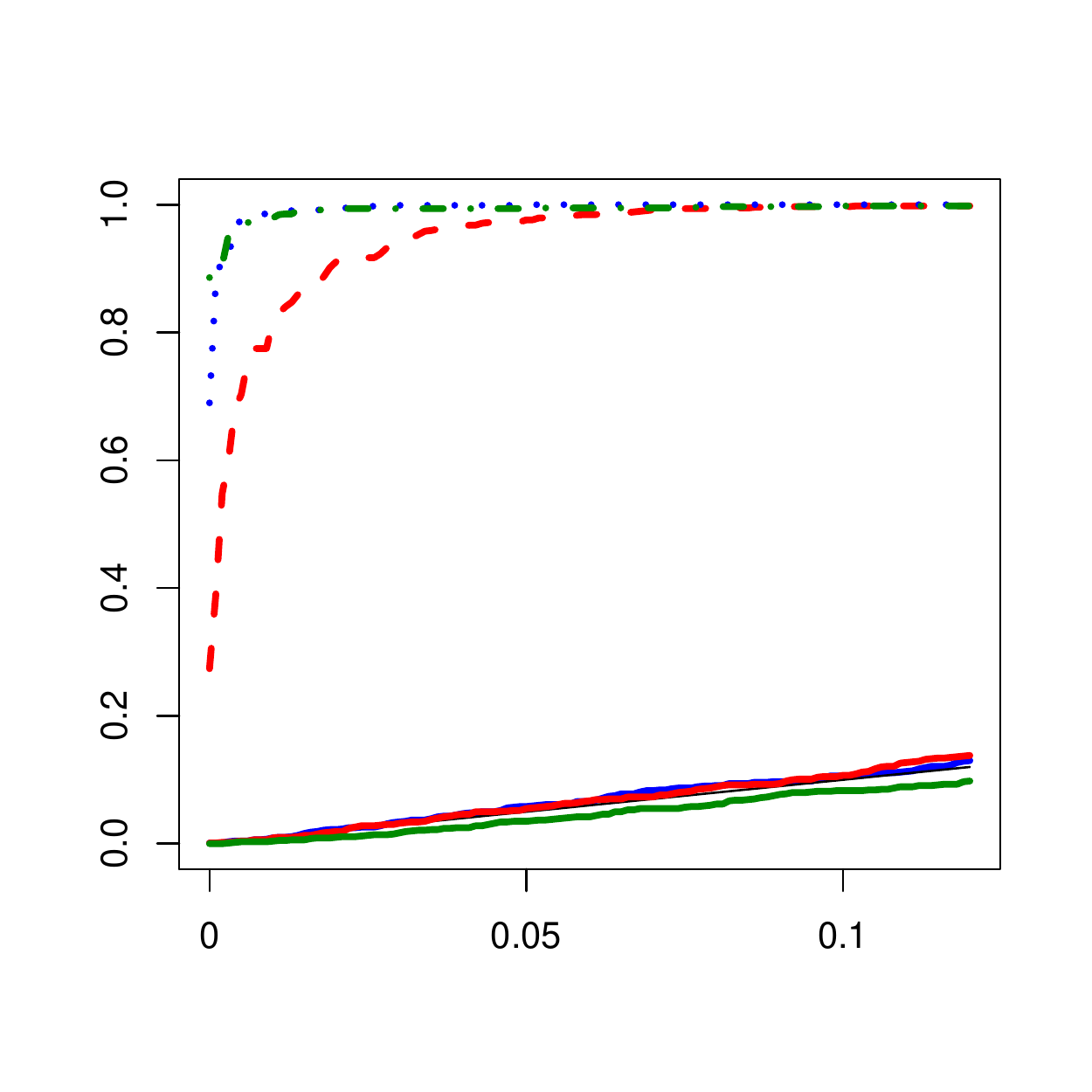}
\end{subfigure}
\caption{Empirical size (solid lines) and size corrected power (dashed lines) of the proposed procedures for independent data using the multivariate procedure after dimension reduction based on \eqref{tkd} (green), the fully functional procedure based on \eqref{tkf} (red) and the weighted functional procedure based on \eqref{tkw} (blue).}
\label{fig.sim}
\end{figure}
\section{Further results of the data analysis}\label{results.app}
In this section, we give some additional results of the analysis of the \textit{1000 Connectome Resting State Data} in order to complement the main findings that are reported in Section \ref{results} in the main paper. In the following we will make some remarks on the comparison of the $p$-values for the different procedures. First, it should be mentioned that the $p$-values obtained by the two functional procedures are consistent, meaning that in most of the cases they imply the same test decision and if they lead to different test decisions at a certain level $\alpha$ the $p$-values are nevertheless of the same magnitude, i.e. for one procedure the $p$-value is slightly below $\alpha$ and for the other procedure it slightly exceeds $\alpha$.  Regarding the comparison of the multivariate procedure with the functional procedures we observed that they lead to different test decisions in some cases. On the one hand, the multivariate procedure is not able to detect changes which are orthogonal to the projection subspace. On the other hand, false alarms can occur as the few components which are considered after reducing the dimension might contain some irregularities which lead to a rejection of the null hypothesis but are not significant when considering the full functional structure. This can be observed, for example, when analyzing sub34943. The multivariate procedure detects a deviation from covariance stationarity in the 8-dimensional time series of the scores but the null hypothesis is not rejected by the functional procedures. Figure \ref{red.example1} shows the 36 score products which are considered in the multivariate procedure. Calculating the componentwise $p$-values of the weighted functional procedure which includes 1083 score products for $\epsilon=0.0025$ it turns out that more than one third of the 36 components considered in the multivariate procedure belong to the 100 smallest $p$-values of the weighted functional procedure.
\begin{figure}[ht]
\centering
	\includegraphics[width=0.8\textwidth]{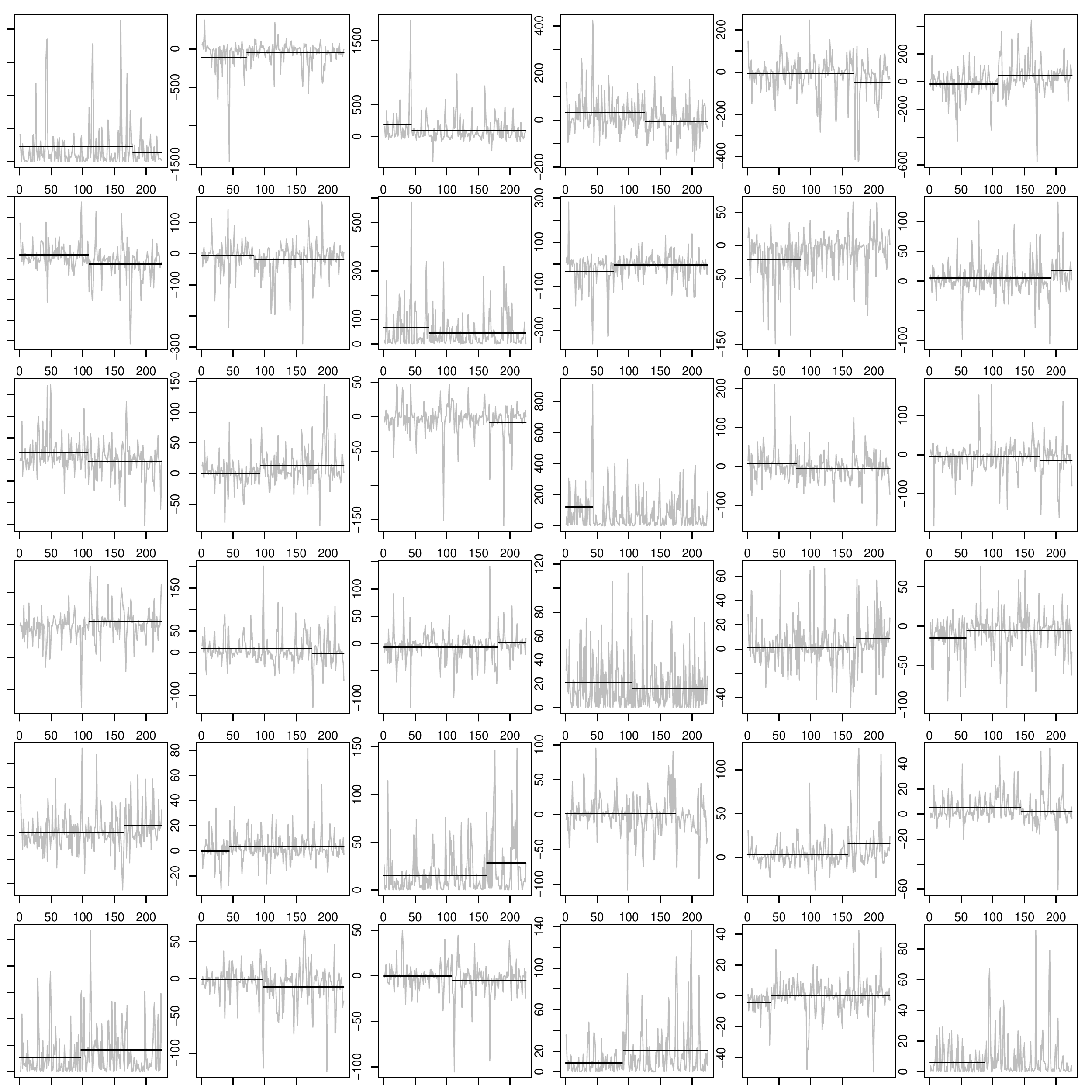}
	\caption{sub34943: All 36 score products obtained by dimension reduction.}
	\label{red.example1}
\end{figure}
In Section \ref{results} we have seen that for some data sets, as for example sub08816, the epidemic alternative is more appropriate. In addition to Figure \ref{ep.example2} in the main paper, Figure \ref{ep.example1} shows the 64 most significant score products for the AMOC alternative. A visual inspection of those two figures suggests that the epidemic model is indeed more suitable in this case. Furthermore, the small $p$-values are reasoned by the fact that the epidemic changes in the single components tend to be aligned.
\begin{figure}[ht]
\centering
	\includegraphics[width=0.8\textwidth]{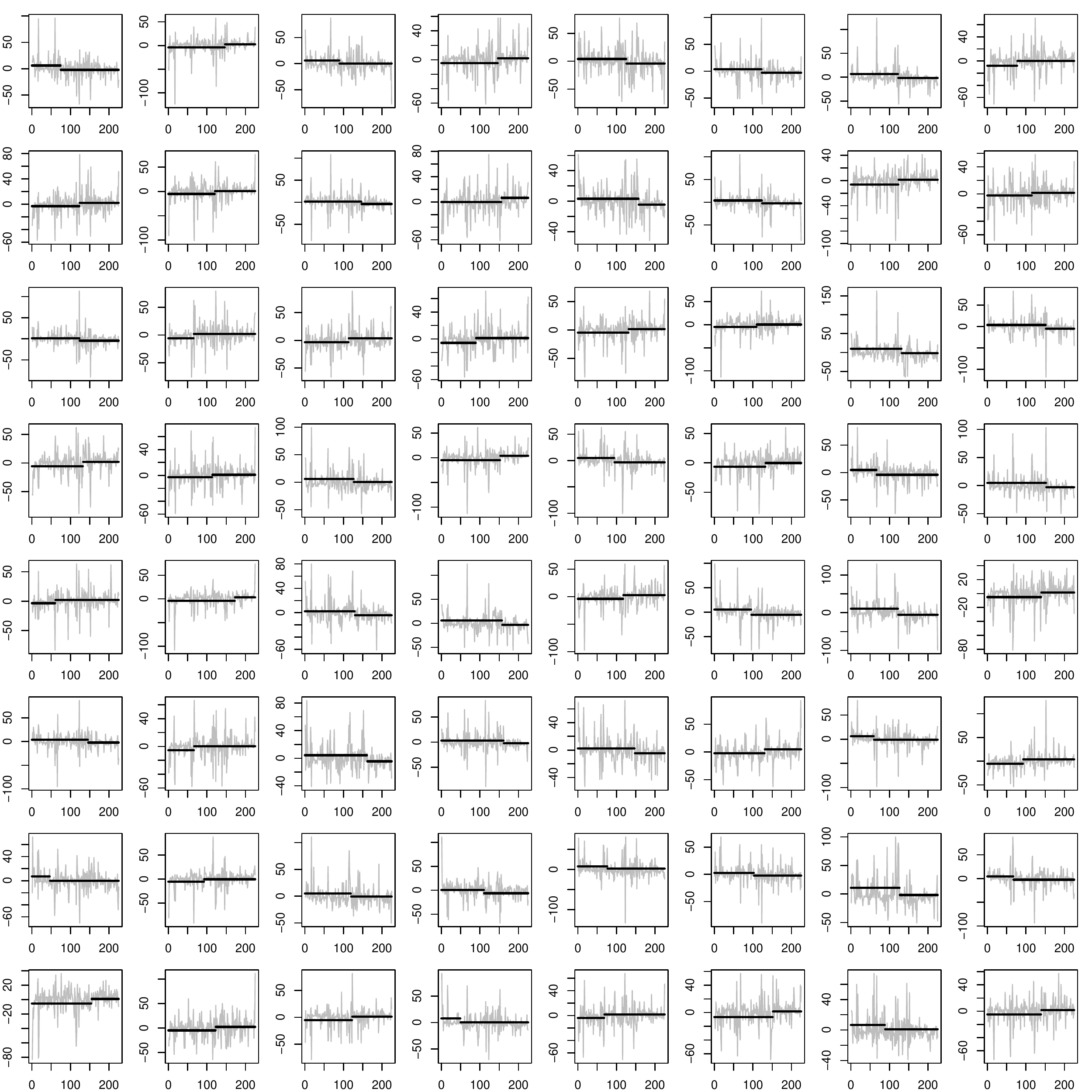}
	\caption{sub08816: 64 score products with the smallest $p$-values for the weighted functional statistic when testing for the AMOC alternative.}
	\label{ep.example1}
\end{figure}

\begin{figure}[ht]
\centering
	\includegraphics[width=0.8\textwidth]{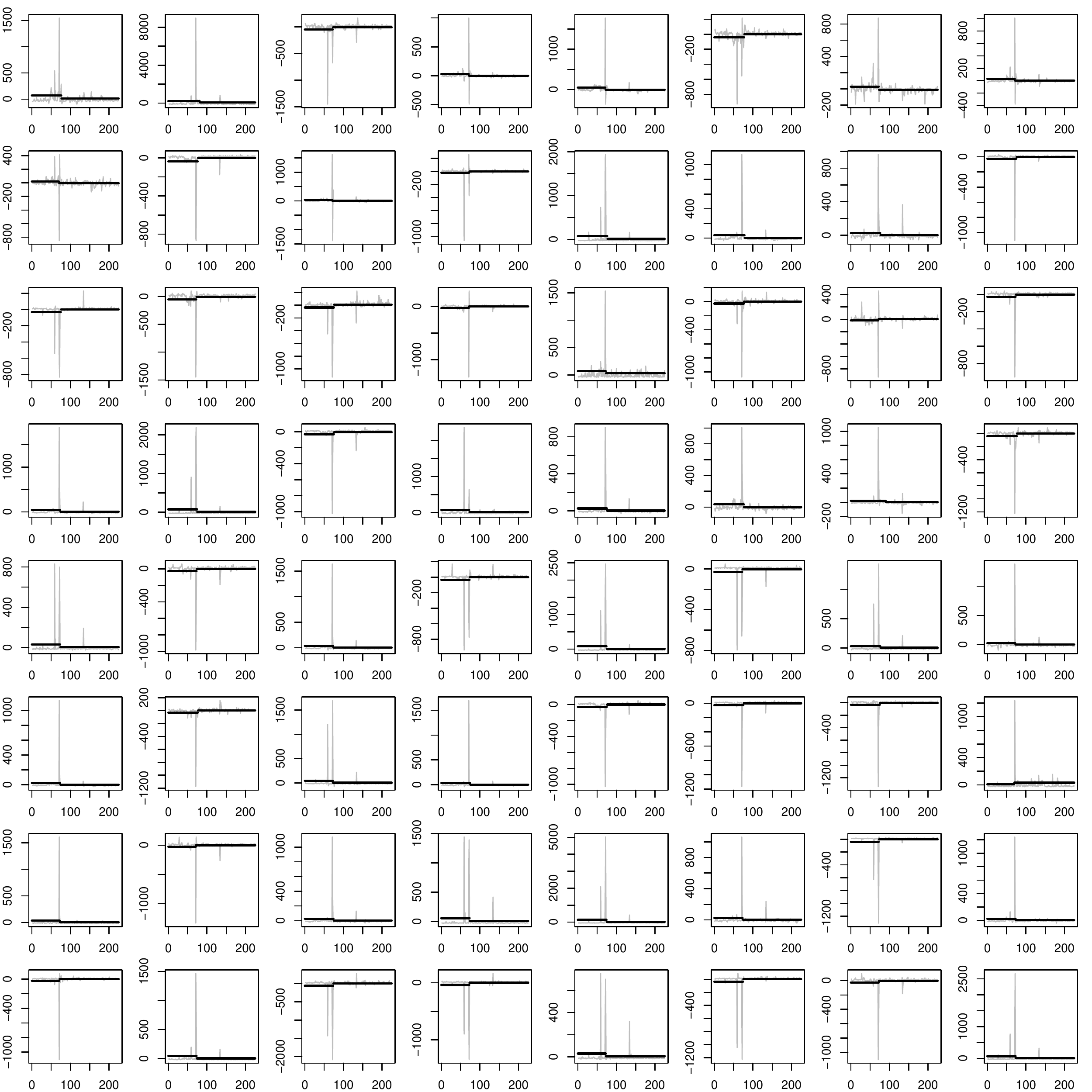}
	\caption{sub08992: 64 score products with the smallest $p$-values for the weighted functional statistic when testing for the AMOC alternative.}
	\label{out.example1}
\end{figure}
\begin{figure}[ht]
\centering
	\includegraphics[width=0.8\textwidth]{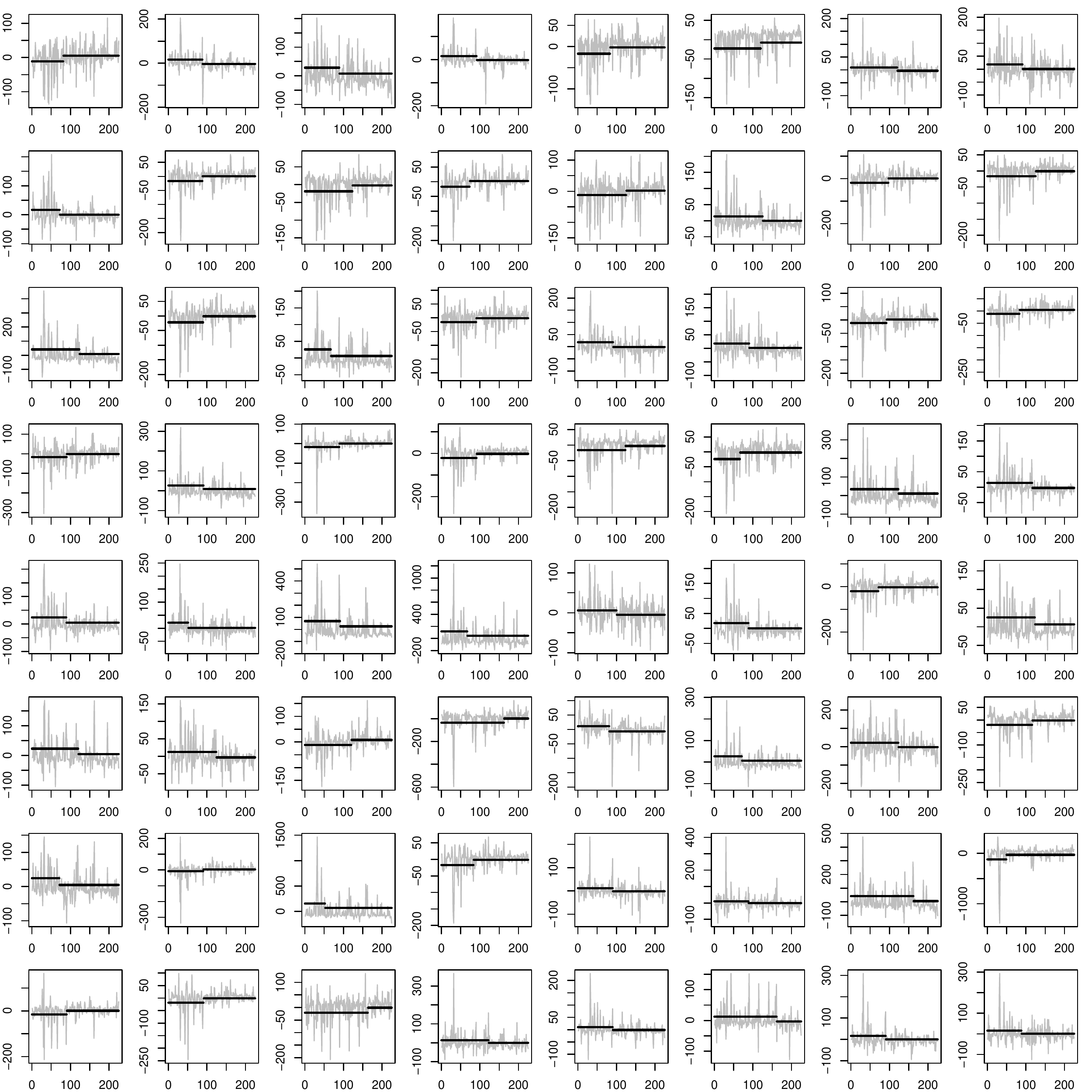}
	\caption{sub08455: 64 score products with the smallest $p$-values for the weighted functional statistic when testing for the AMOC alternative.}
	\label{ex08455}
\end{figure}

\end{document}